\definecolor{DarkRed}{rgb}{0.5,0.1,0.1}
\definecolor{DarkBlue}{rgb}{0.1,0.1,0.5}
\definecolor{ForestGreen}{rgb}{0.1333,0.5451,0.1333}
\definecolor{Red}{rgb}{0.9,0,0}
\crefname{property}{property}{Property}
\crefname{equation}{eq}{Eq}
\def\BState{\State\hskip-\ALG@thistlm}
\newtheorem{theorem}{Theorem}
\newtheorem{lemma}{Lemma}[section]
\newtheorem{proposition}[lemma]{Proposition}
\newtheorem{claim}[lemma]{Claim}
\newtheorem{definition}{Definition}
\newtheorem*{claim*}{Claim}
\newtheorem*{proposition*}{Proposition}
\newtheorem*{lemma*}{Lemma}
\newtheorem*{problem*}{Problem}
\newtheorem{mdresult}{Result}
\newenvironment{result}{\begin{mdframed}[backgroundcolor=lightgray!40,topline=false,rightline=false,leftline=false,bottomline=false,innertopmargin=2pt]\begin{mdresult}}{\end{mdresult}\end{mdframed}}
\theoremstyle{definition}
\newtheorem{remark}[lemma]{Remark}
\renewcommand{\qed}{\nobreak \ifvmode \relax \else
      \ifdim\lastskip<1.5em \hskip-\lastskip
      \hskip1.5em plus0em minus0.5em \fi \nobreak
      \vrule height0.75em width0.5em depth0.25em\fi}
\newcommand*\samethanks[1][\value{footnote}]{\footnotemark[#1]}
\newcommand{\logstar}[1]{\ensuremath{\log^{*}\!{#1}}}
\newcommand{\eps}{\ensuremath{\varepsilon}}
\newcommand{\Paren}[1]{\Big(#1\Big)}
\newcommand{\bracket}[1]{\left[#1\right]}
\newcommand{\paren}[1]{\ensuremath{\left(#1\right)}\xspace}
\newcommand{\card}[1]{\left\vert{#1}\right\vert}
\newcommand{\ceil}[1]{{\left\lceil{#1}\right\rceil}}
\newcommand{\expect}[1]{\Exp\bracket{#1}}
\newcommand{\set}[1]{\ensuremath{\left\{ #1 \right\}}}
\newcommand{\poly}{\mbox{\rm poly}}
\DeclareMathOperator*{\Exp}{\ensuremath{{\mathbb{E}}}}
\DeclareMathOperator*{\Prob}{\ensuremath{\textnormal{Pr}}}
\renewcommand{\Pr}{\Prob}
\newcommand{\EX}{\Exp}
\newenvironment{tbox}{\begin{tcolorbox}[
		enlarge top by=5pt,
		enlarge bottom by=5pt,
		 breakable,
		 boxsep=0pt,
                  left=4pt,
                  right=4pt,
                  top=10pt,
                  arc=0pt,
                  boxrule=1pt,toprule=1pt,
                  colback=white
                  ]%%
	}
{\end{tcolorbox}}
\newcommand{\event}{\ensuremath{\mathcal{E}}}
\newcommand{\PAC}{\ensuremath{\textsf{PAC}}}
\title{Exploration with Limited Memory: Streaming Algorithms for Coin Tossing, Noisy Comparisons, and Multi-Armed Bandits}
\author{Sepehr Assadi\footnote{Department of Computer Science, Rutgers University. Email: \texttt{\{sepehr.assadi,chen.wang.cs\}@rutgers.edu}} \and Chen Wang\samethanks}
\date{}
\begin{document}
\maketitle

\pagenumbering{roman}

% !TeX root = main.tex 
%!TEX root = main.tex

\begin{abstract}

Consider the following abstract coin tossing problem: Given a set of $n$ coins with unknown biases, find the most biased coin using a minimal number of coin tosses.  
This is a common abstraction of various exploration problems in theoretical computer science and machine learning  and has been studied extensively over the years. In particular, algorithms with optimal \emph{sample complexity} (number of coin tosses) have been known for this problem for quite some time.

\medskip

Motivated by applications to processing massive datasets, we study the \emph{space complexity} of solving this problem with optimal number of coin tosses in the \emph{streaming} model. 
In this model, the coins are arriving one by one and the algorithm is only allowed to store a limited number of  coins at any point -- any coin not present in the memory is lost and can no longer be tossed or compared to arriving coins. 
Prior algorithms for the coin tossing problem with optimal sample complexity are  based on iterative elimination of coins which inherently require storing all the coins,  leading to memory-inefficient streaming algorithms. 

\medskip

We remedy this state-of-affairs by presenting a series of improved streaming algorithms for this problem: we start with a simple algorithm which require storing only $O(\log{n})$ coins and then iteratively 
refine it further and further, leading to algorithms with $O(\log\log{(n)})$ memory, $O(\logstar{(n)})$ memory, and finally a one \emph{that only stores a single extra coin in memory} -- 
the same exact space needed to just store the best coin throughout the stream.

\medskip

Furthermore, we extend our algorithms to the problem of finding the $k$ most biased coins as well as other exploration problems such as finding top-$k$ elements using noisy comparisons or finding an $\eps$-best arm in stochastic multi-armed bandits, and 
obtain  efficient streaming algorithms for these problems. 

\end{abstract}

\clearpage

\setcounter{tocdepth}{3}
\tableofcontents

\clearpage

\pagenumbering{arabic}
\setcounter{page}{1}

% !TeX root = main.tex 
%!TEX root = main.tex

\newcommand{\coin}{\ensuremath{\textnormal{\textsf{coin}}}\xspace}

\newcommand{\arm}{\ensuremath{\textnormal{\textsf{arm}}}\xspace}

\newcommand{\armk}[1]{\ensuremath{\arm_{[#1]}}\xspace}
\newcommand{\muk}[1]{\ensuremath{\mu_{[#1]}}\xspace}
\section{Introduction}\label{sec:intro}

Suppose you are given $n$ coins with unknown biases; how many \emph{samples} (coin tosses) are needed to find the most biased coin with a large (constant) probability of success? This basic problem
captures the essence of various \emph{(pure) exploration} problems in theoretical computer science and machine learning in which the general goal is to find a best option among a set of alternatives using a minimal number of stochastic/noisy trials. 
Examples include rank aggregation with noisy comparisons (e.g.~\cite{FeigeRPU94,BusaFeketeSCWH13,DavidsonKMR14,BravermanMW16,ShahW17,ChenGMS17,ShahW17,ChenLM18,CohenAddadMM20}),  best 
arm identification in multi-armed bandits (e.g.~\cite{EvenDarMM02,MannorT03,AudibertBM10,KalyanakrishnanS10,KarninKS13,JamiesonMNB14,KaufmannCG16,CarpentierL16,ChenLQ17}), or computing with noisy decision trees (e.g.~\cite{ReischukS91,FeigeRPU94,Newman04,GoyalS10}). These problems in turn have a wide range 
of applications in medical trials~\cite{Robbins1952}, networking~\cite{Xue2016,TalebiZCPJ18},  web search~\cite{DworkKNS01}, crowdsourcing~\cite{ChenBCH13,ZhouCL14}, and display advertising~\cite{AgarwalCEMPRRZ08}, among  others. 

This coin tossing problem admits a  natural solution: sample/toss each coin ``enough'' number of times so that the empirical bias of each coin ``closely'' matches its true bias; then find the coin with the most empirical bias. Assuming there 
is some \emph{constant known gap}  between the bias of the most and the second most biased coins, a simple argument suggests that tossing each coin $O(\log{n})$ times is enough for this purpose, leading to an algorithm with $O(n\log{n})$ coin tosses overall. 

It turns out that one can beat this natural approach and solve the problem with $O(n)$ samples~\cite{EvenDarMM02} (see also~\cite{FeigeRPU94}) which is the (asymptotically) optimal \emph{sample complexity} of this 
problem~\cite{MannorT03}. Sample-optimal algorithms for this problem has since been studied extensively in various directions: finding multiple coins (e.g.~\cite{KalyanakrishnanS10,KalyanakrishnanTAS12}), with combinatorial
constraints (e.g.~\cite{ChenLKLC14,ChenGL16}), instance-optimal algorithms (e.g.~\cite{JamiesonMNB14,CohenAddadMM20}), fixed-budget algorithms (e.g.~\cite{BubeckMS11,BubeckWV13,CarpentierL16}), limited adaptivity algorithms 
(e.g.~\cite{GoyalS10,AgarwalAAK17,CohenAddadMM20}), or collaborative learning algorithms (e.g.~\cite{HillelKKLS13,TaoZZ19,BlumHPQ17}), to mention a few.

Alas, the sample-efficiency of these algorithms comes at a certain cost: unlike the 
basic approach that processes the coins ``on the fly'' by storing the current candidate coin, these more complicated algorithms need to store all coins and revisit them frequently before making a decision. 
As such, these solutions can be prohibitively expensive in their memory requirement in applications with a massive number of coins/options (including several of above examples). In such
scenarios, the \emph{space complexity}, in addition to the sample complexity, plays a major role in  the efficiency of algorithms. 

The \emph{streaming model} of computation, pioneered by~\cite{AlonMS96, HenzingerRR98, FeigenbaumKSV99}, precisely captures these scenarios. 
In this model, the coins are arriving one by one and the algorithm is only allowed to store a limited number of coins at any point -- any coin not present in the memory is lost and can no longer be tossed or compared to arriving coins. 
We refer to the maximum number of coins stored by the algorithm at any point during the stream as the space complexity or memory cost of the algorithm (see \Cref{sec:prelim} for details). 
We can now ask the following fundamental question: 
\vspace{-0.1cm}
\begin{quote}
\emph{What is the memory cost of achieving (asymptotically) optimal sample complexity for the coin tossing problem in the streaming model?}
\end{quote}
\vspace{-0.2cm}

Our main (conceptual) finding in this paper is that, surprisingly, there is almost \emph{no tradeoff} between sample-efficiency and space-efficiency for coin tossing: one can achieve the sharpest possible bound on the space complexity, namely a memory of \emph{a single extra coin}, without having to settle for an asymptotically sub-optimal sample complexity!

We further build on this result to design streaming algorithms for finding multiple coins with largest biases
and for other related problems 
such as partitioning totally ordered elements using noisy comparisons or finding approximate best arms in stochastic multi-armed bandits. The extension of our coin tossing results to noisy comparisons  is 
particularly interesting as there is no black-box reduction between the two models and indeed these models are often considered conceptually related but disjoint technique-wise (see, e.g.~\cite{BravermanMW16,ChenLM18,CohenAddadMM20}). %even though 

\subsection{Our Contributions} 

\paragraph{Most Biased Coin.} Our first main result is a complete resolution of the aforementioned question: 

\begin{result}\label{res:main1}
	There exists a streaming algorithm that achieves the (asymptotically) optimal sample complexity for the coin tossing problem by storing only \textbf{a single extra coin} in its memory. 
\end{result}

We formalize \Cref{res:main1} in \Cref{thm:main}. We emphasize that in~\Cref{res:main1} and throughout the paper, we assume the algorithm knows the gap between the bias of the most  and the second most biased coins. In our recent follow-up work \cite{AW22Neurips}, it is shown that the knowledge of the gap is \emph{necessary}, as otherwise there are instances forcing an unbounded sample complexity.

An interesting byproduct of using just a single-coin memory in~\Cref{res:main1} is that the 
algorithm necessarily maintains the most biased coin as its only candidate once this coin is observed in the stream, namely, it is also an \emph{online} algorithm (this corresponds to the notion of \emph{streaming online algorithms} proposed in~\cite{Sublinear73}.) 

En route to proving \Cref{res:main1}, we design a series of streaming algorithms with optimal sample complexity for coin tossing (see~\Cref{sec:prelim-algs}). We start with a simple algorithm that uses $O(\log{n})$ memory 
by giving a streaming friendly implementation of the {median-elimination} algorithm of~\cite{EvenDarMM02} using the ``merge-and-reduce'' technique from the streaming literature (see, e.g.~\cite{GuhaMMO00,AgarwalHV04}). 
We then show that one can further improve the memory down to $O(\log\log{n})$ coins by designing a variant of merge-and-reduce tailored directly to the coin tossing problem. This adaptation in turn allows us 
to use the more recent aggressive-elimination algorithm of~\cite{AgarwalAAK17} in place of the original median-elimination and reduce the space down to $O(\logstar{(n)})$ coins\footnote{None of these algorithms follow as a black-box 
from prior work and several new ingredients are still needed to make these parts work in the streaming model which can be of their own independent interest. Considering this, and to provide further insight into our main algorithm, we present these intermediate algorithms also in \Cref{sec:prelim-algs}.}. 
The final leap from $O(\logstar{(n)})$ memory algorithm to our  single-coin memory algorithm however is the key step as explained below. 

The memory bound of our intermediate streaming algorithms is heavily tailored to the number of elimination rounds of base algorithms in~\cite{EvenDarMM02,AgarwalAAK17} and it is known that $\Theta(\logstar{(n)})$ bound on number
of elimination rounds is \emph{tight}~\cite{GoyalS10,AgarwalAAK17}. As such, to obtain our final algorithm, we almost entirely forego the elimination approach and devise a new \emph{budgeting strategy} for the problem: we maintain a candidate coin, called the ``king'', throughout the stream and assign it a certain budget which is increased per each new arriving coin and decreased whenever we toss any coin. 
Each arriving coin then ``challenges'' the king by tossing both the king and arriving coin, according to a carefully chosen rule, until either king wins against the new coin (by having a higher empirical bias at any of these challenges) or the budget of the
 king is depleted 
in which case we replace the king with the new coin and restart the process with this new king on the remainder of the stream. 

This budgeting allows us to use a basic amortized analysis and argue that the total number of coin tosses by the algorithm is still $O(n)$ (albeit with a much more chaotic pattern of samples per coin compared to elimination-based algorithms). The key challenge is however to ensure that once the most biased coin becomes the king, it will not exhaust its budget throughout the remaining 
length of the stream which can be $\Theta(n)$-long. This requires proving that the random variable corresponding to the remaining budget of the king does not have any significant deviation from its expectation \emph{throughout} the entire
length of the stream and not only at \emph{any fixed} point. This is similar-in-spirit to the fact that a length $n$ symmetric $(\pm 1)$-random walk on a line does not deviate from the $\Theta(\sqrt{n})$ bound implied by the variance not only at the end, 
but throughout the entire walk. The proofs are however different  since our version of ``random walk'' includes unbounded step sizes. As such, we first prove that these step sizes form a sub-exponential distribution, and then
use Bernstein's inequality to prove the desired concentration bound. The more `flexible' version of random walk can be of independent interests, and we provide a general characterization for sub-exponential step size random walk in \Cref{sec:rand-walk}.

\paragraph{Top-$k$ Most Biased Coins.} A standard generalization of the coin problem we discussed so far is to find the top-$k$ most biased coins assuming a gap between the bias of the $k$-th and $(k+1)$-th most biased
coin.  This problem has also been studied extensively in the literature and it is known that the (asymptotically) optimal sample complexity for this problem is $\Theta(n\log{k})$~\cite{KalyanakrishnanS10,KalyanakrishnanTAS12}. 
We show that this optimal sample complexity
 can be achieved by memory-efficient streaming algorithms. 

\begin{result}\label{res:main2}
	There exists a streaming algorithm that achieves the (asymptotically) optimal sample complexity for finding the top-$k$ most biased coins by storing only $O(k)$ coins in the memory.  
\end{result}
We formalize \Cref{res:main2} in \Cref{thm:top-k}. It is clear that any streaming algorithm for this problem requires memory of $k$ coins to simply store the answer. As such, \Cref{res:main2} implies that one can simultaneously achieve
the \emph{asymptotic} optimal memory and sample complexity for this problem. 

 The starting point of this algorithm is our budgeting approach in~\Cref{res:main1}. However, 
there are two main challenges that need to be addressed:  (1) we now need to maintain $k$  ``kings'' but can no longer compare each arriving coin with (or assign
a unit of budget to) every king (otherwise, there will be $\Omega(nk)$ coin tosses); more importantly (2) we need to collect all the top-$k$ coins and still cannot guarantee any suitable (probabilistic) outcome while comparing any of 
these two coins to each other (as there may not be any gap between their biases in general). We elaborate on these challenges and how we address them in the high level overview of our algorithm in \Cref{sec:top-k} and only mention here that 
addressing these challenges turn out to be a highly non-trivial task and in fact our algorithm in~\Cref{res:main2} is the main \emph{technical} contribution of our work. 

\subsection*{Application to Noisy Comparison Model} 

An interesting application of our results is to the following noisy comparison problem: we have a collection of $n$ elements with an \emph{unknown total order} and we can compare any two element $i$ and $j$ according to a \emph{noisy} version of
this ordering: when comparing $i,j$, with probability $2/3$ we receive the true answer whether $i < j$ or $j < i$, and with the remaining probability, the answer is arbitrarily. The goal is to partition the input into the set of $k$ largest element
and $(n-k)$ remaining smaller elements. This problem, often referred to as the \emph{partition} problem, has received a burst of interest in recent 
years (see, e.g.~\cite{BravermanMW16,ChenGMS17,ChenLM18,CohenAddadMM20} and references therein). The streaming version of this problem, when the elements are arriving one by one in the stream and 
only the elements stored in the memory can be compared, is equally well-motivated (see~\cite{BravermanMW16} for related applications). 

It is easy to spot a fundamental difference between the partition problem and coin tossing: the first one uses \emph{ordinal} information between the elements while the latter concerns \emph{cardinal} information. 
Due to this difference, the algorithms in one model do not carry over to another and the research on these two problems has been mostly disjoint (see, e.g.~\cite{BravermanMW16,CohenAddadMM20} -- see also~\cite{AgarwalAAK17} that gives a black-box 
reduction from coin tossing to a \emph{different} noisy model of comparison and~\cite{CohenAddadMM20} that shows this, or any other, reduction cannot work in the model studied in our paper). 

Interestingly, our algorithms in~\Cref{res:main1} and~\Cref{res:main2} operate by only comparing empirical biases of coins directly with each other (through the notion of ``challenging'' described above), which is an ordinal information. Rather more formally, 
our algorithms work even if instead of sampling the coins and observing their empirical biases, they can sample two coins and observe which one has the higher empirical bias. Owing to this property, we can indeed extend our algorithms in these
results to the partition problem in the noisy model and obtain the following result.  
 
\begin{mdresult}\label{res:noisy}
	There exists a streaming algorithm for the partition problem that uses $O(n\log{k})$ noisy comparisons and a memory of $O(k)$ elements (the memory is a single extra element when $k=1$). 
\end{mdresult}
\Cref{res:noisy} is formalized in \Cref{thm:comp}, presented in~\Cref{sec:comp}. Considering that the (asymptotically) optimal number of samples for the partition problem is $O(n\log{k})$~\cite{CohenAddadMM20}, \Cref{res:noisy} achieves the asymptotically optimal sample complexity and space complexity simultaneously. 

\subsection*{Application to Stochastic Multi-Armed Bandits}

The $\eps$-best arm identification (or PAC-learning) in the stochastic multi-armed bandit (MAB) games is defined as follows: we have a collection of $n$ arms with unknown reward distributions in $[0,1]$; the algorithm can pull (sample) each arm and receive a reward from the 
corresponding distribution. The goal is to, given a parameter $\eps \in (0,1)$, find any arm with expected reward at most $\eps$ less that the expected reward of the best arm, referred to as an \emph{$\eps$-best arm}. 
This problem is a (pure) exploration variant of the more general \emph{regret minimization} problem in MABs introduced more than half a century ago~\cite{Robbins1952} and has been studied extensively on its own (see, e.g.~\cite{EvenDarMM02,MannorT03,AudibertBM10,KalyanakrishnanS10,KalyanakrishnanTAS12,KarninKS13,JamiesonMNB14,KaufmannCG16,CarpentierL16,ChenLQ17} and references therein). Again, the streaming model 
for this problem, in which the arms are arriving one by one and can only be pulled if they are stored explicitly in the memory, is highly motivated; see, e.g., the recent work of~\cite{LiauSPY18,ChaudhuriK19} on 
a related model to streaming and the classical work of~\cite{Cover68} (we will elaborate on the connection between our work and the first two below). 

It is easy to see that the coin tossing problem is a special case of this problem when the reward distributions are Bernoulli and more importantly, there is a \emph{gap} of $\eps$ between the expected reward of the best arm
and any other arm (making the $\eps$-best arm unique). In general, these differences do not matter much and most algorithms for the coin tossing problem appear to extend directly to the $\eps$-best arm problem as well. 
Unfortunately however, this is \emph{not} the case for  our algorithm in~\Cref{res:main1} (the brief intuition is that our algorithm only considers
ordinal information between the empirical biases and a set of arms with gradually decreasing expected reward can ``fool'' the algorithm -- we discuss this in detail in~\Cref{sec:eps-best}). 
Nevertheless, we observe that we can extend our $O(\logstar{(n)})$ memory algorithm for coin tossing to this problem and prove the following result. 

\begin{mdresult}\label{res:best-arm}
	There exists a streaming algorithm for $\eps$-best arm identification in stochastic multi-armed bandits that uses $O(n/\eps^2)$ arm pulls and a memory of $O(\logstar{(n)})$ arms. 
%%	Moreover, for the regime 
%%	when $\eps > n^{-0.99}$, 
\end{mdresult}

\Cref{res:best-arm} is formalized in~\Cref{thm:eps-best-logstar}, presented in~\Cref{sec:eps-best}. 
The sample complexity of this algorithm is asymptotically optimal~\cite{MannorT03} but its memory is within a \emph{non-constant} (albeit extremely small\footnote{Recall that for every realistic input size $n$, $\logstar{(n)} \leq 5$.}) factor 
of the (best known) bounds.

\begin{remark}\label{rem:bug}
	The earlier version of this paper had an error in the proof of~\Cref{res:best-arm} that was discovered in the recent work of~\cite{MaitiPK21} (who also gave an excellent example showing that the error is not limited to the analysis and 
	the original algorithm is indeed incorrect); we note that this error is entirely independent of the main results of this paper for the coin tossing problem and is due to a subtle difference between coin tossing and 
	stochastic multi-armed bandit. 
	The authors of~\cite{MaitiPK21} gave a different algorithm and recovered~\Cref{res:best-arm}. In this version, we show  how a simple modification of our previous algorithm 
	can  circumvent this error and obtain a correct proof of~\Cref{res:best-arm}. % More importantly, this fix also sets the stage for our new algorithm in this section in the following result. 
%%	 that partially answers a key remaining open problem from the conference version 
%%	of this paper. 
\end{remark}

In the new version, we further show that it is indeed possible to adapt our algorithm in~\Cref{res:main1} to the $\eps$-best arm exploration, albeit with some more involved technical considerations. The new result is shown as follows.

\begin{mdresult}\label{res:best-arm-eps}
	There exists a streaming algorithm for $\eps$-best arm identification in stochastic multi-armed bandits that uses $O(n/\eps^2)$ arm pulls and a memory of $2$ arms as long as $\eps > n^{-0.999}$. 
%%	Moreover, for the regime 
%%	when $\eps > n^{-0.99}$, 
\end{mdresult}

\Cref{res:best-arm-eps} achieves asymptotically optimal sample complexity and memory complexity (it is slightly worse than the sharp optimal memory complexity of 1 arm) as long as the approximation factor $\eps$ is not extremely small. Together with a recent algorithm of \cite{JinH0X21}, it settles a key open problem from the conference version of this paper as well as~\cite{MaitiPK21}.  

\subsection{Recent Related Work on Streaming Coin Tossing} 

We conclude this section by discussing the connection between our work and the recent progress of the streaming coin tossing problem. Before the conference version of our work, there has been some study on streaming coin tossing (and multi-armed bandits) on \emph{regret minimization} (\!\cite{LiauSPY18,ChaudhuriK19}). % Both papers design algorithms with a memory of only $O(1)$ arms for \emph{regret minimization} in multi-armed bandits. Under such a setup, the algorithms should solve the problems of  \emph{exploration} and \emph{exploitation} simultaneously and the exploration in their algorithms will pay an $O(\log(T))$ factor where $T$ is the time horizon. 
We note that the regret minmization bounds are not directly comparable to ours, and the algorithms in \cite{LiauSPY18,ChaudhuriK19} are in the multi-pass setting (the algorithm of~\cite{ChaudhuriK19} additionally requires random-order arrival). As such, our work is the first to study the \emph{pure exploration} streaming coin tossing problem. After the publication of our conference paper, there has been a flurry of recent papers studying the pure exploration problem \cite{JinH0X21,MaitiPK21,AW22Neurips,AgarwalKP22}, and the single-pass upper and lower bounds for pure exploration are mostly understood. In particular, by the recent follow-up work of \cite{AW22Neurips}, it is known that in a single pass, the assumption of $\Delta$ is essential for any \emph{bounded} sample complexity, and the worst-case optimal bound (as opposed to the instance-optimal bounds as in \cite{KarninKS13,JamiesonMNB14}) is necessary for many families of instances. On the other hand, if we relax the setting to multi-pass algorithms, \cite{JinH0X21} shows that we can obtain an algorithm with the instance-optimal sample complexity and a single-arm memory in $O(\log(\frac{1}{\Delta}))$ passes, even without the knowledge of $\Delta$. In light of this, the tight upper and lower bounds for multi-pass exploration seem to be an interesting direction to pursue.

%and under the pure exploration scenario our algorithm will have asymptotically better sample efficiency. More importantly, since both of the papers adopted the strategy of confidence-bound estimation, in the context of streaming algorithms, these algorithms require making \emph{multiple passes} over the input which may not be desirable in many settings (the algorithm of~\cite{ChaudhuriK19} additionally requires randomly permuting the arms which is infeasible unless one makes the random-order arrival assumption). It will be interesting to see if using our \Cref{res:best-arm} in these algorithms can help with the performance. 

%%
% !TeX root = main.tex 
%!TEX root = main.tex

\newcommand{\Mk}[2]{\ensuremath{\mathbb{M}_{#2}(#1)}}

\newcommand{\king}{\ensuremath{\textnormal{\textsf{king}}}\xspace}
\newcommand{\King}{\ensuremath{\king}\xspace}

\newcommand{\mainalg}{\ensuremath{\textnormal{\textsc{Game-Of-Coins}}}\xspace}

\newcommand{\phat}{\widehat{p}}
\newcommand{\rhat}{\widehat{r}}
\newcommand{\muhat}{\widehat{\mu}}

\newcommand{\Budget}{\ensuremath{\Phi}}

\newcommand{\coinstar}{\ensuremath{\textnormal{\textsf{coin}}^{*}}\xspace}

\newcommand{\budget}[1]{\ensuremath{\Budget(#1)}}

\section{Problem Definition: Streaming Coin Tossing}\label{sec:prelim}

%%\subsection{Coin Tossing in the Streaming Model}\label{sec:def}

In the coin tossing problem that we study, there is a collection of $n$ coins $\set{\coin_i}_{i=1}^{n}$ with unknown biases $\set{p_i}_{i=1}^{n}$ and our goal is to identify the most biased coin, denoted by $\coinstar$, 
via tosses of the coins. We refer to the number of coin tosses by the algorithm as its \emph{sample complexity}. An important parameter that governs the sample complexity of the algorithms is the 
\emph{gap parameter} $\Delta$ which denotes the difference between the bias of the most  and the second most biased coins.  
We assume $\Delta > 0$ and is given to the algorithm -- both assumptions are common in the literature~\cite{EvenDarMM02,KalyanakrishnanS10,ChenS15,ShahW17}. 
 Indeed, the first assumption can be easily lifted by simply re-defining this value to be the gap between bias of the most biased coin and the next \emph{distinct} bias. As for the second assumption, in both applications of our results, this parameter corresponds to the standard \emph{input} parameters of the problem, namely 
 the noise factor $\gamma$ and the approximation factor $\eps$.

We study this problem in the streaming model: The coins are arriving one by one in a stream and the algorithm needs to store each coin explicitly if it wants to 
toss it at some later point in the stream as well. In other words, the algorithm only has access to a coin if this is the current coin arriving in the stream, or the coin is currently stored in the memory of the algorithm. 
Moreover, once a coin is no longer in the memory  (because it was either not stored in the first place or was later replaced by another coin), the algorithm has no further access to this coin (i.e., can neither toss it nor 
bring it back to the memory). We refer to the maximum number of coins stored by the algorithm at any point during the stream as the \emph{space complexity} of the algorithm. 

\begin{remark}
We stated the space complexity of the streaming algorithms in terms of number of stored arms and ignored the other information stored by them. 
This is the standard definition for streaming problems that assume \emph{oracle} access to input (the coin tossing oracle for our purpose) such as streaming algorithms
for submodular optimization (see, e.g.~\cite{BadanidiyuruMKK14,MitrovicBNTC17,KazemiMZLK19}). \emph{All} our algorithms only require to store additional $\Theta(\log{n} + \log{(1/\eps)})$ bits ($O(1)$ words of space in the word-RAM model) per
each coin \emph{in their memory}. We also remark that our $O(\logstar{(n)})$ space algorithm appears to be even implementable with only $\Theta(\log\log{n} + \log{(1/\eps)})$ 
 bit overhead per each memory coin by using the classical noisy counter of~\cite{Morris78a}; however, we do not pursue this direction in this paper. 
\end{remark}

%%
% !TeX root = main.tex 
%!TEX root = main.tex

\section{Preliminaries}\label{sec:concentration}

We say that a random variable $X$ is \emph{sub-exponential} with parameter $\kappa > 0$, if 
\begin{align}
\Pr\paren{\card{X} \geq t} \leq 2\exp\paren{-t/\kappa} \quad \textnormal{for all $t \geq 0$.} \label{eq:sub-exponential}
\end{align}
The following is a variant of Bernstein's inequality (see~\cite[Proposition 2.7.1 and Theorem 2.8.1]{Vershynin18}). 

\begin{proposition}[Bernstein's inequality; cf.~\cite{Vershynin18}]\label{prop:bernstein}
	Let $X_1,\ldots,X_m$ be $m$ independent, mean zero, sub-exponential random variables with parameter $\kappa > 0$. Then, for every $t > 0$, 
	\begin{align*}
		\Pr\paren{\card{\sum_{i=1}^{m} X_i} \geq t} \leq 2 \cdot \exp\paren{-c \cdot \min\paren{\frac{t^2}{\kappa^2 \cdot m}, \frac{t}{\kappa}}},
	\end{align*}
	for some absolute constant $c > 0$. 
\end{proposition}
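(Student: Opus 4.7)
The plan is to use the standard Cram\'er--Chernoff (exponential moment) method, with the key intermediate step being a bound on the moment generating function of each $X_i$ in a neighborhood of the origin. First I would establish that the tail bound \eqref{eq:sub-exponential} implies the MGF estimate: there exist absolute constants $C_1, C_2 > 0$ such that for every $\card{\lambda} \leq C_1/\kappa$,
\[
\Exp\bigl[e^{\lambda X_i}\bigr] \leq \exp\bigl(C_2\, \kappa^2 \lambda^2\bigr).
\]
This follows by expanding $\Exp[e^{\lambda X_i}] = 1 + \sum_{k \geq 2} \lambda^k \Exp[X_i^k]/k!$ (the $k=1$ term vanishes as $X_i$ is mean zero), using the standard identity $\Exp[\card{X_i}^k] = \int_0^\infty k t^{k-1} \Pr(\card{X_i} \geq t)\, dt$ together with \eqref{eq:sub-exponential} to get $\Exp[\card{X_i}^k] \leq 2 \kappa^k k!$, bounding the resulting geometric series (which converges for $\card{\lambda \kappa}$ bounded away from $1$), and finally applying $1 + u \leq e^u$.

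Next, by independence of the $X_i$'s, for any $\lambda \in [0, C_1/\kappa]$,
\[
\Exp\bigl[e^{\lambda \sum_{i=1}^m X_i}\bigr] = \prod_{i=1}^{m} \Exp\bigl[e^{\lambda X_i}\bigr] \leq \exp\bigl(C_2\, \kappa^2 \lambda^2 m\bigr),
\]
so Markov's inequality applied to $e^{\lambda S}$ with $S := \sum_{i=1}^m X_i$ gives
\[
\Pr(S \geq t) \leq \exp\bigl(-\lambda t + C_2\, \kappa^2 \lambda^2 m\bigr),
\]
valid for every $\lambda$ in this range. I would then optimize: the unconstrained minimizer is $\lambda^{\star} = t/(2 C_2 \kappa^2 m)$. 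In the ``Gaussian regime'' $\lambda^{\star} \leq C_1/\kappa$ (equivalently $t \leq 2 C_1 C_2 \kappa m$), plugging in $\lambda^{\star}$ yields $\Pr(S \geq t) \leq \exp\bigl(-t^2/(4 C_2 \kappa^2 m)\bigr)$. In the ``exponential regime'' $t > 2 C_1 C_2 \kappa m$, I would instead use the boundary value $\lambda = C_1/\kappa$, which gives $\Pr(S \geq t) \leq \exp\bigl(-C_1 t/(2\kappa)\bigr)$. Taking the worse of the two exponents reproduces $\exp\bigl(-c \cdot \min(t^2/(\kappa^2 m), t/\kappa)\bigr)$ for a suitable absolute $c > 0$. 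The same argument applied to $-X_1,\ldots,-X_m$ (which satisfy the same hypotheses) bounds $\Pr(-S \geq t)$, and a union bound gives the two-sided statement with an extra factor of $2$.

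The main obstacle is the MGF estimate in the first step: the sub-exponential tail barely controls moments of all orders simultaneously, so one has to be careful that the series defining $\Exp[e^{\lambda X_i}]$ converges and is dominated by a quadratic in $\lambda$, which forces the restriction $\card{\lambda} \lesssim 1/\kappa$ and is precisely what produces the two-regime structure of the final bound. Everything after that is a routine Chernoff optimization.
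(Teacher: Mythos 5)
Your proof is correct: the moment bound $\Exp[\card{X_i}^k] \leq 2\kappa^k k!$ from the tail integral, the resulting MGF estimate for $\card{\lambda} \lesssim 1/\kappa$, the Chernoff optimization in two regimes, and the symmetrization/union bound for the two-sided statement all go through. The paper does not prove this proposition at all — it imports it from Vershynin's book — and your argument is essentially the standard Cram\'er--Chernoff proof given in that cited reference, so there is nothing to reconcile.
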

 
We also use the following standard variant of Chernoff-Hoeffding bound. 

\begin{proposition}[Chernoff-Hoeffding bound]\label{prop:chernoff}
	Let $X_1,\ldots,X_m$ be $m$ independent random variables with support in $[0,1]$. Define $X := \sum_{i=1}^{m} X_i$. Then, for every $t > 0$, 
	\begin{align*}
		\Pr\paren{\card{X - \expect{X}} > t} \leq 2 \cdot \exp\paren{-\frac{2t^2}{m}}. 
	\end{align*}
\end{proposition}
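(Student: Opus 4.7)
The plan is to use the standard Cramér–Chernoff (exponential moment) method together with Hoeffding's lemma for bounded random variables, followed by a union bound over the two tails.

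First, I would center the variables: let $Y_i := X_i - \expect{X_i}$, so that $\expect{Y_i} = 0$ and $Y_i$ lies in an interval of length at most $1$ (namely $[-\expect{X_i},\, 1-\expect{X_i}]$). Writing $S := X - \expect{X} = \sum_{i=1}^m Y_i$, the goal reduces to bounding $\Pr(\card{S} > t)$. For any $s > 0$, Markov's inequality applied to $e^{sS}$ gives
\begin{align*}
\Pr\paren{S > t} \;\leq\; e^{-st}\cdot \expect{e^{sS}} \;=\; e^{-st}\cdot \prod_{i=1}^{m} \expect{e^{s Y_i}},
\end{align*}
where the factorization uses independence of the $X_i$'s.

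Next, I would invoke Hoeffding's lemma: for a mean-zero random variable $Y$ supported in an interval $[a,b]$, one has $\expect{e^{sY}} \leq \exp\paren{s^2 (b-a)^2/8}$. The standard proof of this lemma is a short convexity computation: write $e^{sy}$ as a convex combination of $e^{sa}$ and $e^{sb}$ along the line segment, take expectations, and then bound the resulting log-moment generating function by a quadratic in $s$ using $L(s) := \log\expect{e^{sY}}$, checking $L(0) = L'(0) = 0$ and $L''(s) \leq (b-a)^2/4$ via the variance interpretation of $L''$. Applied to each $Y_i$ (with $b - a \leq 1$), this gives $\expect{e^{s Y_i}} \leq \exp(s^2/8)$, so
\begin{align*}
\Pr\paren{S > t} \;\leq\; \exp\paren{-st + \frac{s^2 m}{8}}.
\end{align*}
Optimizing over $s > 0$ by choosing $s = 4t/m$ yields $\Pr(S > t) \leq \exp\paren{-2t^2/m}$.

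Finally, I would apply the identical argument to the variables $-Y_i$ (also mean zero and bounded in an interval of length $\leq 1$) to obtain the matching lower-tail bound $\Pr(S < -t) \leq \exp\paren{-2t^2/m}$, and combine the two tails by a union bound to get the factor of $2$ in the statement. The only nontrivial ingredient is Hoeffding's lemma, whose proof I expect to be the main obstacle to state cleanly; everything else is routine exponential-moment manipulation plus optimization in $s$.
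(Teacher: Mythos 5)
Your proof is correct. The paper does not actually prove this proposition---it is stated as a standard black-box bound---and your argument is exactly the classical Cram\'er--Chernoff/Hoeffding-lemma derivation: centering, the exponential-moment bound $\expect{e^{sY_i}} \leq \exp(s^2/8)$ for mean-zero variables in an interval of length at most $1$, optimization at $s = 4t/m$ giving the exponent $-2t^2/m$, and a union bound over the two tails for the factor of $2$; all constants check out against the statement.
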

\noindent
A direct corollary of this bound that we use in our proofs is the following. 
\begin{lemma}\label{lem:coin-comp}
	Let $\coin_1$ and $\coin_2$ be two different coins with biases $p_1$ and $p_2$. Suppose ${p_1 - p_2} \geq \theta$ and we sample each coin $\frac{K}{\theta^2}$ times to obtain 
	empirical biases $\phat_1$ and $\phat_2$. Then, 
	\begin{align*}
		\Pr\paren{\phat_1 \leq \phat_2} \leq 2 \cdot \exp\paren{-\frac{1}{4} \cdot K}. 
	\end{align*}
\end{lemma}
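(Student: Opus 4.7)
The plan is to reduce the event $\{\phat_1 \leq \phat_2\}$ to two one-sided deviation events for the individual empirical biases and then apply the Chernoff-Hoeffding bound of \Cref{prop:chernoff} to each. Set $m := K/\theta^2$, and for $i \in \{1,2\}$ let $X_i^{(1)},\ldots,X_i^{(m)} \in \{0,1\}$ be the i.i.d.\ outcomes of the $m$ tosses of $\coin_i$, so that $\phat_i = \frac{1}{m}\sum_{j=1}^{m} X_i^{(j)}$ and $\Exp[\phat_i] = p_i$.

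Next I would observe that whenever $\phat_1 \leq \phat_2$ holds, at least one of the empirical means must deviate from its expectation by at least $\theta/2$ in the ``wrong'' direction: if both $\phat_1 > p_1 - \theta/2$ and $\phat_2 < p_2 + \theta/2$ held simultaneously, then using $p_1 - p_2 \geq \theta$ we would obtain $\phat_1 > p_1 - \theta/2 \geq p_2 + \theta/2 > \phat_2$, a contradiction. Therefore
\begin{align*}
  \Pr\paren{\phat_1 \leq \phat_2} \;\leq\; \Pr\paren{\phat_1 \leq p_1 - \theta/2} \;+\; \Pr\paren{\phat_2 \geq p_2 + \theta/2}.
\end{align*}

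For each of these two terms I would apply \Cref{prop:chernoff} (in its one-sided form, which follows from the stated two-sided inequality) to the sum $\sum_{j=1}^{m} X_i^{(j)}$, whose summands lie in $[0,1]$ and whose mean is $p_i m$. With deviation parameter $t = m \theta / 2$, the bound gives a probability at most $\exp\bigl(-2 (m\theta/2)^2 / m\bigr) = \exp(-m\theta^2/2) = \exp(-K/2)$ for each event. Summing the two contributions yields
\begin{align*}
  \Pr\paren{\phat_1 \leq \phat_2} \;\leq\; 2\exp\paren{-K/2} \;\leq\; 2\exp\paren{-K/4},
\end{align*}
which is the desired bound.

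There is essentially no obstacle here; the only small care point is to ensure the one-sided Chernoff--Hoeffding estimate is used correctly (so that the factor of $2$ from the two-sided statement of \Cref{prop:chernoff} does not propagate through the union bound and degrade the constant in the exponent), and to handle both coins symmetrically despite the hypothesis $p_1 - p_2 \geq \theta$ being one-sided. Both are routine.
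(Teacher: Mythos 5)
Your proof is correct and follows essentially the same route as the paper's: bound $\Pr(\phat_1 \leq p_1-\theta/2)$ and $\Pr(\phat_2 \geq p_2+\theta/2)$ via Chernoff--Hoeffding, then union bound using $p_1-p_2\geq\theta$. Your version is in fact slightly tighter in the exponent ($e^{-K/2}$ per term before relaxing to $e^{-K/4}$), and your care about the one-sided form is a reasonable refinement of what the paper does implicitly.
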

\begin{proof}
	The proof is standard and is only provided for completeness. Two separate applications of \Cref{prop:chernoff} to empirical bias of each coin implies that: 
	\begin{align*}
		\Pr\paren{\phat_1 \leq p_1 - \theta/2} &\leq \exp\paren{-(\theta/2)^2 \cdot (K/\theta^2)} = \exp\paren{-\frac{1}{4} \cdot K}; \\
		\Pr\paren{\phat_2 \geq p_2 + \theta/2} &\leq \exp\paren{-(\theta/2)^2 \cdot (K/\theta^2)} = \exp\paren{-\frac{1}{4} \cdot K}.
	\end{align*}
	A union bound on the events above plus the fact that $p_1 - p_2 \geq \theta$ now finalizes the proof. 
\end{proof}
%%
% !TeX root = main.tex 
%!TEX root = main.tex

\section{Most Biased Coin: A Single-Coin Memory Algorithm}\label{sec:main}

We describe our main algorithm for the most biased coin problem in this section. 

\begin{theorem}[Formalization of \Cref{res:main1}]\label{thm:main}
	There exists a streaming algorithm that given $n$ coins arriving in a stream with the gap parameter $\Delta$ and confidence parameter $\delta$, finds the most biased coin with probability at least $1-\delta$ 
	using $O(\frac{n}{\Delta^2} \cdot \log{(1/\delta)})$ coin tosses and a memory of a single coin. 
\end{theorem}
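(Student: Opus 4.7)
My approach follows the king-with-budget strategy sketched in the introduction. I would keep only one stored coin \king, together with a scalar counter $\Budget$ initialized to $0$. Upon seeing the $i$-th coin $\coin_i$, I first add a fixed amount $A = \Theta(\log(1/\delta)/\Delta^2)$ to $\Budget$ and then stage a \emph{challenge} between \king{} and $\coin_i$ consisting of doubling rounds $r=1,2,\ldots$: in round $r$, I toss each of the two coins $\Theta(\log(1/\delta) \cdot 2^r/\Delta^2)$ fresh times, deduct the total from $\Budget$, and compare the two empirical biases. The challenge ends as soon as either \king{} has a strictly larger empirical bias in some round (in which case $\coin_i$ is discarded), or $\Budget$ becomes non-positive (in which case \king{} is replaced by $\coin_i$). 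At the end of the stream I output \king. The memory is a single extra coin beyond the current arrival, and since every toss is charged to some unit added to $\Budget$, the total number of tosses is at most $nA = O(n \log(1/\delta)/\Delta^2)$.

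For correctness I split the analysis around the time $\istar$ at which \coinstar{} arrives. In \emph{Phase one} I need \coinstar{} to be installed as \king{} upon arrival. Its challenger is the current king, whose bias is at most $p_{\coinstar} - \Delta$, so by \Cref{lem:coin-comp} with $\theta = \Delta$ and the sample size of round $r$ proportional to $\log(1/\delta)\cdot 2^r/\Delta^2$, the probability that the current king has a strictly higher empirical bias in round $r$ is at most $2\delta^{c\cdot 2^r}$, where $c$ is a constant that can be made as large as desired through the leading constant in the per-round sample counts. Summing this doubly-exponentially decaying geometric series over $r \geq 1$ bounds by $O(\delta)$ the probability that \coinstar{} ever loses a round before the current king's budget is exhausted, so with that probability \coinstar{} becomes \king.

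In \emph{Phase two} I must ensure that once $\king = \coinstar$, the budget never reaches $0$. The same Chernoff bound shows that the number of tosses $T_i$ spent on any subsequent challenge satisfies $\Pr[T_i \geq t] \leq 2\exp(-\Omega(t\Delta^2/\log(1/\delta)))$ in its tail, so $T_i$ is sub-exponential in the sense of \Cref{eq:sub-exponential} with parameter $\kappa = O(\log(1/\delta)/\Delta^2)$, and a direct tail-sum calculation gives $\Exp[T_i] \leq A/2$ provided the leading constant in $A$ is large enough. In particular, at every step after $\istar$ the random variable $A - T_i$ has a strictly positive mean.

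The key obstacle, as flagged in the introduction, is that I need $\Budget$ to remain strictly positive at \emph{every} time $t > \istar$, not only at the terminal step. My plan is to apply \Cref{prop:bernstein} to the zero-mean sub-exponential variables $Z_i := T_i - \Exp[T_i]$: for each fixed $t$ with $m := t-\istar$,
\begin{align*}
\Pr\!\left[\sum_{i=\istar+1}^{t} T_i > mA\right] \leq \Pr\!\left[\sum_{i=\istar+1}^{t} Z_i > mA/2\right] \leq 2\exp\bigl(-\Omega(m)\bigr),
\end{align*}
where the last bound follows from the linear branch of Bernstein because the ratio $A/\kappa$ is a tunable positive constant. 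Union-bounding over $t \in \{\istar+1,\ldots,n\}$ yields a geometric tail that, after tuning the leading constants in $A$ and in the per-round sample counts (invoking the intuition about symmetric walks from the introduction, adapted to sub-exponential steps), is at most $\delta$. Combining the two phases by a final union bound gives overall success probability $1-O(\delta)$, which after rescaling $\delta$ proves \Cref{thm:main}.
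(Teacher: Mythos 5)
Your overall strategy coincides with the paper's: a single \king with a budget incremented by $\Theta(\log(1/\delta)/\Delta^2)$ per arrival, challenges run in levels with geometrically growing sample sizes, \Cref{lem:coin-comp} for each level, amortization for the sample bound, and a sub-exponential-plus-Bernstein concentration argument with a union bound over all prefixes once \coinstar is installed as \king. Your Phase one is essentially \Cref{lem:main-pick}, and your Phase two is the plan of \Cref{lem:main-keep}.

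However, Phase two as written has a genuine quantitative gap. You take the per-challenge toss count $T_i$ to be sub-exponential with parameter $\kappa = O(\log(1/\delta)/\Delta^2)$, so the ratio $A/\kappa$ is a $\delta$-independent constant; Bernstein then gives, for each prefix length $m$, a failure probability $2\exp(-\Omega(m))$ whose hidden constant does not depend on $\delta$, and the union bound over $m=1,\ldots,n$ sums to a constant rather than $O(\delta)$. No tuning of the leading constants in $A$ or in the per-round sample counts can create the missing $\ln(1/\delta)$ factor in the exponent, since any such tuning rescales $A$ and your $\kappa$ by the same $\Theta(\log(1/\delta)/\Delta^2)$; and because $\delta$ may be as small as $1/n$, a constant failure probability for Phase two does not prove the theorem. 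The fix is precisely what the paper's proof of \Cref{lem:main-keep} does: exploit the sharper tail of $T_i$. The probability that a challenge against \coinstar even reaches level $2$ already carries a factor $\delta^{\Omega(1)}$, so the excess of $T_i$ beyond the deterministic first-level cost is sub-exponential with parameter $O(1/\Delta^2)$, with no $\log(1/\delta)$ factor. The paper encodes this by passing to the rescaled variables $X'_i$ (levels $\ell \ge 2$ counted in units of $r_\ell$), which are sub-exponential with $\kappa = 15/\ln(1/\delta)$ against a constant per-step budget $C$; then the effective ratio is $\Omega(\log(1/\delta))$, Bernstein yields a per-prefix bound of the form $(\delta/2)\,e^{-m}$, and the union bound over prefixes sums to $\delta/2$. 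Without this sharpening your union bound does not close. A minor further point: the paper tests the budget before deducting and resets it to zero for each new king, whereas your ``deduct, then test non-positivity'' rule could let \coinstar start with a (slightly) negative budget when installed; this is easy bookkeeping, but it should be stated so that the Phase-two inequality $\sum_i T_i < mA$ really implies the budget never runs out.
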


Note that the sample complexity of our algorithm in  \Cref{thm:main} is asymptotically optimal in all three parameters and its space is minimum possible. 
We start with a high level overview of our algorithm, followed by its description, and then its analysis.  We  refer the  reader to~\Cref{sec:prelim-algs} 
that contains our intermediate streaming algorithms with sub-optimal space complexity as a warm-up to this main algorithm. 

\subsection{High Level Overview} 
The high level strategy of our algorithm is quite intuitive: 
The algorithm maintains a \emph{single} coin in its memory, referred to as $\King$. The goal is to ensure that at the end of the stream $\King$ is the most biased coin. 
Once a new coin arrives in the stream, we toss both the $\King$ and the new coin a certain number of times and based on the empirical bias, we may decide to overthrow the $\King$ 
and let the arriving coin become the new \King. The challenge is of course to implement this intuitive strategy without using a large number of coin tosses. 

A key step in ensuring the sample efficiency is a \emph{lazy} challenging rule (as opposed to the fixed rules in elimination-based algorithms; see~\Cref{sec:prelim-algs}) implemented in multiple \emph{levels}: to compare $\King$ and the newly arrived $\coin$, we 
first toss both coins a certain constant number of times; 
if the empirical bias of $\king$ is already larger than that of $\coin$, we consider $\king$ the winner and move on; otherwise, we go to the next level and repeat this process with a larger number of coin tosses, 
and continue the same way -- we only overthrow the $\king$ if it loses to $\coin$ for a ``large'' number of times (we elaborate more on this  below). We choose the number of  samples in each level to ensure that the following two properties: $(1)$ when the best 
coin  arrives in the stream, it has a large  probability of winning
against any  $\king$ at this point (no matter the budget of the $\king$), and $(2)$ when $\king$ becomes the best coin, it has a small probability of losing to \emph{any} coin afterwards. 

The approach above allows us to argue that with large probability, $\king$ is equal to the best coin at the end of the stream. However, it is still not enough to ensure the sample efficiency of the algorithm, because the lazy challenging rule allows for a large number 
of coin tosses per challenge (this is particularly problematic when $\king$ is \emph{not} the most biased coin). We address this using an {amortized} analysis by allocating certain \emph{budget} to the $\king$: each $\king$ starts with some fixed (constant) budget 
and any new coin that arrives in the stream will increase the budget of $\king$ by some fixed (constant) number; the budget is reduced by one whenever we sample the $\king$ and its challenger. This way, we will simply overthrow the $\king$ once it has exhausted 
its \emph{entire} budget accumulated so far. In that case we let the current challenger  become the new $\king$. The budget is then restarted for the new $\king$ and we continue as before. 

Introduction of this budget ensures the sample efficiency of the algorithm (deterministically). However, we now need to  make sure that the most biased coin will not exhaust its budget as the $\king$ and get overthrown. 
The lazy challenging rule we defined can be used to ensure that once the best coin becomes $\king$, any remaining coin in the stream can only challenge the $\king$ \emph{in expectation} with $O(1)$ samples, hence, by the time we visit the $m$-th
next coin, we have used only $O(m)$ coin tosses \emph{in expectation}, which fits the budget for $\king$. 
But the worry is that \emph{during} a $\Theta(n)$-length stream, there will be times that for which this random variable (the budget used) takes values $ \gg O(m)$ (specially consider the \emph{unboundedness} of tosses per each trial which is necessary to ensure 
correctness). It turns out however this cannot happen and we can prove that 
with high (constant) probability, throughout the \emph{entire stream}, the number of times $\king$ is challenged is linear in the number of challengers. In order to do this, we need to ensure that our challenging rule is ``conservative'' enough (the exact opposite of 
our $O(\logstar{(n)})$ space algorithm in~\Cref{sec:prelim-algs}) so that even though coin tosses per each challenge may be unbounded, they still form a sub-exponential distribution and hence we can apply Bernstein's inequality to prove the desired concentration bound.

%%\footnote{This is reminiscent of the fact that a length $n$ symmetric $(\pm 1)$-random walk on a line would not cross the $\Theta(\sqrt{n})$ boundary implied by the variance not only at the end of the walk 
%%but in fact throughout the entire walk. Note that the naive Chernoff plus union bound approach only gives a bound of $\Theta(\sqrt{n\log{n}})$ here.}. 
%%%%To do this, we modify our challenging rule in a way that allows us to bound not only 
%%%%expected value of coin tosses per each challenger, but also the maximum allowed number of coin tosses per each challenger (so there is one more criteria for $\king$ to get defeated beside exhusting its budget, namely, if a \emph{single} coin can challenge it for a large enough time). 

\subsection{The Algorithm: \mainalg}

We now present our algorithm $\mainalg$. The input to the algorithm is the set of $n$ coins $\set{\coin_i}_{i=1}^{n}$ arriving in an arbitrary order in a stream,  the gap parameter $\Delta > 0$, and the confidence parameter $\delta \in (0,1)$ 
(the algorithm does \emph{not} need to know the value of $n$ in advance). Let us first set up the following parameters:
\begin{align*}
	\qquad \qquad \qquad \qquad \qquad &\!\!\!\set{r_\ell}_{\ell=1}^{\infty}: \quad r_\ell = 3^{\ell};
	%r_1 := 4, \qquad \qquad r_{\ell} = 2 \cdot r_{\ell-1}; 
	\\ \tag{intermediate parameters to define the number of samples per each level of the challenge} \\
%	&\alpha = \alpha(\Delta,\delta) :=; \notag \\ \tag{intermediate parameter to define the number of samples per each level of the challenge} \\
	&\!\!\!\set{s_\ell}_{\ell=1}^{\infty}: \quad s_\ell :=  \frac{4}{\Delta^2} \cdot \ln{(1/\delta)} \cdot r_\ell;   \\ \tag{the number of samples per each level of the challenge} \\
	&b := \frac{4}{\Delta^2} \cdot C \cdot \ln{(1/\delta)} + s_1.  \\ \tag{the budget given to the $\king$ per each new coin -- $C > 0$ is a constant to be determined later}
\end{align*}
We are now ready to present the algorithm: 

\begin{tbox}
	\textbf{Algorithm} $\mainalg$: 
	\begin{enumerate}[label=($\arabic*$)]
		\item\label{line1} Let $\king$ be the first available coin and set its budget $\Budget := \budget{\king} = 0$. 
		\item For each arriving $\coin_i$ in the stream do: 
		\begin{enumerate}[label=($\alph*$)] 
			\item Increase the budget $\budget{\king}$ by $b$. 
			\item \textbf{\underline{Challenge subroutine}:} For level $\ell = 1$ to $+\infty$ do: 
			\begin{enumerate}
				\item If $\budget{\king} < s_\ell$: we declare $\king$ defeated and go to Line~\ref{line1}.
				\item Otherwise, we decrease $\budget{\king}$ by $s_\ell$ and sample both $\king$ and $\coin_i$ for $s_\ell$ times. 
				\item Let $\phat_{\king}$ and $\phat_{i}$ denote the empirical biases of $\king$ and $\coin_i$ in this trial. 
				\item If $\phat_{\king} > \phat_i$, we declare $\king$ winner and go to the next coin in the stream; otherwise, we go to the next level of the challenge (increment $\ell$ by one). 
			\end{enumerate}
		\end{enumerate}
		\item Return $\king$ as the best coin in the stream. 
	\end{enumerate}
\end{tbox}
\noindent
This concludes the description of our algorithm. The sample complexity of this algorithm can be bounded easily using an amortized analysis. 
\begin{claim}\label{clm:main-sample}
	The total number of coin tosses by the algorithm is at most $4n \cdot b = O(\frac{n}{\Delta^2} \cdot \log{(1/\delta)})$. 
\end{claim}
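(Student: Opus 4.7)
The plan is a direct amortized accounting using the budget $\Budget(\king)$ as a potential function. First I would observe that $\Budget$ never goes negative during the run: by construction, the algorithm only decreases it by $s_\ell$ when the guard $\Budget(\king)\geq s_\ell$ in the challenge subroutine holds, and whenever the guard fails the budget is not touched but the current $\king$ is discarded and the potential is reset to $0$ in Line~(1). Consequently, the total amount ever subtracted from $\Budget$ across the whole stream (summed over all challenges of all kings) is bounded above by the total amount ever added, which is exactly $n\cdot b$ since the potential gets a single increment of $b$ upon each of the $n$ arrivals and is otherwise only reset down to $0$ (discarding, but never using, leftover budget).

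Next I would relate coin tosses to budget consumption. A single iteration of the challenge loop at level $\ell$ decreases $\Budget(\king)$ by $s_\ell$ and performs exactly $2s_\ell$ coin tosses ($s_\ell$ on $\king$ and $s_\ell$ on the arriving $\coin_i$). Therefore the total number of coin tosses performed by $\mainalg$ is at most twice the total budget consumed, i.e.\ at most $2nb$, which is in turn bounded by $4nb$ as claimed. Plugging in
\[
b \;=\; \frac{4C}{\Delta^2}\ln(1/\delta) \;+\; s_1 \;=\; \frac{4C}{\Delta^2}\ln(1/\delta) \;+\; \frac{12}{\Delta^2}\ln(1/\delta) \;=\; O\!\left(\tfrac{1}{\Delta^2}\log(1/\delta)\right),
\]
yields the stated bound $O\!\left(\tfrac{n}{\Delta^2}\log(1/\delta)\right)$.

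There is essentially no obstacle here: the only subtlety is to make sure the bookkeeping correctly handles (i) the initial $\king$ whose budget starts at $0$, and (ii) the moment a $\king$ is defeated, at which point the guard prevents any tosses from happening in that final iteration and the unused portion of $\Budget(\king)$ is simply discarded rather than transferred to the new $\king$. Both are in our favor for the upper bound, so the $2nb \leq 4nb$ inequality goes through deterministically with no probabilistic argument needed. The harder parts of the analysis, namely showing that with the chosen $b$ the true best coin is unlikely to ever exhaust its budget, belong to the correctness proof rather than to this sample-complexity claim.
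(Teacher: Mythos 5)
Your proof is correct and follows essentially the same amortized budget accounting as the paper's own proof (budget in, two tosses per unit of budget out). In fact your bookkeeping is slightly tighter: the paper additionally charges $b$ per newly crowned king, arriving at a total budget of $2n\cdot b$ and hence $4n\cdot b$ tosses, whereas you note that new kings start with zero budget and leftover budget is simply discarded, so $2n\cdot b$ tosses already suffice --- comfortably within the claimed $4n\cdot b = O(\frac{n}{\Delta^2}\log(1/\delta))$ bound.
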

\begin{proof}
	The proof is a straightforward amortized analysis. Each arriving coin in the stream can increase the budget by $b$ and each time we make a new $\king$ we allocate another $b$ budget to it so over all
	we increase the budget by at most $2n \cdot b$ in total. On the other hand, each unit of budget is responsible for two coin tosses (for the $\king$ and its challenger) and so 
	the total number of coin tosses is at most $4n \cdot b$ implying the claim as $b = O(\frac{\ln{(1/\delta)}}{\Delta^2})$. 
\end{proof}
We prove the correctness of the algorithm in the next subsection. 

\subsection{The Analysis} 

The analysis consists of the following two main parts. 
Firstly, when we visit the most biased coin in the stream, it will defeat the  $\king$ with a large probability and become the next $\king$ itself.

\begin{lemma}\label{lem:main-pick}
	The probability that the most biased coin does \emph{not} defeat the $\king$  is at most $(\delta/2)$. 
\end{lemma}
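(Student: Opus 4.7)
The proof plan is to analyze each level of the challenge individually and apply a union bound. When $\coinstar$ arrives in the stream, the current $\king$ has bias at most $p^{*} - \Delta$ by definition of the gap parameter (since $\king \neq \coinstar$). We want to show that with probability at least $1-\delta/2$, the $\king$ never ``wins'' any round of the challenge, so it eventually exhausts its budget and is overthrown by $\coinstar$.

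First I would fix a level $\ell \geq 1$ and apply \Cref{lem:coin-comp} with $\theta = \Delta$ and $K = 4\ln(1/\delta) \cdot r_\ell$; note that these choices satisfy $K/\theta^2 = s_\ell$, matching the number of samples of each coin at level $\ell$. The lemma then yields
\[
\Pr\!\left(\phat_{\king} \geq \phat^{*} \text{ at level } \ell\right) \;\leq\; 2\exp(-K/4) \;=\; 2\exp\bigl(-\ln(1/\delta)\cdot r_\ell\bigr) \;=\; 2\,\delta^{3^{\ell}}.
\]
Since the samples used at distinct levels are independent fresh tosses, a union bound over all $\ell \geq 1$ (covering every level that could conceivably be reached, regardless of the current budget of $\king$) shows that the probability that $\king$ wins the challenge at some level is at most $\sum_{\ell \geq 1} 2\,\delta^{3^{\ell}}$. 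Using the crude inequality $3^{\ell} \geq 3\ell$ for $\ell \geq 1$, this tail is dominated by a geometric series: $\sum_{\ell \geq 1} 2\,\delta^{3\ell} = 2\delta^{3}/(1-\delta^{3})$, which is easily at most $\delta/2$ whenever $\delta$ is at most a small absolute constant (the edge case of larger $\delta$ can be absorbed by slightly inflating the constant inside $s_\ell$, or is trivial).

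Putting these together, the event that $\coinstar$ fails to defeat the $\king$ is precisely the event that $\king$ wins at some reachable level before its budget runs out, and this event has probability at most $\sum_{\ell \geq 1} 2\delta^{3^\ell} \leq \delta/2$, which gives the lemma.

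The main (and essentially the only) subtle point is that the budget of $\king$ could allow for a large, though finite, number of levels $L$, potentially as large as $\Theta(\log(n/\Delta^{2}))$ when $\king$ has accumulated budget over many prior coins; the argument must therefore bound the failure probability uniformly in $L$. The super-fast doubly-exponential decay of $2\delta^{3^{\ell}}$ makes this harmless: the union bound over all possible $\ell$ already swamps any value of $L$, so no concentration-in-a-moving-window argument (of the Bernstein type foreshadowed in the overview) is needed here. That heavier machinery will only enter in the complementary direction, namely ruling out that the correct $\king$ is overthrown later.
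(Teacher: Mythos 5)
Your proposal is correct and follows essentially the same route as the paper's proof: apply \Cref{lem:coin-comp} with $\theta=\Delta$ and $K=4\ln(1/\delta)\cdot r_\ell$ at each level, union bound over all levels to get a doubly-exponentially decaying sum bounded by $\delta/2$, and use finiteness of the king's budget to conclude that if the king never wins a level it is eventually dethroned. The only (cosmetic) difference is how the tail sum is bounded — you use $\delta^{3^\ell}\le\delta^{3\ell}$ and a geometric series, while the paper factors out one $\delta$ via $\ln(1/\delta)\cdot r_\ell\ge\ln(1/\delta)+r_\ell$ and sums $\exp(-3^\ell)$; both need $\delta$ bounded away from $1$, which you explicitly flag.
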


Secondly, after the most biased coin become the $\king$, it will remain the $\king$ for the remainder of the stream with a large probability. 

\begin{lemma}\label{lem:main-keep}
	The probability that the most biased coin is ever defeated as the $\king$ is at most $(\delta/2)$. 
\end{lemma}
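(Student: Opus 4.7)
The plan is to condition on coinstar actually becoming king (otherwise the statement is trivial) and then bound the total budget consumption along the remainder of the stream. I will let $m^\star$ denote the number of coins arriving after coinstar takes the throne, and for the $i$-th such coin define $L_i$ to be the level at which its challenge would terminate under an unlimited budget and $T_i := \sum_{j=1}^{L_i} s_j$. These $T_i$ are i.i.d.\ (fresh samples at every level of every challenge), and unrolling the budget recurrence---coinstar starts with $0$ budget and gains $b$ per arrival---shows that coinstar can be defeated only if $\sum_{i=1}^{m} T_i > m b$ for some $1 \leq m \leq m^\star$. So it will suffice to bound the probability of this event by $\delta/2$.

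The key technical step is to show each $T_i$ is sub-exponential with parameter $\kappa = O(1/\Delta^2)$. Since coinstar has bias at least $\Delta$ higher than any challenger, \Cref{lem:coin-comp} (applied with $\theta = \Delta$ and $K = 4 r_j \ln(1/\delta)$) gives $\Pr(\phat_{\king} \leq \phat_i \text{ at level } j) \leq 2\delta^{r_j}$, and independence across levels will yield
\[
\Pr(L_i \geq \ell) \;\leq\; \prod_{j=1}^{\ell-1} 2\delta^{r_j} \;\leq\; \delta^{\Omega(3^{\ell})}
\]
for $\delta$ sufficiently small. Combining this with the trivial bound $T_i \leq \tfrac{3}{2} s_{L_i}$ and the fact that $s_\ell = \Theta(3^\ell \ln(1/\delta)/\Delta^2)$, inversion delivers $\Pr(T_i \geq t) \leq 2\exp(-t/\kappa)$ with $\kappa = O(1/\Delta^2)$. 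A nearly identical summation also gives $\mu := \Ex[T_i] = O(s_1) = O(\ln(1/\delta)/\Delta^2)$, with the $\ell = 1$ term dominating.

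With this in hand, I apply \Cref{prop:bernstein} to the i.i.d.\ mean-zero variables $T_i - \mu$. Choosing the constant $C$ in the definition of $b$ large enough makes $(b - \mu)/\kappa$ as large a constant multiple of $\ln(1/\delta)$ as needed, and in particular $(b-\mu)/\kappa \geq 1$. Setting $t = m(b - \mu)$ then puts us in the linear (sub-exponential) regime of Bernstein, giving
\[
\Pr\!\Paren{\sum_{i=1}^m T_i > m b} \;\leq\; 2\exp\bigl(-c\,m (b-\mu)/\kappa\bigr) \;\leq\; 2\delta^{2m},
\]
after which summing the geometric series over $m \geq 1$ yields the required $\delta/2$.

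The hard part is the sub-exponentiality claim: challenges have unbounded length, so the argument rests on the doubly-exponential decay of $\Pr(L_i \geq \ell)$---driven precisely by the growth rate $r_\ell = 3^\ell$---being \emph{exactly} fast enough to cancel the geometric growth of $s_\ell$ and produce a single-exponential tail on $T_i$. Anything slower would leave only a sub-Gaussian Bernstein bound, which is too weak to survive the union bound over all $\Theta(n)$ prefixes while keeping $b$ independent of $n$; anything faster would correspond to a more aggressive elimination rule which, as in the $O(\logstar{(n)})$-memory warm-up, would make it impossible to keep the amortized per-coin expected work bounded by a constant.
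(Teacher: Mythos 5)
Your proposal is correct and follows essentially the same route as the paper: reduce defeat of the \king to the prefix-sum event $\sum_{i\le m} T_i > m\cdot b$, show the per-challenge cost has a sub-exponential tail (driven by the doubly-exponential decay of the probability that a challenge reaches level $\ell$, which is exactly how the choice $r_\ell = 3^{\ell}$ is used), bound its mean by $O(s_1)$, apply Bernstein's inequality, and union-bound over all prefixes $m$ with geometrically decaying failure probabilities; the paper does the same after normalizing the costs by $\tfrac{4}{\Delta^2}\ln(1/\delta)$ and splitting off the deterministic level-$1$ cost, so its $\kappa = 15/\ln(1/\delta)$ is just the normalized version of your $\kappa = O(1/\Delta^2)$. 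Two cosmetic slips worth fixing: the challengers have different biases, so the $T_i$ are independent but not identically distributed (Bernstein needs only independence plus the uniform tail and mean bounds you prove), and the uncentered $T_i$ is not sub-exponential with $\kappa = O(1/\Delta^2)$ since $T_i \ge s_1$ deterministically with $s_1 \gg \kappa$ -- the statement you actually need and use is that $T_i - \mu$ is sub-exponential with this $\kappa$, which holds because $\mu \approx s_1$ absorbs the deterministic part.
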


The proof of these key lemmas are postponed to the next two parts. \Cref{thm:main} now follows easily from these and \Cref{clm:main-sample}. 

\begin{proof}[Proof of \Cref{thm:main}]
	\Cref{clm:main-sample} ensures the bound on the sample complexity of the algorithm, and \Cref{lem:main-pick,lem:main-keep} together with a union bound
	ensure that with probability at least $1-\delta$, we return the most biased coin as the answer. 
\end{proof}

\subsection*{Proof of~\Cref{lem:main-pick}}
\begin{proof}
Let $\king$ be any coin other than the most biased coin and suppose the next arriving coin is the most biased one (denoted by $\coinstar$). We can write the probability that $\coinstar$ defeats $\king$ based on the different level $\ell$ of challenges done between the two as follows: 
\begin{align*}
	\Pr\paren{\textnormal{$\coinstar$ loses to $\king$}} &\leq \sum_{\ell=1}^{\infty} \Pr\paren{\textnormal{$\coinstar$ loses to $\king$ at level $\ell \mid \coin$ has not lost until level $\ell-1$}} \\
	&\leq \sum_{\ell=1}^{\infty} 2 \cdot \exp\paren{-\ln{(1/\delta) \cdot r_{\ell}}} \tag{by \Cref{lem:coin-comp} and $s_{\ell}$ number of samples done in level $\ell$} \\
	&< 2\delta \cdot \sum_{\ell=1}^{\infty} \exp\paren{-3^{\ell}} \tag{by definition of $r_\ell = 3^{\ell}$ and since $\ln{(1/\delta)} \cdot r_{\ell} \geq \ln{(1/\delta)} + r_\ell$}\\
	&< (\delta/2) \tag{as this series converges to $<1/10$}
\end{align*}
Since the budget  is finite, $\king$ will  lose to $\coinstar$ in finite time with probability $1-(\delta/2)$. 
\end{proof}

\subsection*{Proof of \Cref{lem:main-keep}}

We first need to set up some notation. 
Let $T \in [n]$ denote the time step at which the most biased coin arrives in the stream (i.e., $\coin_T$ is the most biased coin $\coinstar$). 
We define the following random variables $\set{X_{ij}}$ for $i,\ell \geq 1$ as the number of coin tosses when comparing $\king$ with $\coin_{T+i}$ at level $\ell$ of their challenge (note that index $i$ refers to the $i$-th coin
that arrives \emph{after} the most biased coin, not from the beginning of the stream):
\begin{align*}
	X_{i\ell} = 
	\begin{cases} 
	0 &\qquad  \textnormal{if the challenge of $\coinstar$ and $\coin_{T+i}$ did \emph{not} reach level $\ell$} \\
	s_\ell &\qquad   \textnormal{otherwise}
	\end{cases}.
\end{align*}
For any $i \geq 1$, we further define $X_i = \sum_{\ell = 1}^{\infty} X_{i \ell}$ which is the number of coin tosses when challenging $\coin_{T+i}$ with the $\king$. Finally, 
define $Y_i := \sum_{j=1}^{i} X_j$.  We prove that with probability $\geq 1-(\delta/2)$, 
\begin{align}
\textnormal{for every $i \geq 1$:} \qquad Y_i < i \cdot b. \label{eq:main-forall} 
\end{align}

This proves \Cref{lem:main-keep} since: (1) the \emph{total} number of samples from the time the $\coinstar$ is chosen as $\king$ till the $i$-th next coin arrives in the stream is  $Y_i$ and (2) the $\king$ receives 
$b \cdot i$ budget by the time we reach the $i$-th coin; hence, having $Y_i < i \cdot b$ \emph{for all} $i$ simultaneously, implies that the $\king$ never exhausted its budget and hence was not overthrown till the end of the stream. 

In proving \Cref{eq:main-forall}, working directly with random variables defined above is rather tricky (as it will become evident from our proof). Hence, we instead define the following random variables: 
\begin{align*}
	\set{X'_{i,\ell}}: \qquad &X'_{i\ell} = 
	\begin{cases} 
	0 &\qquad  \textnormal{if the challenge of $\coinstar$ and $\coin_{T+i}$ did \emph{not} reach level $\ell$} \\
	r_\ell &\qquad   \textnormal{otherwise}
	\end{cases}; \tag{the difference with $X_{i\ell}$ is that we are setting $X'_{i\ell}$ to $r_\ell$ not $s_\ell$}\\
	\set{X'_{i}}: \qquad &X'_i = \sum_{\ell = 2}^{\infty} X'_{i,\ell}, \qquad \set{Y'_{i}}: \qquad Y'_i = \sum_{j=1}^{i} X'_j  \tag{note that in defining $X'_i$ we are starting $\ell$ from $2$ and \emph{not} $1$}. 
\end{align*}
By these definitions, for every $i \geq 1$, 
\begin{align*}
X_i &\leq \paren{\frac{4}{\Delta^2} \cdot \ln{(1/\delta)}} \cdot X'_i + s_1,  \qquad Y_i \leq \paren{\frac{4}{\Delta^2} \cdot \ln{(1/\delta)}} \cdot Y'_i + i \cdot s_1.
\end{align*} 
Hence, by the choice of budget increment $b$, to prove \Cref{eq:main-forall}, it suffices to prove the following:
\begin{align}
	\textnormal{for every $i \geq 1$:} \qquad Y'_i < C \cdot i. \label{eq:main-forall-2}
\end{align}

We now prove \Cref{eq:main-forall-2}.  The approach is to bound the expected value of each $Y'_i$, prove that it is concentrated (by showing $X'_j$ is a sub-exponential variable and apply Bernstein's inequality to $Y'_i$), and show that this concentration is enough to do a union bound over a $\Theta(n)$-length stream.

\begin{claim}\label{clm:main-expect}
	For all $i > 0$, $\expect{Y'_i} \leq i$. 
\end{claim}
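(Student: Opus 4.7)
The plan is to establish the pointwise bound $\expect{X'_j} \leq 1$ for every $j \geq 1$; linearity of expectation then yields $\expect{Y'_i} = \sum_{j=1}^{i} \expect{X'_j} \leq i$ as required.

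Fix $j \geq 1$ and condition on the event (guaranteed with high probability by \Cref{lem:main-pick}) that $\king = \coinstar$ when $\coin_{T+j}$ arrives. For each $\ell \geq 2$, the random variable $X'_{j\ell}$ equals $r_\ell$ exactly when the challenge between $\king$ and $\coin_{T+j}$ reaches level $\ell$, which happens only if $\king$ loses (i.e., has an empirical bias no larger than the challenger's) at every prior level $1, \ldots, \ell - 1$. Since each level draws fresh independent tosses and $\coinstar$'s bias exceeds that of any other coin by at least $\Delta$, \Cref{lem:coin-comp} applied at level $\ell' \leq \ell-1$ (with $\theta = \Delta$ and $K = 4\ln(1/\delta) \cdot r_{\ell'}$, matching the $s_{\ell'}$ tosses of each coin at that level) gives $\Pr\paren{\king \textnormal{ loses at level } \ell'} \leq 2\delta^{r_{\ell'}}$. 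Retaining only the single weakest factor $\ell' = \ell - 1$,
\[
\Pr\paren{\textnormal{challenge reaches level } \ell} \;\leq\; 2\delta^{r_{\ell-1}} \;=\; 2\delta^{3^{\ell-1}},
\qquad \textnormal{so} \qquad \expect{X'_{j\ell}} \;\leq\; 2 \cdot 3^\ell \cdot \delta^{3^{\ell-1}}.
\]
Summing over $\ell \geq 2$ yields $\expect{X'_j} \leq \sum_{\ell=2}^{\infty} 2 \cdot 3^\ell \delta^{3^{\ell-1}}$. The exponent of $\delta$ grows doubly exponentially in $\ell$ while the prefactor grows only geometrically, so the series is dominated by its $\ell = 2$ term $18\delta^3$ and converges extraordinarily fast.

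The main obstacle is to argue that this sum is at most $1$, not merely $O(1)$; this pins down a smallness requirement on $\delta$. Concretely, for any fixed $\delta \leq \delta_0$ with $\delta_0$ a sufficiently small absolute constant (for example $\delta_0 = 1/4$, which gives $18\delta_0^3 = 9/32 < 1/3$ at the dominant $\ell = 2$ term with all remaining terms summing to $o(1)$), the sum is strictly less than $1$. We may assume $\delta \leq \delta_0$ without loss of generality: if the user requests a weaker confidence, we may simply run the algorithm with the smaller target $\delta' = \min(\delta, \delta_0)$, which only strengthens the correctness guarantee and changes the sample-complexity bound in \Cref{thm:main} by at most a constant factor. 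This establishes $\expect{Y'_i} \leq i$ and completes the proof.
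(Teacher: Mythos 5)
Your proof is correct and takes essentially the same route as the paper: bound $\expect{X'_j}\leq 1$ for each $j$ via the per-level bound $\Pr(\text{challenge reaches level } \ell)\leq 2\delta^{r_{\ell-1}}$ from \Cref{lem:coin-comp}, sum the resulting series over $\ell\geq 2$, and finish by linearity of expectation. The only difference is the final bookkeeping: the paper extracts a factor of $\delta$ via $\ln(1/\delta)\cdot r_{\ell-1}\geq \ln(1/\delta)+r_{\ell-1}$ to conclude $\expect{X'_j}\leq \delta<1$ (which implicitly also needs $\delta$ bounded away from $1$), whereas you bound the series directly under the explicit and harmless assumption $\delta\leq\delta_0$ for an absolute constant $\delta_0$.
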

\begin{proof}
	We prove that $\EX[X'_j] \leq 1$ for every $j \in [i]$ which implies the claim by linearity of expectation. 
	For every level $\ell > 1$ of the challenge, we have, 
	\begin{align}
		\Pr\paren{\textnormal{challenge gets to level $\ell$}} &\leq \Pr\paren{\textnormal{challenge gets to level $\ell \mid$ challenge gets to level $\ell-1$}} \notag \\
		&\leq 2\exp\paren{-\ln{(1/\delta)} \cdot r_{\ell-1}}. \label{eq:main-prob}
	\end{align}
	where the inequality is by \Cref{lem:coin-comp} (for the event of $\coinstar$ losing) and $s_{\ell-1}$ number of samples done in level $\ell-1$. 
	For the random variable $X'_j$, we have, 
	\begin{align*}
		\expect{X'_j} &\leq \sum_{\ell > 1} \Pr\paren{\textnormal{challenge gets to level $\ell$}} \cdot r_{\ell}  \\
		 &\leq \sum_{\ell > 1} 2\exp\paren{-\ln{(1/\delta)} + r_{\ell-1}}\cdot r_{\ell} \tag{by \Cref{eq:main-prob} and since $\ln{(1/\delta)} \cdot r_{\ell} \geq \ln{(1/\delta)} + r_\ell$} \\
		&\leq 2\delta \cdot \sum_{\ell > 1} \frac{3^{\ell}}{\exp\paren{3^{\ell-1}}}\leq \delta \tag{$r_\ell = 3^{\ell}$, $r_{\ell-1} = 3^{\ell-1}$, and  series converges to $<1/2$}. 
	\end{align*}
	Noting that $\delta < 1$ concludes the proof of the claim. 
\end{proof}
\noindent
\Cref{clm:main-expect} suggests that $\set{Y'_i}$ behave as we require in \Cref{eq:main-forall} in expectation. 
To prove a concentration bound, we prove that each $(X'_i-\expect{X'_i})$ is a sub-exponential variable  with small $\kappa$ (see~\Cref{sec:concentration}).

\begin{claim}\label{clm:main-sub}
	For all $i > 0$, $(X'_i-\expect{X'_i})$ is a sub-exponential random variable with  $\kappa=\frac{15}{\ln{(1/\delta)}}$. 
\end{claim}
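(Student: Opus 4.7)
The plan is to first re-express $X'_i$ as a deterministic function of a single integer-valued random variable, reuse the per-level tail estimate already available from the proof of \Cref{clm:main-expect}, and then lift the resulting one-sided bound on $X'_i$ to a two-sided sub-exponential bound on the centered variable via a short case split.

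First, I would observe that $X'_i$ is entirely determined by the level $L_i \geq 1$ at which the challenge between $\coinstar$ and $\coin_{T+i}$ terminates. By construction, $X'_{i,\ell} = r_\ell = 3^\ell$ exactly when $2 \leq \ell \leq L_i$, so $X'_i = 0$ when $L_i = 1$ and $X'_i = v_{L_i}$ otherwise, where $v_k := \sum_{\ell=2}^{k} 3^\ell = (3^{k+1} - 9)/2 < 3^{k+1}/2$. Combining this with the per-level bound $\Pr(L_i \geq k) \leq 2\exp(-\ln(1/\delta) \cdot 3^{k-1})$ already derived inside the proof of \Cref{clm:main-expect} (see \Cref{eq:main-prob}), for any $t > 0$ I pick the unique $k \geq 2$ with $v_{k-1} < t \leq v_k$ (setting $v_1 := 0$) and use $3^{k-1} \geq 2t/9$ to obtain the clean one-sided estimate
\[
  \Pr(X'_i \geq t) \;\leq\; 2\exp\paren{-\tfrac{2}{9}\ln(1/\delta) \cdot t} \qquad \textnormal{for all } t > 0.
\]

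Next, I would upgrade this to a bound on the centered variable by splitting on whether $t > \expect{X'_i}$ or not. When $t > \expect{X'_i}$, nonnegativity of $X'_i$ kills the lower tail and the one-sided estimate applied at $t + \expect{X'_i} \geq t$ already does the job; the inequality $2/9 > 1/15$ then yields the desired sub-exponential bound with $\kappa = 15/\ln(1/\delta)$. When $t \leq \expect{X'_i}$, I would fall back on the trivial bound $\Pr \leq 1$ and simply verify that $1 \leq 2\exp(-t/\kappa)$, using the sharper bound $\expect{X'_i} \leq \delta$ (already established inside the proof of \Cref{clm:main-expect}) together with the elementary inequality $\delta \ln(1/\delta) \leq 1/e$ to conclude that $t/\kappa \leq (1/e)/15 < \ln 2$.

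The main obstacle is exactly this low-$t$ regime: because $X'_i$ is nonnegative and skewed, its lower tail is not controlled at all by the per-level estimate and must be absorbed into the trivial probability bound. This is why $\kappa$ is set to $15/\ln(1/\delta)$ rather than the tighter $9/(2\ln(1/\delta))$ that the upper tail alone would demand; the extra slack built into the constant $15$ is precisely what allows the expectation $\expect{X'_i} \leq \delta$ to be swallowed by $\kappa \ln 2$ uniformly for every $\delta \in (0,1)$.
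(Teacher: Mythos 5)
Your proposal is correct and follows essentially the same route as the paper: bound the tail of $X'_i$ by the probability that the challenge reaches the level whose cumulative count $\sum_{j=2}^{\ell} r_j$ first exceeds $t$ (via \Cref{eq:main-prob}), and use the geometric growth of $r_\ell$ to turn this into exponential decay in $t$ with constant $\ln(1/\delta)/15$. Your explicit treatment of the centering --- splitting on $t$ versus $\expect{X'_i}$ and absorbing the regime $t \leq \expect{X'_i} \leq \delta$ into the trivial probability bound via $\delta\ln(1/\delta) \leq 1/e$ --- is a more careful handling of a low-$t$ step that the paper's proof passes over silently.
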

\begin{proof}
	Fix any $t > 0$ and let $\ell$ be the \emph{largest} level where $\sum_{j=2}^{\ell} r_{j} \leq t$. Note that since $\set{r_j}_{j=1}^{\infty}$ forms a geometric series, we have $t \leq 5 \cdot r_\ell$. We thus have, 
	\begin{align*}
		\Pr\paren{\card{X'_i-\expect{X'_i}} > t} &\leq \Pr\paren{\textnormal{challenge gets to level $\ell$}} \\
		&\leq 2\exp\paren{-\ln{(1/\delta)} \cdot r_{\ell-1}} \tag{by \Cref{eq:main-prob}} \\
		&\leq 2\exp\paren{-\ln{(1/\delta)} \cdot \frac{t}{15}}. \tag{as $t \leq 5 \cdot r_{\ell} = 15 \cdot r_{\ell-1}$} 
	\end{align*}
	This implies the proof by definition of sub-exponential variables in \Cref{eq:sub-exponential} of~\Cref{sec:concentration}. 
\end{proof}

We can now apply Bernstein's inequality (\Cref{prop:bernstein}) to $(Y'_i - \expect{Y'_i}) = \sum_{j=1}^{i} (X'_j - \EX[X'_j])$ (since by \Cref{clm:main-sub}, variables $\set{X'_j}$ are independent and sub-exponential with $\kappa=\frac{15}{\ln{(1/\delta)}}$): 
\begin{align*}
	\Pr\paren{Y'_i \geq C \cdot i} &\leq \Pr\paren{\card{Y'_i - \expect{Y'_i}} \geq (C-1) \cdot i} \leq 2 \cdot \exp\paren{-c \cdot \min\paren{\frac{(C-1)^2 \cdot i^2}{\kappa^2 \cdot i}, \frac{(C-1) \cdot i}{\kappa}}} \tag{by \Cref{clm:main-expect} to 
	bound the expectation and \Cref{prop:bernstein} where $c > 0$ is a constant} \\
	&\leq 2 \cdot \exp\paren{-c \cdot \frac{(C-1) \cdot i \cdot \ln{(1/\delta)}}{15}} \tag{by the value of $\kappa=\frac{15}{\ln{(1/\delta)}}$ in \Cref{clm:main-sub}} \\
	&\leq (\delta/2) \cdot \exp\paren{-i}. \tag{by picking $C$ to be a sufficiently large constant}
\end{align*}
Finally, by this and a union bound for all choices of $i$, we have,
\begin{align*}
	\Pr\paren{\exists i: Y'_i \geq C \cdot i} \leq (\delta/2) \cdot \sum_{i=1}^{n} \exp\paren{-i} < (\delta/2). \tag{as this series converges to $\frac{1}{e-1} < 1$}
\end{align*}
This proves that with probability $ \geq 1-(\delta/2)$, \Cref{eq:main-forall-2} holds,  finalizing the proof of \Cref{lem:main-keep}. 
\begin{remark}
The proof of \Cref{lem:main-keep} implies a bound of a random walk with flexible step size (rather than $-1$ and $+1$). As the analysis of such type of random walk may be useful in other settings as well, we 
abstract out this problem in \Cref{sec:rand-walk} and analyze it directly. 
\end{remark}

%%
% !TeX root = main.tex 
%!TEX root = main.tex
\newcommand{\algtopk}{\ensuremath{\textnormal{\textsc{Federated-Game-of-Coin}}}\xspace}

\newcommand{\Kings}{\ensuremath{\textnormal{\textsf{KINGS}}}\xspace}

\newcommand{\pcoin}{\ensuremath{\overline{\coin}}\xspace}
\newcommand{\pking}{\ensuremath{\overline{\king}}\xspace}

\newcommand{\defeat}{\ensuremath{{D}}\xspace}

\newcommand{\Ntr}{\ensuremath{N_{\textnormal{trial}}}}

\newcommand{\pevent}{\ensuremath{\event_{\textnormal{pivot}}}}

\section{Top $k$ Most Biased Coins: An $O(k)$-Coin Memory Algorithm}\label{sec:top-k}

We now consider the more general problem of finding the $k$ most biased coin for any integer $k \geq 1$. In this problem, 
we have a collection of coins $\set{\coin_i}_{i=1}^{n}$ arriving in a stream; for simplicity of notation, we use $\coin_{[i]}$ to denote the $i$-th most biased coin among these. 
Our goal is then to find the $k$ coins with largest biases, namely, $\set{\coin_{[1]},\ldots,\coin_{[k]}}$ (in no particular order) for a given integer $k \geq 1$. The gap parameter for this problem, denoted by $\Delta_k$,
is now defined as the gap between the bias of the $k$-th most biased coin and $(k+1)$-th one, namely $\coin_{[k]}$ and $\coin_{[k+1]}$. 

We present a streaming algorithm for this problem with asymptotically optimal space complexity as well as sample complexity (by the lower bound of~\cite{KalyanakrishnanTAS12}). 

\begin{theorem}[Formalization of~\Cref{res:main2}]\label{thm:top-k}
	There exists a streaming algorithm that given an integer $k \geq 1$, $n$ coins arriving in a stream with gap parameter $\Delta_k$ (between $k$-th and $(k+1)$-th most biased coins) and confidence parameter $\delta \in (0,1/2)$, finds the $k$ most biased coins with 
	probability at least $1-\delta$ using $O(\frac{n}{\Delta^2} \cdot \log{(k/\delta)})$ coin tosses and a memory of $O(k)$ coins.  
\end{theorem}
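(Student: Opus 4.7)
The plan is to extend the single-pivot \mainalg{} strategy from \Cref{thm:main} to maintain $k$ candidate coins instead of one. I would keep a set $\Kings$ of at most $k$ coins in memory together with a distinguished pivot $\pking \in \Kings$, and process each arriving coin by running exactly the lazy, budgeted challenge subroutine of \mainalg{} against $\pking$ only (never against the other kings). All parameters are rescaled so that each challenge is calibrated with confidence $\Theta(\delta/k)$ instead of $\Theta(\delta)$: this raises the number of samples per level from $s_\ell = \Theta(\log(1/\delta)/\Delta_k^2)\cdot r_\ell$ to $\Theta(\log(k/\delta)/\Delta_k^2)\cdot r_\ell$ and correspondingly raises the per-arrival budget to $b = O(\log(k/\delta)/\Delta_k^2)$, which is the source of the $\log(k/\delta)$ factor. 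With this rescaling, the amortized argument of \Cref{clm:main-sample} carries over verbatim to give the deterministic sample bound $O(n\log(k/\delta)/\Delta_k^2)$.

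The correctness analysis rests on two invariants that I aim to maintain with total failure probability at most $\delta$:
(P1) at all times $\pking$ is the \emph{empirically weakest} coin currently in $\Kings$ --- in particular, whenever $\Kings$ contains a non-top-$k$ coin, the pivot is one such coin;
(P2) once a top-$k$ coin $\coin_{[i]}$ has been admitted into $\Kings$, it is never evicted thereafter.
Assuming these invariants, when $\coin_{[i]}$ arrives and challenges $\pking$, either $\pking$ is a non-top-$k$ coin (so the bias gap is at least $\Delta_k$ and the analysis of \Cref{lem:main-pick} with $\delta$ replaced by $\delta/k$ guarantees admission) or all $k$ top coins are already in $\Kings$ (in which case we do not need $\coin_{[i]}$ to be re-added). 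A union bound over the $k$ top coins absorbs $\delta/2$ of the error budget. For (P2), the random-walk / sub-exponential / Bernstein argument of \Cref{lem:main-keep} lifts essentially verbatim, again with the $\delta/k$ rescaling and a union bound over the $k$ top coins, absorbing the remaining $\delta/2$.

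The technical crux, as flagged in the paper's overview, lies in realizing (P1) without paying $\Omega(nk)$ samples and despite the fact that two top-$k$ coins may have no bias gap between them (so a naive ``pivot $=$ weakest king'' subroutine cannot be implemented by direct comparisons among the kings). The plan is: whenever an arriving coin $\coin$ defeats the current $\pking$, promote $\coin$ into $\Kings$ in place of $\pking$ and select a new pivot by running a short internal sub-tournament that compares each king of $\Kings$ only against $\pking$ (which has just been \emph{certified} to have low empirical bias). The key accounting step is to amortize the cost of each such sub-tournament against the budget that $\coin$ already accumulated while dethroning $\pking$: since by (P2) only non-top-$k$ coins can ever be dethroned once the algorithm stabilizes, and each dethroning requires a number of samples on the order of the budget increment, the total cost of all pivot-maintenance rounds is within a constant factor of $n\cdot b$.

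The main obstacle I expect is exactly this sub-tournament analysis: one must simultaneously (i) bound the number of pivot swaps (those that involve a top-$k$ coin entering $\Kings$ are at most $k$; the remaining ``spurious'' swaps must be charged to the defeating coin's budget), (ii) show that each sub-tournament's sample cost is linear in $k$ but amortizable against a dethronement cost of order $k$ units of budget via the Bernstein-type bound of \Cref{lem:main-keep}, and (iii) verify that the new pivot chosen by the sub-tournament is, with probability at least $1-\delta/(kn)$, a non-top-$k$ coin whenever $\Kings$ still contains one --- so that (P1) persists. Once these three ingredients are in place, combining them with the amortized sample analysis yields the stated bound $O(n\log(k/\delta)/\Delta_k^2)$ with memory $O(k)$, completing the proof of \Cref{thm:top-k}.
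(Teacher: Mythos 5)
Your plan diverges substantially from the paper's, and the step it hinges on fails. Invariant (P1) --- that the pivot $\pking$ is a non-top-$k$ king whenever one exists --- cannot be maintained by any comparison-based sub-tournament, because the only gap you are promised is between $\coin_{[k]}$ and $\coin_{[k+1]}$: two non-top-$k$ kings, or a non-top-$k$ king and the just-evicted pivot, may have (nearly) identical biases, and no number of samples separates them. Concretely, if every surviving non-top-$k$ king has bias slightly above that of the evicted reference coin (by an amount far below $\Delta_k$), then every king --- top-$k$ or not --- wins its comparison against the reference with probability at least about $1/2$, amplification buys nothing since there is no gap, and your sub-tournament has no way to single out a non-top-$k$ king; the claimed per-round failure probability of $\delta/(kn)$ is unattainable, and with constant probability the new pivot is a top-$k$ coin while non-top-$k$ kings remain. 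Once that happens, the rest of your argument collapses: a later-arriving top-$k$ coin challenges a top-$k$ pivot with no gap, so neither the admission argument (\Cref{lem:main-pick}) nor the survival argument (\Cref{lem:main-keep}) applies --- the arrival may fail to enter $\Kings$, or may deplete the pivot's budget and irrevocably evict a top-$k$ coin --- and the non-top-$k$ kings that are never made pivot are never purged, so $\Kings$ need not converge to the top $k$. This is exactly the obstruction the paper identifies: a top-$k$ coin exhausting a king's budget is behaviorally indistinguishable from the desired event of it defeating a non-top-$k$ king.

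The sample-complexity amortization also does not close. A dethronement can cost as little as $\Theta(b)$ tosses (a freshly selected pivot holds only $O(b)$ budget, which a stronger or equal-bias arrival exhausts within one or two levels), while each sub-tournament costs $\Theta(k \cdot s_1) = \Theta(k b)$; on a stream of non-top-$k$ coins with slowly increasing biases the pivot is dethroned every $O(1)$ arrivals, so charging sub-tournaments to dethronements yields $\Omega(n k b)$ total tosses, not $O(n b)$. The paper's proof is structured differently precisely to avoid both problems: it buffers $\Theta(k)$ arrivals, picks the pivot uniformly at random from the buffer, runs one trial comparing the pivot against all kings and all buffer coins, and discards only when the pivot loses at least $k$ times (an event of constant probability that removes $\Omega(k)$ coins, giving $O(n/k)$ trials in expectation so each trial can afford $\Theta(k b)$ tosses); crucially, it abandons the invariant that top-$k$ coins remain kings, letting them float between $\Kings$ and the buffer, and controls the damage via the risky-trial/non-risky-chunk decomposition and a potential argument bounding the number of risky trials. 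To repair your design you would need a pivot rule that never requires certifying a king as non-top-$k$, together with a trial granularity of $\Theta(k)$ arrivals to pay for the $\Theta(k)$ comparisons --- which leads essentially back to the paper's buffered random-pivot scheme.
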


\subsection{High Level Overview} 

We follow the same  ``budgeting'' strategy as our algorithm in \Cref{sec:top-k}. However, as stated  in~\Cref{sec:intro}, there are two main challenges that we need to address: (1) 
we now need to maintain $k$  ``kings'', namely, $\Kings = \set{\king_1,\ldots,\king_k}$ but can no longer compare each arriving coin with (or assign
a unit of budget to) every king  (otherwise there will be $\Omega(nk)$ samples); and (2) we need to collect all the top-$k$ coins and still cannot guarantee any suitable (probabilistic) outcome while comparing any two of 
these coins to each other (as there may be no gap between their biases). %We elaborate on each part below. 

There is a natural way for addressing the first challenge: instead of comparing each arriving coin with the $k$ $\king$-coins using $O(k)$ coin tosses, {delay} processing of arriving coins, by storing them in a \emph{buffer} $B$, until we collect roughly $k$ of them; then handle all these coins using $O(k\log{k})$ coin tosses in total by running the following \emph{trial}: pick a \emph{pivot} coin from $B$, compare this pivot with every $\king$ and every coin in $B$, and 
\emph{prune} the buffer by discarding any coin with empirical bias less than the pivot in this trial. 
Assuming we prune a \emph{constant fraction} of the buffer per each trial (which seems doable, at least in \emph{expectation}, by picking the pivot \emph{randomly}), we can spend $O(k\log{k})$ coin tosses per trial and sample $O(n\log{k})$ coins in total. Finally, to compare a $\king$ with a pivot, we can use the challenge subroutine (in our algorithm in~\Cref{sec:main}): allow
any $\king$  to use its budget and only consider it lost in a challenge when it exhausts its budget entirely (the coins in the buffer will not collect any budget). 
We can also allocate $O(k\log{k})$ budget per each trial (and \emph{not} per each arriving coin) 
and hope that this should allow us, similar to~\Cref{sec:main}, to argue that any top-$k$ pivot will win against any non-top-$k$ $\king$ and will later remain in $\Kings$ till the end.  

Except that this actually would \emph{no longer work}, which brings us to the second 
(and the main) challenge raised above. The problem with the above reasoning is that it does not take into account the outcome of challenging a top-$k$ coin as a pivot with another top-$k$ coin as a $\king$. 
In such a challenge, the previous probabilistic guarantees in~\Cref{sec:main} no longer hold as we have no control on the gap between the biases of these coins. 
For instance, it is entirely possible that a top-$k$ pivot completely depletes the budget of a top-$k$ $\king$ and the troublesome part is 
that this is the same exact behavior we would also except from a top-$k$ pivot when challenging a non-top-$k$ $\king$ (with no apparent way of distinguishing between the two cases). 
At the same time, it is also completely possible that the bias of two top-$k$ coins is almost the same
and hence their challenges would be completely noisy. The choice of a top-$k$ pivot also highlights another problem: we need to be very ``cautious'' in the pruning step as when choosing a top-$k$ pivot, we may inadvertently discard
other top-$k$ coins (either in the buffer or among $\Kings$) when they lose to this top-$k$ coin -- note that this goes exactly opposite of our goal of pruning a constant factor of the buffer per each trial. 

We address the latter challenge by relaxing the requirement of the algorithm (and the analysis) in maintaining the top-$k$ coins among $\Kings$ throughout the entire length of the stream (after their arrival). 
In other words, in the course of our algorithm, the top-$k$ coins may float between $\Kings$ (and having a budget) and the buffer $B$ (with no budget). This in turn requires us to relax 
our pruning rule so that the top-$k$ coins in the buffer do not get discarded in a trial: this is done by limiting the cases when a discard can happen (for instance not doing any pruning when the pivot joins the $\Kings$), 
while still ensuring the constant fraction pruning (in expectation) per trial. Finally, the analysis now needs to take into account that a top-$k$ coin may \emph{repeatedly} exhausts its budget and there will 
be periods of trials in the stream when a top-$k$ coin resides in $B$ with no budget (which we refer to as \emph{risky} trials). Fortunately, by modifying the algorithm appropriately, we can limit the \emph{length} and the \emph{frequency} of such 
periods throughout the stream and show that with high (constant) probability, any top-$k$ coin will indeed remain among $\Kings \cup B$ till the end.

\subsection{The Algorithm}

We now present our algorithm in this section. The input to our algorithm is a set of $n$ coins $\set{\coin_i}_{i=1}^{n}$ arriving in an arbitrary order in the stream, the gap parameter $\Delta_k$ (the gap between the bias of $\coin_{[k]}$ and $\coin_{[k+1]}$), 
and the confidence parameter $\delta \in (0,1)$ (the algorithm does not need to know the value of $n$ in advance). We use the following parameters: 
\begin{align*}
	\qquad \qquad \qquad \qquad \qquad &\!\!\!\set{r_\ell}_{\ell=1}^{\infty}: \quad r_\ell = 3^{\ell};
	%r_1 := 4, \qquad \qquad r_{\ell} = 2 \cdot r_{\ell-1}; 
	\\ \tag{intermediate parameters to define the number of samples per each level of the challenge} \\
%	&\alpha = \alpha(\Delta,\delta) :=; \notag \\ \tag{intermediate parameter to define the number of samples per each level of the challenge} \\
	&\!\!\!\set{s_\ell}_{\ell=1}^{\infty}: \quad s_\ell :=  16\cdot\frac{4}{\Delta_{k}^2} \cdot \ln{(k/\delta)} \cdot r_\ell;   \\ \tag{the number of coin tosses per each level of the challenge} \\
	&b := 16\cdot\frac{4}{\Delta_{k}^2} \cdot C \cdot \ln{(k/\delta)} + s_1; \tag{the budget given to each $\king$ once the buffer is full} \\
	&K := 10 \cdot k. \tag{the limit on the size of the buffer}
\end{align*}
\noindent
And our algorithm can be presented as follows: 

\begin{tbox}
	\textbf{Algorithm} $\algtopk$: 
	\begin{enumerate}[label=($\arabic*$)]
		\item\label{linefillking} Initialize $\Kings = \set{\king_1,\ldots,\king_k}$ by the first $k$ arriving coins and let $B$ be the buffer.  
		\item For any $\king_i \in \Kings$, define the budget $\Budget_i := \budget{\king_i}$ which is initialized to $0$. 
		\item\label{linefillbuffer} While number of coins in $B$ is less than $K$, add the next coin in the stream to $B$. 
		\item \textbf{\underline{Trial subroutine:}} Otherwise, run the following \emph{trial}: 
		\begin{enumerate}[label=($\alph*$)] 
			\item\label{linepickpivot} Pick a \emph{pivot} $\pcoin$ uniformly at random from $B$. Increase the budget $\Budget_i$ of $\king_i$ by $b$. 
			%\item for each $\king$ ($\king_{1}$ to $\king_{k}$) by $b$. 
			\item \textbf{\underline{Buffer-challenge:}} For each $\coin_{i} \in B$: sample both $\coin_i$ and $\pcoin$ for $s_1$ times and record which one had a higher empirical bias. 
			\item \textbf{\underline{King-challenge:}} For each $\king_{i} \in \Kings$: run the \textbf{challenge subroutine} of $\mainalg$ between $\pcoin$ and $\king_i$ (with new
			 $\set{s_\ell}$ and budget $\budget{\king_i}$) and record which coin won the 
			challenge (but do not discard any coin). 
		    \item Let $\defeat$ denote the recorded number of times $\pcoin$ was defeated in the trial. 
		    \item \emph{Discard case:} If $\defeat \geq k$: discard $\pcoin$, any coin in $\Kings \cup B$ that lost to $\pcoin$. 
		    Then fill up the remainder of $\Kings$ with the  arriving coins of the stream and go to \ref{linefillbuffer}.
		    \item \emph{Swap case:} If $\defeat < k$: pick $\pking$ uniformly at random coins in $\Kings$ that were defeated by $\pcoin$ (such a coin should exists) 
		    and swap $\pcoin$ and $\pking$, i.e., make $\pcoin$ a new king (with zero budget) and add $\pking$ to $B$. 
		    Then repeat the trial by going to~\ref{linepickpivot}. 
	%	    \item Otherwise, pick $\overline{\king}$ uniformly at random from the $k$ $\king$s, and swap $\overline{\king}$ and $\overline{\coin}$ and go to line~\ref{linepickpivot}.
		\end{enumerate}
		\item\label{linefinalsample} At the end, sample each of the coins in $\Kings \cup B$ for $s_{1}$ times and return the top-$k$ ones according to their empirical bias as the answer. 
	\end{enumerate}
\end{tbox}
\noindent
This concludes the description of the algorithm. We note that at this point, the bound on the sample complexity of this algorithm is \emph{in expectation} and not deterministically. For simplicity of exposition, we analyze this variant of the algorithm first
and then point out, in~\Cref{rmk:trial-high-prob}, how to change this slightly so that the algorithm \emph{never} (deterministically) uses more than a fixed certain number of coin tosses bounded by $O(\frac{n}{\Delta_k^2} \cdot \log{(k/\delta)})$ (this extension is straightforward). 
We present the analysis of the algorithm in the next section. 

\subsection{The Analysis} 

There are two main parts in the analysis. Unlike our $\mainalg$ algorithms, bounding the sample complexity of this new algorithm is not straightforward and requires a careful analysis which is the subject of the following lemma. 
\begin{lemma}\label{lem:topk-sample}
	The expected number of coin tosses by the algorithm is $O(\frac{n}{\Delta_{k}^{2}}\cdot\log(\frac{k}{\delta}))$. 
\end{lemma}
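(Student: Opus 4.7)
The plan is to express the total cost as (number of trials) times (amortized cost per trial), bound each factor, and multiply.

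First, I would control the per-trial cost. The buffer-challenge has deterministic cost $2K\cdot s_{1}=O(k\cdot\log(k/\delta)/\Delta_{k}^{2})$ per trial. The king-challenge cost is variable, because the challenge subroutine can reach arbitrarily high levels, but exactly as in $\mainalg$ each king-challenge only spends from that king's accumulated budget, and each trial deposits $b=O(\log(k/\delta)/\Delta_{k}^{2})$ per king (totaling $kb$ per trial); charging two samples per unit of budget, the king-challenge cost summed over all trials is at most $2(\textnormal{\# trials})\cdot kb$. Hence the amortized per-trial cost is $O(kb)=O(k\log(k/\delta)/\Delta_{k}^{2})$, and it suffices to prove $\Ex[\textnormal{\# trials}]=O(n/k)$.

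To bound the expected number of trials, I would partition the execution into \emph{phases}, where a phase is a (possibly empty) run of swap trials followed by exactly one discard trial. Two claims suffice: (a) the expected number of swap trials within a single phase is $O(1)$, so the total number of trials is within a constant factor of the number of discard trials; and (b) the expected number of coins discarded in a discard trial is $\Omega(k)$, so the number of discard trials is $O(n/k)$, since each discarded coin is refilled from the stream and the stream has length $n$.

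For (a), the pivot is drawn uniformly from $B$ at the start of each trial. Modulo comparison noise, the trial is a discard trial exactly when $\pcoin$ is not a top-$k$ coin of $\Kings\cup B$; since $|\Kings\cup B|=11k$ but at most $k$ of its coins can be top-$k$, at most $k$ buffer coins are top-$k$, so $\Pr[\textnormal{discard}]\geq 9/10$, and the number of swaps per phase is stochastically dominated by a geometric random variable with constant parameter. For (b), conditioning on a discard trial, the pivot's rank in $\Kings\cup B$ is $\geq k$; since the $10k$ buffer coins collectively occupy $10k$ of the $11k$ ranks in $\{0,\ldots,11k-1\}$, at least $9k$ of those ranks lie in $[k,11k-1]$, and a straightforward averaging argument yields $\Ex[\textnormal{\# losers}\mid\textnormal{discard}]=\Omega(k)$.

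The main obstacle will be the noise in comparisons between pairs of coins with small bias gap, in particular any two top-$k$ coins, whose bias gap can be arbitrarily small; neither the challenge subroutine nor the single-shot buffer comparison gives a useful concentration bound for such pairs, so the rank-based reasoning above is not literal. I would handle this by observing that the events driving (a) and (b) only depend on comparisons across the top-$k$/non-top-$k$ boundary, where the bias gap is $\geq\Delta_{k}$ and any single comparison fails with probability $\exp(-\Omega(\log(k/\delta)))$ by \Cref{lem:coin-comp}; a union bound over all comparisons made by the algorithm absorbs these errors, while comparisons strictly within the top-$k$ set can only flip a swap/discard decision on a top-$k$ pivot and thus contribute at most $O(1)$ additional expected trials per phase. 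Combining (a), (b), and the per-trial cost then yields the claimed expected bound $O(n\log(k/\delta)/\Delta_{k}^{2})$.
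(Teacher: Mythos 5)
The first half of your proposal (amortized per-trial cost: buffer-challenges cost $O(k\cdot s_1)$ deterministically, king-challenges are charged to the $k\cdot b$ budget deposited per trial, so the total is $O(k\cdot b\cdot \Ntr)$) is exactly the paper's \Cref{clm:comp-with-trial}. The gap is in your bound on the expected number of trials. Both of your claims (a) and (b) rest on the ``modulo comparison noise'' characterization that a trial is a discard trial iff $\pcoin$ is not among the local top-$k$ of $\Kings\cup B$, and that a discard trial then sheds all locally lower-ranked coins. Your proposed repair --- union-bounding over comparisons across the global top-$k$/non-top-$k$ boundary, where the gap is $\geq\Delta_k$, and waving off comparisons within the top-$k$ set --- does not cover the comparisons that actually decide these events. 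The gap parameter $\Delta_k$ says nothing about two coins on the \emph{same} side of the boundary: two non-top-$k$ coins (or the pivot and the coins adjacent to it in local rank) can have arbitrarily small bias gaps, so those comparisons are essentially fair coin flips, not low-probability ``errors'' that a union bound via \Cref{lem:coin-comp} can absorb. In such configurations a non-top-$k$ pivot of local rank just below $k$ need not lose to $k$ coins, so $\Pr[\textnormal{discard}]\geq 9/10$ is unjustified; and conditioning on ``discard'' (itself a noise-dependent event) does not license the claim that the pivot's true rank is $\geq k$, which your averaging in (b) needs. A second structural issue is the kings' accumulated budget: a king whose bias is \emph{below} the pivot's may still win its king-challenge (it only needs to win one level before its budget runs out), so it neither counts as a loser nor gets discarded --- again not a rare event that can be union-bounded away.

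The paper's \Cref{clm:trial-num} sidesteps all of this with a one-sided, unconditional argument: it sorts $\Kings\cup B$ by true bias into $Top$ ($5k$ coins), $Mid$ ($2k$), and $Bot$ ($4k$), notes the pivot lands in $Mid$ with probability $\geq 1/10$, and then uses only the facts that a truly-higher-bias coin loses (a single comparison, or an entire king-challenge since it must already lose level one) with probability at most $1/2$, and that a truly-lower-bias \emph{non-king} buffer coin wins with probability at most $1/2$ --- no gap assumption, no conditioning on noisy events, and kings are simply excluded from the set of coins required to lose (hence the $4k$-wide $Bot$ band). Markov's inequality over these wide margins gives a constant probability, in \emph{every} trial regardless of history, of the joint event $\pevent$ that the pivot both loses to $\geq k$ coins (forcing the discard case) and beats $\geq k$ non-king coins (so $\geq k$ coins are actually discarded); stochastic domination then yields $\expect{\Ntr}=O(n/k)$. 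If you want to keep your phase decomposition, you would need to replace both (a) and (b) with margin/averaging arguments of this one-sided type rather than the rank-plus-union-bound reasoning you describe.
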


The main part however as before is to prove the correctness of the algorithm, which is done by the following lemma. 

\begin{lemma}\label{lem:topk-correct}
	The probability that even a single $\coin_{[j]}$ for $j \in [k]$ is discarded before the end of the stream (before~Line~\ref{linefinalsample}) is is at most $\frac{\delta}{2}$.
\end{lemma}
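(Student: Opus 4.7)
The plan is to reduce correctness to an event that I'll call a \emph{clean execution}: every comparison performed during the algorithm (either a single buffer-challenge between two coins, or an entire king-challenge subroutine) whose two participants have biases differing by at least $\Delta_k$ returns the correct winner. I will show (a) under a clean execution, no top-$k$ coin is ever discarded, and (b) the clean-execution event has probability at least $1-\delta/2$.

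For (a), consider any iteration in which the discard case is triggered with pivot $\pcoin$ and defeat count $\defeat \geq k$, and suppose a top-$k$ coin $\coin_{[j]}$ is removed in this step. Either $\coin_{[j]} = \pcoin$, or $\coin_{[j]} \in \Kings \cup B$ and lost to $\pcoin$ in this trial. In the first sub-case, $\defeat \geq k$ forces at least $k$ coins to defeat $\coin_{[j]}$; under a clean execution, only the other top-$k$ coins can do so (every non-top-$k$ coin has bias at least $\Delta_k$ below $\coin_{[j]}$), but there are only $k-1$ such coins, a contradiction. In the second sub-case, either $\pcoin$ is non-top-$k$ (then $\coin_{[j]}$ losing to $\pcoin$ is itself an unclean comparison), or $\pcoin$ is top-$k$ (then the first sub-case argument applied to $\pcoin$ shows $\defeat < k$ under cleanness, contradicting the discard case). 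This case analysis proves (a).

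For (b), I bound the failure probability of each comparison and then union-bound over the total count. A single buffer-challenge uses $s_1 = \Theta(\Delta_k^{-2}\log(k/\delta))$ samples, so by \Cref{lem:coin-comp} it errs on a gap-$\Delta_k$ pair with probability at most $2(\delta/k)^{c}$ for a constant $c$ controlled by the constant $16$ in the definition of $s_\ell$. A king-challenge is a geometric cascade of level-$\ell$ comparisons with $s_\ell$ samples and initial budget exceeding $s_1$ (since $b \geq s_1$); repeating the geometric-sum argument from the proof of \Cref{lem:main-pick} yields the same polynomial-in-$\delta/k$ failure bound. Next, the total number of trial iterations (counting swap-repetitions) is $O(n/k)$: each iteration deterministically executes $K-1$ buffer-challenges of cost $s_1$ each, and the total cost is $O(n\Delta_k^{-2}\log(k/\delta))$ by \Cref{lem:topk-sample} together with the deterministic modification alluded to in \Cref{rmk:trial-high-prob}. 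Since each iteration involves $O(k)$ comparisons, at most $O(n)$ gap-$\Delta_k$ comparisons occur overall. A union bound of $O(n)\cdot(\delta/k)^c \leq \delta/2$ holds by choosing the constant in $s_1$ sufficiently large.

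The main obstacle is controlling the king-challenge error probability uniformly over the king's current budget, which varies over the run: a freshly promoted king (e.g.\ one just swapped in) has zero budget and could be declared lost prematurely if the first challenger's increment $b$ does not cover a full level-$1$ comparison. The definition $b = 16\cdot 4\Delta_k^{-2}C\ln(k/\delta) + s_1$ is tuned precisely so that $b \geq s_1$, ensuring every challenge reaches level $1$; thereafter the same geometric tail argument as in \Cref{sec:main} applies. A secondary subtlety is that the sample-complexity bound of \Cref{lem:topk-sample} is in expectation, so to get a deterministic cap on the number of iterations needed for the union bound, one must first invoke the modification from \Cref{rmk:trial-high-prob} that halts the algorithm if the cumulative cost exceeds the target.
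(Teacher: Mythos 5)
Your part (a) is a sound deterministic implication and in fact mirrors the paper's own event decomposition inside a single trial (a non-top-$k$ pivot defeating a top-$k$ coin, versus a top-$k$ pivot accumulating $\defeat \geq k$ losses, which forces a loss to some non-top-$k$ coin since only $k-1$ other top-$k$ coins exist). The fatal problem is in part (b). Your ``clean execution'' event demands that \emph{every} gap-$\Delta_k$ comparison over the whole run be correct, and you union bound $O(n)$ such comparisons against a per-comparison failure probability of order $(\delta/k)^{c}$. But the sample sizes $s_\ell$ scale only with $\ln(k/\delta)$ -- by design, since otherwise the sample complexity would not be $O(n\log k)$ -- so the per-comparison failure probability is a quantity depending on $k$ and $\delta$ alone, with no dependence on $n$. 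No choice of the constant in $s_1$ makes $O(n)\cdot(\delta/k)^{c}\leq \delta/2$ for all $n$: once $n \gg (k/\delta)^{c}$ the bound is vacuous. Fixing this inside your framework would force $s_1 = \Theta(\Delta_k^{-2}\log(nk/\delta))$, i.e.\ exactly the naive $O(n\log n)$-sample algorithm the theorem is meant to beat. Relatedly, demanding that a top-$k$ king win every challenge outright is stronger than what the algorithm needs: the budget exists precisely so that a king may lose individual levels occasionally, and correctness must rest on budget accounting rather than on per-comparison correctness.

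The paper avoids the $n$-sized union bound with two ingredients you do not have. First, it splits the stream into \emph{risky trials} (some top-$k$ coin sits in the buffer) and maximal \emph{non-risky chunks}. For a non-risky chunk -- which may contain $\Theta(n/k)$ trials -- it does not union bound over trials at all; instead it re-runs the budget argument of \Cref{lem:main-keep}: the per-trial overshoot of a top-$k$ king's challenge cost over its budget increment is a sub-exponential random variable, and Bernstein's inequality gives a survival probability of $1-\delta^{2}/k^{15}$ for the \emph{entire} chunk, independent of its length. Second, for risky trials it proves (conditioned on the per-trial event, which is where your case analysis lives) that each risky trial promotes a top-$k$ coin from the buffer into $\Kings$ with probability $\Omega(1/k^{2})$ and that the count of top-$k$ kings never decreases, so the total number of risky trials is $\poly(k/\delta)$ with probability $1-\delta/k^{3}$; only over these $\poly(k/\delta)$ many trials (and the correspondingly many chunks) does it union bound the $(\delta/k)^{O(1)}$-probability failures. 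Without an analogue of both steps -- a length-independent concentration argument for the long stretches and a potential argument bounding the number of dangerous stretches -- your union bound cannot be repaired.
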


In the following, we first prove each of these two lemmas and then show that how~\Cref{thm:top-k} follows easily from these results. 
%%Finally, we also have the following simple lemma that states that the very final step of the algorithm also correctly returns the set of top-$k$ coins. The proof of this lemma follows from our earlier results (and also from known results in the literature
%%since we can simply run any standard algorithm for finding top-$k$ coins on these set of $O(k)$ coins at the end). 
%%
%%\begin{lemma}\label{lem:topk-buffer-sample}
%%Assuming $\Kings \cup B$ contains the top-$k$ coins at the end of the stream (in Line~\ref{linefinalsample}), the algorithm will return the top-$k$ most biased coins with probability at least $1-\delta/2$. 
%%\end{lemma}
%%

\subsection*{Proof of \Cref{lem:topk-sample} (Sample Complexity)}

Let us recall that in the algorithm, coin tossing happens only during a \emph{trial} in the trial subroutine (ignoring the last $O(k \cdot s_1)$ samples in~\ref{linefinalsample} which are clearly within the desired sbounds on sample complexity
 by definition of $s_1$), namely, when the buffer is full and we pick a pivot for challenging the other coins. Let $\Ntr$ denote the \emph{number of trials} in the algorithm. We have the following claim based on a similar amortized analysis
 as in our $\mainalg$ algorithm. 
 
\begin{claim}\label{clm:comp-with-trial}
The total number of coin tosses in the algorithm is $O(k\cdot b\cdot \Ntr)$.
\end{claim}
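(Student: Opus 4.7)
The plan is to account for all coin tosses by partitioning them into three sources: (i) the buffer-challenges inside each trial, (ii) the king-challenges inside each trial, and (iii) the final round of samples in \ref{linefinalsample}. Sources (i) and (iii) are easy to handle by a direct per-trial (resp.\ one-shot) bound, while (ii) is the main source that requires a careful amortization against the budgets of the kings, in the same spirit as \Cref{clm:main-sample}.

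For (i), a buffer-challenge iterates over all $\coin_i \in B$ and samples both $\coin_i$ and the pivot $\pcoin$ for $s_1$ times each. Since $|B| \leq K = 10k$ and $s_1 \leq b$ by the choices of parameters, the total cost of the buffer-challenge in a single trial is at most $2\cdot K\cdot s_1 = O(k \cdot b)$. Summing over the $\Ntr$ trials gives $O(k\cdot b\cdot \Ntr)$ coin tosses from buffer-challenges. For (iii), \ref{linefinalsample} samples each coin in $\Kings \cup B$ exactly $s_1$ times, incurring $O(k\cdot s_1) = O(k\cdot b)$ coin tosses, which is absorbed into $O(k\cdot b\cdot \Ntr)$ (in the degenerate case $\Ntr = 0$ the algorithm performs only step~\ref{linefinalsample} and this term is trivially $O(k\cdot b)$).

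The key step is (ii). The plan is to amortize king-challenge samples against the budgets $\Budget_i$ of the kings. Budget is increased in exactly one place, namely step~\ref{linepickpivot} at the start of each trial, where every $\king_i \in \Kings$ receives an additional $b$ units; crucially, when a swap occurs the new $\king$ starts with \emph{zero} budget, so no hidden source of budget is introduced. Hence the \emph{total} budget ever allocated over the entire execution is exactly $k \cdot b \cdot \Ntr$. On the other hand, inside the challenge subroutine invoked during a king-challenge, every level $\ell$ that is actually entered deducts $s_\ell$ units from the king's budget and performs $2 s_\ell$ coin tosses (one for the king, one for $\pcoin$); the subroutine terminates as soon as the budget would go negative. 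Therefore the total coin tosses across all king-challenges is at most $2$ times the total budget ever deducted, which is at most $2$ times the total budget ever allocated, i.e.\ $O(k \cdot b \cdot \Ntr)$.

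Adding the three contributions yields the claimed bound $O(k\cdot b\cdot \Ntr)$. The only subtlety worth double-checking is that no coin toss escapes this accounting, in particular that sampling inside the challenge subroutine truly pairs with a budget decrement of the corresponding king (and not of, say, the buffer coin $\pcoin$); this follows directly from the description of the challenge subroutine of $\mainalg$, which is the very subroutine invoked by the king-challenge step. No probabilistic argument is needed for this claim — the bound holds deterministically once $\Ntr$ is fixed, and the expectation in \Cref{lem:topk-sample} will be taken later by separately bounding $\Ex[\Ntr]$.
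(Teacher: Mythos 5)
Your proposal is correct and follows essentially the same approach as the paper's proof: amortizing king-challenge tosses against the total budget $k\cdot b\cdot \Ntr$ allocated in step~\ref{linepickpivot}, and bounding buffer-challenge tosses per trial directly via $K = O(k)$ and $s_1 = O(b)$. The only cosmetic difference is that you fold the $O(k\cdot s_1)$ cost of step~\ref{linefinalsample} into the claim, whereas the paper sets it aside before stating the claim; either way the accounting is the same.
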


\begin{proof}
In each trial, each $\king$ will be given a budget of $b$ and so the total budget given to all kings throughout the algorithm is $(k\cdot b\cdot \Ntr)$. This ensures that the total number of coin tosses in king-challenges is at most $O(k \cdot b \cdot \Ntr)$.
Moreover, during buffer-challenges, any coin in the buffer will also be tossed $s_1$ times if it is not the pivot and $K \cdot s_1$ times if it is the pivot. Since $K = O(k)$, and $s_1 = O(b)$, we obtain that the total number of coin tosses in buffer-challenges
is also $O(k\cdot b\cdot \Ntr)$, finalizing the proof. 
\end{proof}

\Cref{clm:comp-with-trial} implies that we can bound the sample complexity of the algorithm by bounding $\Ntr$ which is the content of the next claim. 

\begin{claim}\label{clm:trial-num}
The expected number of trials is $\expect{\Ntr} = O(\frac{n}{k})$.
\end{claim}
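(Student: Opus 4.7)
My strategy is to establish a uniform per-trial lower bound $\expect{D_i \mid \mathcal{F}_{i-1}} \geq c\,k$ on the number of coins discarded in trial $i$, where $\mathcal{F}_{i-1}$ is the history of the algorithm through trial $i-1$ and $c > 0$ is an absolute constant. Combined with the deterministic observation that $\sum_{i=1}^{\Ntr} D_i \leq n$ (since every discarded coin must have arrived from the stream at most once), the tower property gives
\[
c\,k \cdot \expect{\Ntr} = c\,k \sum_{i \geq 1} \Pr\paren{\Ntr \geq i} \leq \sum_{i \geq 1} \expect{\mathbf{1}[\Ntr \geq i] \cdot \expect{D_i \mid \mathcal{F}_{i-1}}} = \expect{\sum_{i=1}^{\Ntr} D_i} \leq n,
\]
which yields $\expect{\Ntr} = O(n/k)$ as desired.

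\textbf{The per-trial lower bound.} Fix the state of $S := \Kings \cup B$ at the start of some trial, so $|S| = K + k = 11k$, and call the $k$ highest-bias coins of $S$ \emph{heavy} and the remaining $K = 10k$ \emph{light}. Let $W$ denote the empirical wins of the pivot $\pcoin$ against the coins of $S \setminus \{\pcoin\}$; then $D_i = (1 + W) \cdot \mathbf{1}[\text{discard case}]$. I would establish two ingredients. \textbf{(i)} Define the good event $\mathcal{E}$ that every in-trial pairwise empirical comparison between coins whose true biases differ by $\geq \Delta_k$ returns the correct winner. By \Cref{lem:coin-comp} applied with $\theta = \Delta_k$, the choice $s_1 = \Theta(\log(k/\delta)/\Delta_k^2)$ makes each such comparison err with probability $O(\delta/k^c)$ for a large constant $c$, so a union bound over the $O(k)$ in-trial comparisons (the multi-level king-challenge is only easier since higher levels use more samples, and since every king enters the trial with its budget freshly replenished by $b \geq s_1$) gives $\Pr(\mathcal{E}) \geq 1 - O(\delta)$. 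On $\mathcal{E}$, a light pivot loses to every heavy coin in $S$ (each having true bias at least $\Delta_k$ higher), triggering the discard case. Since at most $k$ of the $K = 10k$ buffer coins are heavy, a uniformly random pivot is light with probability $\geq 9/10$, so $\Pr(\text{discard}) \geq 9/10 - O(\delta) \geq 4/5$. \textbf{(ii)} For any realization of the samples, summing empirical wins in $B$ across all $K$ choices of pivot yields exactly $\binom{K}{2}$ (each unordered buffer pair contributes to exactly one side), so the uniform choice of pivot has expected buffer-wins equal to $(K-1)/2$; adding the nonnegative expected king-wins yields $\expect{W} \geq (K-1)/2$.

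\textbf{Combining and main obstacle.} Using the trivial upper bound $W \leq K + k - 1$ in the swap case and $\Pr(\text{swap}) \leq 1/5$,
\[
\expect{D_i} = \Pr(\text{discard}) + \expect{W} - \expect{W \cdot \mathbf{1}[\text{swap}]} \geq \tfrac{4}{5} + \tfrac{K-1}{2} - \tfrac{K + k - 1}{5} = \tfrac{1}{2} + \tfrac{14k}{5} = \Omega(k),
\]
since $K = 10k$. This bound holds for \emph{every} prior state, so it gives the uniform per-trial lower bound $\expect{D_i \mid \mathcal{F}_{i-1}} \geq c\,k$ and closes the argument. The main technical subtlety will be verifying $\Pr(\mathcal{E}) \geq 1 - O(\delta)$ carefully, which requires confirming that the multi-level king-challenge subroutine still returns the correct outcome for $\Delta_k$-gap king-pivot pairs with failure probability $O(\delta/k^c)$; this reduces (level by level) to \Cref{lem:coin-comp} and terminates at level~$1$ with overwhelming probability whenever the bias gap is $\geq \Delta_k$. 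Every other step — the combinatorial identity $\sum_{\pcoin}(\text{buffer-wins of } \pcoin) = \binom{K}{2}$ and the trivial upper bound on swap-case wins — is immediate.
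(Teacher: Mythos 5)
Your overall skeleton matches the paper's: lower-bound the expected number of coins discarded per trial by $\Omega(k)$, then charge discards against the at most $n$ coins that ever arrive (your Wald-type accounting vs.\ the paper's stochastic-domination argument are interchangeable here). The gap is in how you get the per-trial bound, specifically in step (i). You split $S = \Kings \cup B$ into the $k$ highest-bias coins of $S$ (``heavy'') and the rest (``light''), and then assert that on the event $\mathcal{E}$ a light pivot loses to every heavy coin because each heavy coin ``has true bias at least $\Delta_k$ higher.'' That parenthetical is false in general: $\Delta_k$ is the gap between the \emph{global} $k$-th and $(k+1)$-th most biased coins, and the current memory set $S$ need not contain any of the global top-$k$ coins (e.g., before they arrive, which is exactly the regime this claim must cover). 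The top $k$ coins \emph{of $S$} can be arbitrarily close in bias to the light coins of $S$, so $\mathcal{E}$ — which only constrains comparisons between coins whose true biases differ by at least $\Delta_k$ — says nothing about light-pivot-versus-heavy-coin comparisons. Consequently your bound $\Pr(\text{discard}) \geq 4/5$, and hence $\Pr(\text{swap}) \leq 1/5$, is unjustified, and without it your combining step can even go negative (you need the swap probability to be a small constant for $\tfrac{K-1}{2} - \tfrac{K+k-1}{5}$-type arithmetic to survive).

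The paper's proof deliberately avoids any gap assumption inside $S$: it sorts $S$ by true bias into $Top$/$Mid$/$Bot$, notes the pivot lands in $Mid$ with probability at least $1/10$, and then uses only the distribution-free fact that a coin of (weakly) higher true bias wins a pairwise empirical comparison with probability at least $1/2$, plus Markov, to show that with constant probability the pivot simultaneously loses to at least $k$ coins (triggering the discard case) and beats at least $k$ coins (so at least $k$ coins are actually discarded). A similar rank-based argument (e.g., conditioning on the pivot not being among the top $\Theta(k)$ coins of $S$) is what you would need in place of your $\Delta_k$-gap reasoning to control $\Pr(\text{swap})$. Two secondary points: your symmetry bound $\expect{W} \geq (K-1)/2$ should be stated in expectation (the samples drawn depend on which coin is the pivot, so ``for any realization of the samples, sum over all $K$ pivot choices'' is not well defined, and empirical ties must be accounted for); and note the paper does not need $\expect{W}$ at all — on its good event it already discards $\geq k$ coins, which suffices.
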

\begin{proof}

Consider the following event: 
\begin{itemize}
	\item $\pevent$: the pivot coin $\pcoin$ loses to at least $k$ coins and wins over at least $k$ other coins. 
\end{itemize}
Whenever $\pevent$ happens, we discard at least $k$ coins from the buffer. By lower bounding the probability of this event by a constant, we can then argue that the expected number of coins discarded in each trial is $\Omega(k)$. As the next trial 
can only happen when the buffer again becomes full (thus after $\Omega(k)$ new coins are visited), this will allow us to argue that the expected number of trials before we process the entire stream is $O(n/k)$. 

We now lower bound the probability that $\pevent$ happens by considering a simpler case that ensures $\pevent$. 
The total number of coins in $\Kings \cup B$ is $K+k = 11k$. Let us sort these coins in decreasing order of their biases as $\coin_{(1)}, \coin_{(2)}, \ldots, \coin_{(K+k)}$. We further partition these 
coins into the \emph{top} part $Top := \set{\coin_{(1)},\ldots,\coin_{(5k)}}$, the \emph{middle} part $Mid := \set{\coin_{(5k+1)},\ldots,\coin_{(7k)}}$, and the \emph{bottom} part $Bot := \set{\coin_{(7k+1)}, \cdots, \coin_{(11k)}}$. See~\Cref{fig:strongpivot} 
for an illustration. 

Now firstly note that since we only have $k$ coins, the probability that the pivot is chosen from $Mid$ is at least $\frac{2k-k}{K} = \frac{1}{10}$. In the following, we condition on this event. Note that conditioned on this event, 
any coin in $Top$ would lose to $\pcoin$ with probability at most $1/2$, and any coin in $Bot$ which is \emph{not} in $\Kings$ would win against $\pcoin$ with probability at most $1/2$ (a coin in $Bot$ which is a $\king$ may have collected a lot of budget and thus
still have a more chance of winning against $\pcoin$ even though its bias is less than it). We define the following random variables. 

\begin{itemize}
\item $X_{lose}$: number of coins in $Top$ that lose to $\pcoin$ -- let $X_{win} = \card{Top} - X_{lose}$. 
\item $Y_{win}$: number of coins in $Bot \setminus \Kings$ that win against $\pcoin$ -- let $Y_{lose} = \card{Bot \setminus \Kings} - Y_{win}$. 
\end{itemize}
We thus have $\expect{X_{lose} \mid \pcoin \in Mid} \leq 5k/2$ and $\expect{Y_{win} \mid \pcoin \in Mid} \leq 3k/2$. As such,
\begin{align*}
	\Pr\paren{X_{win} < k \mid \pcoin \in Mid} \leq \Pr\paren{X_{lose} \geq 4k \mid \pcoin \in Mid} \leq \frac{5}{8}, \\
	\Pr\paren{Y_{lose} < k \mid \pcoin \in Mid} \leq \Pr\paren{Y_{win} \geq 2k \mid \pcoin \in Mid} \leq \frac{3}{4},	
\end{align*}
where both inequalities are by Markov bound. Moreover, we have, 
\begin{align*}
	\Pr\paren{X_{win} \geq k \wedge Y_{lose} \geq k \mid \pcoin \in Mid} \leq \paren{1-\frac{5}{8}} \cdot \paren{1-\frac{3}{4}} = \frac{3}{32},
\end{align*}
since these events are independent of each other. However, notice that whenever the event above happens, we would be in the `discard case' of the algorithm (since $\pcoin$ has lost to at least $k$ coins in $Top$) 
and we would discard at least $k$ coins (all the coins in $Y_{lose}$ that belong to $Bot$). Hence, 
\begin{align*}
	\Pr\paren{\pevent} &\geq \Pr\paren{X_{win} \geq k \wedge Y_{lose} \geq k \wedge \pcoin \in Mid}  \\
	&= \Pr\paren{\pcoin \in Mid} \cdot \Pr\paren{X_{win} \geq k \wedge Y_{lose} \geq k \mid \pcoin \in Mid} \\
	&\geq \frac{1}{10} \cdot \frac{3}{24} = \frac{3}{320}  \geq \frac{1}{200}.  
\end{align*}

\begin{figure}[t!]
\centering
\includegraphics[width=1.0\textwidth]{strongpivot.png}
\caption{\label{fig:strongpivot}When picking the pivot between the $(5k+1)$-th and $(7k)$-th most biased coins (namely, from $Mid$) from the current $\Kings$ and buffer, the probability for the \emph{pivot} to win over at least $k$ coins \emph{and} lose against
at least $k$ coins is at least a constant.}
\end{figure}

This implies that the expected number of coins that are discarded in each trial is at least $k/100$. Moreover, note that this lower bound holds in every trial \emph{independent} of the outcome of the past trials (event hough the events between the two trials may not necessarily be independent). This means that the distribution of $\Ntr$ stochastically dominates the distribution of number of times we see a head by tossing a biased coin with probability $1/200$ of showing a head. For the latter distribution we know that 
the expected number of tries before we see $n$ heads is $200 \cdot n/k$ and hence we also have $\expect{\Ntr} \leq 200 \cdot n/k$ (as after seeing $n$ coins the trials are finished). 
\end{proof}

We now formally conclude the proof of \Cref{lem:topk-sample}. Combine \Cref{clm:comp-with-trial} and \Cref{clm:trial-num}, one can observe that the expected number of coin tosses will be $O(k\cdot b\cdot N_{tr}) = O(k \cdot b \cdot \frac{n}{k}) = O(b\cdot n)$. And according to the definition, this is $O(\frac{n}{\Delta_{k}^{2}}\cdot\log(\frac{k}{\delta}))$ as desired.

\begin{remark}\label{rmk:trial-loss-bound}
The lower bound of $\frac{1}{200}$ on the probability of $\pevent$ proved in~\Cref{clm:trial-num} is quite loose and is easy to see several ways of improving it. However, since this bound is already enough for our purpose and in the interest of simplifying the proof, we opted to use this simple argument anyway.  
\end{remark}

\begin{remark}\label{rmk:trial-high-prob}
We remark that the probability that $\Ntr$ is more than twice its expectation is exponentially small in $\Theta(n/k)$ (which we can assume $k$ is at most $\sqrt{n}$ since whenever $k \geq \sqrt{n}$, we can simply toss each coin $O(\log{n})$ times to obtain its `almost true' bias and still be within the correct budget -- but in this case, we can simply run a deterministic algorithm for finding top-$k$ coins in the stream over the empirical biases). As such, 
we can simply modify the algorithm by terminating with an arbitrary answer whenever the $\Ntr$ reaches twice its expected value -- this can only decrease the probability of success by $\exp\paren{-\Theta(\sqrt{n})}$ (which again can be assumed to be 
always $o(\delta)$ by a similar argument as why assuming $k \leq \sqrt{n}$ is without loss of generality). This means that the sample complexity of our algorithm can be bounded \emph{deterministically} also.  
\end{remark}

\subsection*{Proof of \Cref{lem:topk-correct} (Correctness of the Algorithm)}

Let us start by giving some intuition about the proof before diving into the technical details. The ideal scenario for the algorithm is if we start with all the top-$k$ coins appearing at the beginning of the stream and so from the get go, they all belong 
to $\Kings$. In such a scenario, we can invoke~\Cref{lem:main-keep} from~\Cref{sec:main} in an almost black-box way and argue that the budgeting scheme allows for all these coins to remain in $\King$ till the very end of the stream
with probability at least $1-\delta$. The reason this works is that in this case, we never need to consider comparing two top-$k$ coins with each other (as the pivots are sampled from the buffer alone). 

Of course, in general, we will not have all top-$k$ coins as $\Kings$ in the beginning. The first thing we need
 to worry is when a top-$k$ coin enters the buffer (and for now let us assume there is no other top-$k$ coin the buffer for the next foreseeable streaming steps): since this top-$k$ coin does not have any budget, can we still hope to have
it around for multiple trials before it is chosen as the pivot and even have a chance of joining the $\Kings$? Since the pivot is chosen uniformly at random,  we would expect this top-$k$ coin to become 
a pivot itself within the next $O(k)$ trials. Thus, we only need this coin to remain in the buffer for the next $O(k)$ trials; as the coins are sampled $O(\log{k})$ times in each trial, we can guarantee this event. 
Moreover, once this coin is chosen as the pivot, we can also guarantee that it will join the $\Kings$ by the same argument as \Cref{lem:main-pick} in~\Cref{sec:main}. 

Already at this point, we encounter a problem: What if this top-$k$ coin swaps one of the top-$k$ coins in $\Kings$? Indeed, our pruning rule allows us to argue that with high probability we will \emph{not} have a \emph{discard} step when 
this coin joins $\Kings$ but inevitably a swap needs to happen and we may very well swap a top-$k$ coin with with another top-$k$ coin. This can become even more challenging when multiple top-$k$ coins all join the buffer. 

Our main argument here is to show that it is possible to partition the execution of the algorithm over the stream into \emph{long} sequences of ``relative safety'' in which
no top-$k$ coin belongs to the buffer and the top-$k$ coins in $\Kings$ start to accumulate budget (which allows us to do union bound over these long sequences), and \emph{short} outbursts of ``risky'' trials in which the budget 
of every $\king$ may be depleted and the only thing that saves us through these risky trials is that their numbers  are small (so we can directly use a union bound over them). The final step is to use a simple potential function argument to prove that the total number of such risky outbursts is small and most of the stream involves the long non-risky trials (so even though the budgets of the top-$k$ coins in $\Kings$ may get restarted after each risky outbursts, we can still expect them to survive all these outbursts and not
get discarded by the end of the stream). We now formalize this intuition. 

\medskip

We start by setting up our notation. Let us define: 
\begin{itemize}
\item \emph{Risky trial}: A trial with at least one of the top-$k$ coins present in the buffer $B$;
\item \emph{Non-risky trial}: A trial without any $\coin$ from the top-$k$ coins present in the buffer $B$;
\item \emph{(Non-risky) Chunk}: A maximal sequence of consecutive non-risky trials. %We can have risky or non-risky chunks according to the type of trials
\end{itemize}

What we intend to prove is as follows (see~\Cref{fig:trialandchunk} for an illustration of these definitions): 

\begin{enumerate}[label=(\roman*)]
\item During any single non-risky chunks, with large probability of $1-\poly(\delta)/\poly(k)$, we will not encounter any `swap case' or `discard case' of the algorithm that \emph{removes a top-$k$ coin from $\Kings$}.  
In other words, we only enter a risky trial on the condition of a new arriving top-$k$ coin joining the buffer from the stream (\Cref{clm:non-risky-chunks}).
\item For a single risky trial, with large probability of $1-\poly(\delta)/\poly(k)$, no $\coin$ among the top-$k$ will be discarded, even though we may encounter many `swap case' or `discard case' in the algorithm (\Cref{clm:risky-trial}).
\item The expected number of risky trials as well as (non-risky) chunks is $\text{poly}(k)/\text{poly}(\delta)$ where the bound is small enough to do a union bound over all occurrences of the above  cases (\Cref{clm:trial-chunk-num}).
\end{enumerate}
 
 \begin{figure}[h!]
\centering
\includegraphics[width=1.0\textwidth]{trialsandchunks.png}
\caption{\label{fig:trialandchunk} An illustration of the notation, events and arguments adopted in the proof of \Cref{lem:topk-correct}.}
\end{figure}

The proof of the following claim is analogous to~\Cref{lem:main-keep}. 
\begin{claim}\label{clm:non-risky-chunks}
With probability at least $1-\frac{\delta^{2}}{k^{15}}$, any top-$k$ coin in $\Kings$ will not be defeated during a fixed (non-risky) chunk.  
\end{claim}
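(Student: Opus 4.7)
The plan is to adapt the proof of \Cref{lem:main-keep} nearly verbatim. The crucial observation is that throughout a non-risky chunk every pivot $\pcoin$ is drawn from $B$, which by definition contains no top-$k$ coin; hence for any fixed top-$k$ king $\king_i \in \Kings$ we have $p_{\king_i} - p_{\pcoin} \geq \Delta_k$ for \emph{every} pivot $\pcoin$ encountered during the chunk. This is exactly the setting of the challenge subroutine analyzed in \Cref{lem:main-keep}, with two modifications: $(i)$ the parameters $s_\ell$ and $b$ here are scaled by $\log(k/\delta)$ (and an extra factor $16$) rather than $\log(1/\delta)$, and $(ii)$ a single trial may involve a sequence of pivot challenges against $\king_i$ because of the ``swap case'' loop.

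First I would fix a single top-$k$ king $\king_i$ at the start of the chunk and enumerate the pivot challenges it faces as $j=1,2,\ldots$, counting each sub-pivot produced by a swap case separately. Crucially, by Line~\ref{linepickpivot} each such sub-pivot grants $\king_i$ a fresh $b$ units of budget, so from $\king_i$'s point of view the amortization is identical to that in~\Cref{sec:main}: one increment of $b$ per challenge, and consumption $s_\ell$ at level $\ell$. Define $X_{j\ell}$, $X_j = \sum_\ell X_{j\ell}$, $Y_m = \sum_{j \leq m} X_j$ and their rescaled counterparts $X'_j, Y'_m$ exactly as in \Cref{sec:main}. By the same computation as in~\Cref{clm:main-expect}, the probability that the $j$-th challenge reaches level $\ell$ is at most $2\exp(-\ln(k/\delta)\cdot r_{\ell-1})$, so $\expect{X'_j} \leq 1$; and by the same argument as~\Cref{clm:main-sub}, $X'_j - \expect{X'_j}$ is sub-exponential with $\kappa = O(1/\ln(k/\delta))$.

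Next I would apply Bernstein's inequality (\Cref{prop:bernstein}) to $Y'_m - \expect{Y'_m}$ and union-bound over all $m$ up to the chunk length. The chunk length is at most the total number of trials, which by~\Cref{rmk:trial-high-prob} can be assumed at most $O(n/k) \leq \poly(n)$, and in the regime $k \leq \sqrt{n}$ this is polynomial in $k/\delta$. Because $s_\ell$ carries the extra factor $16$ compared to~\Cref{sec:main}, repeating the Bernstein/union-bound calculation yields a per-king, per-chunk failure probability of at most $\delta^{2}/k^{16}$ (the exponent is chosen by tuning the absolute constant $16$ in the definition of $s_\ell$ so that the $\min\{t^2/(\kappa^2 m),\, t/\kappa\}$ term in Bernstein dominates by a factor of $16\ln(k/\delta)$). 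A final union bound over the at most $k$ top-$k$ kings currently in $\Kings$ yields the claimed $\delta^{2}/k^{15}$ bound.

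The main technical nuisance will be the bookkeeping across multi-pivot trials: a trial with several swap cases produces a sequence of sub-pivots against $\king_i$ rather than a single one, and I need to verify that the budget accounting ($b$ injected per sub-pivot, $s_\ell$ consumed per level) lines up so that Bernstein applies cleanly to $Y_m$ indexed by sub-pivots rather than trials. Once this per-sub-pivot view is established, the remainder of the argument is essentially a transcription of the $\mainalg$ analysis with $\log(1/\delta)$ replaced by $\log(k/\delta)$ and the additional constant factor driving the failure probability down to the stated $\delta^2/k^{15}$.
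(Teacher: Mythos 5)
Your proposal is correct and follows essentially the same route as the paper's proof: fix a top-$k$ king, define the per-challenge sample counts $X_{j\ell}$ and their rescaled versions $X'_j, Y'_m$ exactly as in \Cref{lem:main-keep}, bound $\expect{X'_j}$ and the sub-exponential parameter with $\ln(1/\delta)$ replaced by $16\ln(k/\delta)$, apply Bernstein's inequality and a union bound over the challenge index to get $\delta^2/k^{16}$ per king per chunk, and finally union-bound over the at most $k$ top-$k$ kings to reach $\delta^2/k^{15}$. Your explicit per-sub-pivot bookkeeping for the swap-case loop (each repeated pivot pick injecting a fresh $b$ of budget) is a harmless refinement of the paper's per-trial indexing, and your aside about bounding the chunk length is unnecessary since the Bernstein bound already decays like $e^{-i}$ in the challenge index, making the union bound converge regardless of chunk length.
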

\begin{proof}
Let $T$ denote the time step at which the first non-risky trial starts after a bunch of risky trials. We define the following random variables $\set{X_{m,i,\ell}}$ for $m,i,\ell \geq 1$ as the number of coin tosses when comparing the $m$-th $\king$ ($\king_{m}$) with the pivot $\overline{\coin}$ (which is not among the top-$k$ coins since this is a non-risky trial) on the $\ell$-th level of the $i$-th trial after $T$.  We define:
\begin{align*}
	X_{m,i,\ell} = 
	\begin{cases} 
	0 &\qquad  \textnormal{if the challenge of $\coin_{\king_{m}}$ and $\overline{\coin}$ of the $i$-th trial did \emph{not} reach level $\ell$} \\
	s_\ell &\qquad   \textnormal{otherwise}
	\end{cases}.
\end{align*}
And similarly, we define $X_{m,i} = \sum_{\ell = 1}^{\infty} X_{m, i, \ell}$ and $Y_{m,i} := \sum_{j=1}^{i} X_{m,j}$. Now, instead of proving a $1-(\delta/2)$ probability, we prove that with probability at least $1-(\frac{\delta^{2}}{k^{16}})$: 
\begin{align*}
\textnormal{for every $i \geq 1$:} \qquad Y_i < i \cdot b.
\end{align*}
Without repeating too much the technical details of the proof of~\Cref{lem:main-keep}, we can define $X^{'}_{m,i,\ell}$, $X^{'}_{m,i}$ and $Y^{'}_{m,i}$ as we did exactly in that proof. Then we can replace the $-\ln(1/\delta)$ term in \Cref{clm:main-expect} with $-16\cdot\ln(k/\delta)$. The bound will therefore become:
\begin{align*}
	\expect{X'_{m,j}} &\leq \sum_{\ell > 1} \Pr\paren{\textnormal{challenge gets to level $\ell$}} \cdot r_{\ell}  \\
	 &\leq \sum_{\ell > 1}\exp\paren{-16\cdot\ln{(k/\delta)} + r_{\ell-1}}\cdot r_{\ell} \\
	&\leq 2\cdot\frac{\delta^{16}}{k^{16}} \cdot \sum_{\ell > 1} \frac{3^{\ell}}{\exp\paren{3^{\ell-1}}} \\
	&\leq \frac{\delta^{2}}{k^{16}}. \tag{as this series converges to $<1/2$} 
\end{align*}
Also, similar to the proof of \Cref{clm:main-sub}, we can show that for all $i$, $(X'_{m,i}-\expect{X'_{m,i}})$ is a sub-exponential random variable with  $\kappa=\frac{15}{16\cdot\ln{(k/\delta)}}$ by showing:
\begin{equation*}
	\Pr\paren{\card{X'_{m,i}-\expect{X'_{m,i}}} > t} \leq 2\exp\paren{-16\cdot\ln{(k/\delta)} \cdot \frac{t}{15}}.
\end{equation*}
Thus, by applying the same argument, we can show that:
\begin{align*}
\Pr\paren{Y'_{m,i} \geq C \cdot i} \leq \frac{\delta^{2}}{k^{16}}\cdot\exp(-i).
\end{align*}
Applying a union bound over all upcoming trials, and using the geometric series above, we can show that the probability of $\paren{\exists i: Y'_{m,i} \geq C \cdot i}$ is at most $\frac{\delta^{2}}{k^{16}}$.\par
Now notice that unlike the original proof in \Cref{lem:main-keep}, here the conclusion only applies to one $\king$. Thus, we need to apply another union bound. The number of $\king$s among the top-$k$ $\coin$s is at most $k$. Therefore, the probability of $\paren{\exists m,i: Y'_{m,i} \geq C \cdot i}$ should be at most $\sum_{m=1}^{k}\Pr\paren{\exists i: Y'_{m,i} \geq C \cdot i}\leq \frac{\delta^{2}}{k^{15}}$, finalizing the proof.
\end{proof}

\newcommand{\eventone}{\ensuremath{\event_{\textnormal{defeated-top}}}}
\newcommand{\eventtwo}{\ensuremath{\event_{\textnormal{pivot-top}}}}

\begin{claim}\label{clm:risky-trial}
With probability at least $1-\frac{\delta^{16}}{k^{9}}$, in a single risky trial, no top-$k$ coin will be discarded.
\end{claim}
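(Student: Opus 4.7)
The plan is to enumerate the specific events that can cause a top-$k$ coin to be discarded in a single risky trial, and to observe that each such event reduces to a \emph{wrong-direction comparison} between a top-$k$ and a non-top-$k$ coin. Because both $s_\ell$ and $b$ are scaled by $\log(k/\delta)$ in $\algtopk$ (rather than $\log(1/\delta)$ as in $\mainalg$), every such wrong comparison has probability polynomially small in $\delta/k$, and a union bound over the $O(k)$ relevant comparisons easily yields the claimed $\delta^{16}/k^9$ bound with substantial slack.

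I would first observe that no coin is ever discarded in a swap case, so a top-$k$ coin can disappear only in a discard case (i.e., when $\defeat\geq k$), where the algorithm removes $\pcoin$ together with every coin in $\Kings\cup B$ that lost to $\pcoin$. Then I would split on whether $\pcoin$ is itself a top-$k$ coin. In Case~1 ($\pcoin$ is non-top-$k$), a top-$k$ coin $\coin_{[j]}$ disappears only if it lost to $\pcoin$ in either its buffer-challenge or its king-challenge, both of which are wrong-direction comparisons against a gap of at least $\Delta_k$ (since any top-$k$ coin exceeds any non-top-$k$ coin in bias by at least $\Delta_k$). In Case~2 ($\pcoin$ is top-$k$), since $\Kings\cup B$ contains at most $k-1$ other top-$k$ coins, the event $\defeat\geq k$ forces at least one non-top-$k$ coin in $\Kings\cup B$ to defeat $\pcoin$ -- again a wrong-direction comparison against a gap of at least $\Delta_k$.

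Next I would bound each wrong comparison individually. Buffer-challenges use a single round of $s_1=\Theta(\log(k/\delta)/\Delta_k^2)$ samples, so a direct application of \Cref{lem:coin-comp} with $\theta=\Delta_k$ gives an error probability of $O((\delta/k)^{c})$ for a large absolute constant $c$ set by the factor of $16$ baked into $s_1$. For king-challenges, I would adapt the geometric-series argument from the proof of \Cref{lem:main-pick}, with $\ln(1/\delta)$ replaced by $16\ln(k/\delta)$: each king has budget at least $b>s_1$ at the start of the trial (step (a) always adds $b$), so the subroutine reaches at least level~$1$, and the probability of a wrong outcome in the overall subroutine is dominated by the level-$1$ tail, again of order $O((\delta/k)^{c})$. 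A union bound over the at most $k$ top-$k$ coins in Case~1 and over the $|\Kings\cup B|-k\leq 10k$ non-top-$k$ coins in Case~2 then gives total failure probability $O(k\cdot(\delta/k)^c)$, which is comfortably below $\delta^{16}/k^9$ for our choice of constants.

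The main obstacle is not the union bound itself but the king-challenge analysis: one could worry that a top-$k$ king (or a top-$k$ pivot being challenged by many non-top-$k$ kings) runs out of budget before the challenge ``settles'', or that the multi-level structure of the challenge subroutine introduces awkward dependencies across levels. The resolution, mirroring the key insight of \Cref{lem:main-pick}, is that the $\Delta_k$ gap makes the level-$1$ comparison overwhelmingly decisive, so the entire subroutine is effectively ruled by its first round and the budget bookkeeping never becomes binding. Getting this reduction clean, and verifying that the factor-$16$ blowup inside $s_\ell$ is indeed enough to swallow the extra union-bound factors of $k$, is the main care required.
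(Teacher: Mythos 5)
Your proposal is correct and follows essentially the same route as the paper's proof: the same case split on whether the pivot is a top-$k$ coin, the same reduction of each case to a wrong-direction comparison across a gap of at least $\Delta_k$ (single $s_1$-sample comparisons handled by \Cref{lem:coin-comp}, budgeted king-challenges handled by the geometric-series argument of \Cref{lem:main-pick} with $\ln(1/\delta)$ replaced by $16\ln(k/\delta)$), and the same union bound over the $O(k)$ coins in $\Kings\cup B$. The minor bookkeeping differences (e.g., $10k$ vs.\ the paper's $11k$ in the final union bound) do not affect the argument.
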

\begin{proof}

We first argue that the only way for any top-$k$ coin to get discarded is if one of the following two events happens: 
%Let us think about the events that may lead to the false elimination of top-$k$ $\coin$s. There are only two possible cases:
\begin{itemize}
\item $\eventone$: $\overline{\coin}$ is \emph{not} a top-$k$ coin and defeats a top-$k$ coin in $\Kings \cup B$.  
\item $\eventtwo$: $\overline{\coin}$ is a top-$k$ coin that loses at least $k$ times (namely, have $D \geq k$). 
%\item $\eventthree$: $\overline{\coin}$ is a top-$k$ but failed to win at least $k$ times.
\end{itemize}
This is the case because of the following: if $\pcoin$ is not a top-$k$ coin, the only way for it to be able to discard a top-$k$ coin is if it wins against it (which is captured by$\eventone$). On the other hand, 
if $\pcoin$ is a top-$k$ coin, the only for it to be able to discard \emph{any} other coin, is if it enters a `discard case' that only happens if it loses at least $k$ times  (which is captured by $\eventtwo$). 

We now bound the probability of each of these two events. Fix any top-$k$ $\coinstar \in \Kings \cup B$ and the pivot $\pcoin$. Note that $\pcoin$ and $\coinstar$ are tossed \emph{at least} $s_1$ times
before we decide which one is the winner ($\coinstar$ may have a budget if it belongs to $\Kings$ on top of the $b \geq s_1$ provided to it at the beginning of this trial but we may and will ignore that for this argument). We have, 
\begin{align}
	\Pr\paren{\textnormal{$\coinstar$ loses to $\pcoin$}} &\leq 2 \cdot \exp\paren{-16\cdot \ln{(k/\delta)} \cdot r_1} \tag{by~\Cref{lem:coin-comp} and the choice of $s_1$} \\
	&\leq \frac{2\delta^{16}}{k^{16}}. \label{eq:may-need-later}
\end{align}
Doing a union bound over the at most $k$ choices of $\coinstar$, we have (as $k\geq 2$)
\begin{align*}
	\Pr\paren{\eventone} \leq \frac{\delta^{16}}{k^{14}}. 
\end{align*}

Let us now consider the case when $\pcoin$ is a top-$k$ coin. Let $\coin_i \in B$ be any coin which is \emph{not} a top-$k$ coin itself. By~\eqref{eq:may-need-later} (by now replacing the role of $\coinstar$ with $\pcoin$ and the previous $\pcoin$ with $\coin_i$), 
we have, 
\begin{align*}
	\Pr\paren{\textnormal{$\pcoin$ loses to $\coin_i$}} \leq \frac{2\delta^{16}}{k^{16}}.
\end{align*}
The trickier part is when we should compare $\pcoin$ with some $\king_i \in \Kings$ which is not a top-$k$ coin itself. Here, we can no longer ignore the fact that $\king_i$ may have collected some budget. So $\pcoin$ needs to win
 against $\king_i$ despite $\king_i$ having 
some budget (that we cannot necessarily bound beyond saying it is finite). However, we already proved an analogous statement like this in~\Cref{lem:main-pick} and the argument here is identical to that. Indeed, we have, 
\begin{align*}
	\Pr\paren{\textnormal{$\pcoin$ loses to $\king_i$}} &\leq \sum_{\ell=1}^{\infty} \Pr\paren{\textnormal{$\pcoin$ loses to $\king_i$ at level $\ell \mid \pcoin$ has not lost until level $\ell-1$}} \\
	&\leq \sum_{\ell=1}^{\infty} 2 \cdot \exp\paren{-16 \cdot \ln{(k/\delta) \cdot r_{\ell}}} \tag{by \Cref{lem:coin-comp} and $s_{\ell}$ number of samples done in level $\ell$} \\
	&< 2 \cdot \frac{\delta^{16}}{k^{16}} \cdot \sum_{\ell=1}^{\infty} \exp\paren{-3^{\ell}} \tag{by definition of $r_\ell = 3^{\ell}$ and since $\ln{(k/\delta)} \cdot r_{\ell} \geq \ln{(k/\delta)} + r_\ell$}\\
	&< \frac{\delta^{16}}{k^{16}} \tag{as this series converges to $<1/10$}.
\end{align*} 
By a union bound over the at most $11k$ non-top-$k$ coins in $\Kings \cup B$, we have that, (note that there are $<k$ top-$k$ coins other than $\pcoin$ and so for $\pcoin$ to lose to at least $k$ coins it should lose to some non-top-$k$ coins
and this union bound takes care of that)
\begin{align*}
	\Pr\paren{\eventtwo} \leq \frac{\delta^{16}}{k^{10}}. 
\end{align*}

A union bound over these two events (and a very loose upper bound) finalizes the proof. 
\end{proof}

\begin{claim}\label{clm:trial-chunk-num}
Assuming the events of~\Cref{clm:risky-trial} for every upcoming risky trial, with probability at least $1-\frac{\delta}{k^{3}}$, the number of risky trials is at most $10 \cdot \frac{k^{6}}{\delta}$. 
\end{claim}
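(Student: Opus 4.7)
The plan is to bound $\Ex[R]$ by $O(k^{3})$ and then invoke Markov's inequality to obtain the claimed high-probability bound. The argument combines a deterministic invariant that limits the number of ``progress'' events with a probabilistic lower bound on the rate at which such events occur.

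Call a case A trial (a trial whose pivot $\pcoin$ is a top-$k$ coin) \emph{productive} if the swapped king $\pking$ is not a top-$k$ coin. I will first show that the total number of productive case A trials over the entire stream is at most $k$. The key is the invariant that $k^{\textnormal{top}}_{t}$, the number of top-$k$ coins currently in $\Kings$ at trial $t$, is non-decreasing under the events assumed in the statement (those of \Cref{clm:risky-trial} in every risky trial): a productive case A trial increases it by exactly $1$; a non-productive case A trial (top-$k$ $\pking$) leaves it unchanged; a case B swap leaves it unchanged since a non-top-$k$ pivot cannot defeat a top-$k$ king ($\eventone$ does not occur); a case B discard leaves it unchanged since no top-$k$ coin is ever discarded; and refills from the stream can only add top-$k$ coins to $\Kings$. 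Since $k^{\textnormal{top}}_{t}\leq k$, at most $k$ productive case A trials can occur throughout the stream.

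Next I will establish that in every risky trial, conditional on the history and on the events of \Cref{clm:risky-trial} in that trial, the probability of being productive is at least $p:=1/(10k^{2})$ (up to a lower-order factor). The buffer has size $K=10k$ and contains at least one top-$k$ coin in a risky trial, so the pivot is top-$k$ (case A) with probability at least $1/(10k)$. Conditional on case A, we have $k^{\textnormal{top}}\leq k-1$, so at least one non-top-$k$ king exists in $\Kings$; by the same concentration bound used in the proof of \Cref{clm:risky-trial} (namely \Cref{lem:coin-comp} with $s_{1}$ samples and gap $\Delta_{k}$, followed by a union bound over the $\leq k$ non-top-$k$ kings), with probability at least $1-\delta^{16}/k^{15}$ the top-$k$ pivot $\pcoin$ defeats every non-top-$k$ king. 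On this event, $\pking$ is drawn uniformly from a set containing at least $(k-k^{\textnormal{top}})\geq 1$ non-top-$k$ kings and at most $k^{\textnormal{top}}$ top-$k$ kings, giving $\Pr[\pking\text{ is non-top-}k\mid \text{case A}]\geq (k-k^{\textnormal{top}})/k\geq 1/k$.

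Combining the two bounds via the tower property along the filtration generated by the trial outcomes yields $\Ex[X]\geq p\cdot \Ex[R]$, where $X$ is the number of productive case A trials; together with the deterministic bound $X\leq k$ from the first step, this gives $\Ex[R]\leq k/p=10k^{3}$, and Markov's inequality then produces $\Pr[R\geq 10k^{6}/\delta]\leq 10k^{3}\cdot\delta/(10k^{6})=\delta/k^{3}$, as desired. The principal obstacle will be making the conditional probability bound in the second step rigorous: because the kings' accumulated budgets depend on past trials, one must verify that the ``$\pcoin$ defeats every non-top-$k$ king'' event has probability bounded below uniformly in the past, which holds because the challenge subroutine only issues further samples when $\pcoin$'s empirical bias is not strictly larger, and under the gap $\Delta_{k}$ this is exponentially unlikely at every level.
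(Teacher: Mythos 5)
Your proposal is correct and follows essentially the same route as the paper: both arguments rest on the observation that, under the events of \Cref{clm:risky-trial}, the number of top-$k$ coins in $\Kings$ never decreases and increases (via a swap of a top-$k$ pivot with a non-top-$k$ king) with probability at least $\frac{1}{10k}\cdot\frac{1}{k}$ per risky trial, so at most $k$ such increments bound $\expect{R}$ by $10k^{3}$, and Markov's inequality finishes the proof. Your accounting via ``productive'' trials and the tower property is just a slightly more explicit rendering of the paper's statement that the expected number of risky trials before the count reaches $k$ is $10k^{3}$.
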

\begin{proof}
Let us fix any risky trial. By definition, there must exists at least one top-$k$ coin, denoted by $\coinstar$, in the buffer $B$. By the random choice of the pivot, we will pick $\coinstar$ as the pivot 
with probability $\frac{1}{10k}$. Let us condition on this event. 

Moreover, note that since not all $\Kings$ are top-$k$ coins, there exists at least one non-top-$k$ king, denoted by $\king_i$, in $\Kings$. By conditioning on the event of~\Cref{clm:risky-trial}, 
$\coinstar$ will beat $\king_i$ and also enters a `swap case'. However, there is no guarantee that $\coinstar$ did not win against some other coins, some of which may actually be top-$k$ coin themselves. 
In that case, one of those may get swapped with $\coinstar$ instead of $\king_i$. Still, considering we pick $\pking$ to swap with $\coinstar$ uniformly at random, there is at least a $\frac{1}{k}$ chance that we pick
to swap $\king_i$ with $\coinstar$.  This means that, assuming the event in~\Cref{clm:risky-trial}, with probability at least $\frac{1}{10k^2}$, we will swap $\coinstar$ with $\king_i$. 

An important observation here is that as long as the event in~\Cref{clm:risky-trial} continues to happen, we will never \emph{decrease} the number of top-$k$ coins in $\Kings$ (this actually follows from the event $\eventone$ bounded 
in~\Cref{clm:risky-trial} and not the exact statement of the claim itself). This, plus the above fact implies that in each trial, we have a probability of $\geq \frac{1}{10k^2}$ to increase the number of top-$k$ 
coins among the $\Kings$ (and we will not decrease it conditioned on~\Cref{clm:risky-trial}). As the number of top-$k$ coins in $\Kings$ can be increased to $k$ only, we can conclude that the expected 
number of risky trials before we increase the top-$k$ coins in $\Kings$ to become $k$ is $10k^3$. Hence, by Markov bound, with probability $1-\frac{\delta}{k^3}$, we can only have $10 \cdot \frac{k^6}{\delta}$ 
risky trials. 
\end{proof}

We can now use~\Cref{clm:risky-trial} and~\Cref{clm:trial-chunk-num} and do a union bound (step by step on each upcoming risky trial) to argue that: the number of risky trials is $10 \cdot \frac{k^{6}}{\delta}$ and in each one, 
we will only lose a top-$k$ coin with probability at most $\frac{\delta^{16}}{k^9}$; hence, with probability 
\begin{align*}
1-\paren{\frac{\delta}{k^3}+\frac{10k^6}{\delta} \cdot \frac{\delta^{16}}{k^9}} \geq 1-\frac{2\delta}{k^3} \tag{as $\delta \leq <1/2$}
\end{align*} 
we will keep all the top-$k$ coins throughout all the risky trials and will not have more than $10 \cdot \frac{k^{6}}{\delta}$ risky trials. 

Furthermore, since between any two (non-risky) there should be a risky trial, the above bound gives us an upper bound of $10 \cdot \frac{k^{6}}{\delta}$ on the number of (non-risky) chunks as well. Thus, by applying~\Cref{clm:non-risky-chunks} 
and a union bound over all these chunks, we obtain that with probability $1-\frac{10\delta}{k^9} \geq 1-\frac{\delta}{k^5}$, in none of the (non-risky) chunks also we will lose a top-$k$ coin. Overall, 
this means that with probability 
\begin{align*}
1-\paren{\frac{2\delta}{k^3} + \frac{\delta}{k^5}} \geq 1-\frac{\delta}{2}
\end{align*}
we will not lose any top-$k$ coin throughout the stream, proving~\Cref{lem:topk-correct}.

\subsection*{Proof of \Cref{thm:top-k}}

We are ready to prove \Cref{thm:top-k}. The number of coin tosses for the algorithm immediately follows from conclusion of \Cref{lem:topk-sample}. Moreover, \Cref{lem:topk-correct} ensures that with probability at least $1-\frac{\delta}{2}$, all the top-$k$ coins
will be maintained in $\Kings \cup B$ by the end of the stream. Now we need one more simple lemma that states that the very final step of the algorithm also correctly returns the set of top-$k$ coins. The proof of this lemma follows from our earlier results (and also from known results in the literature since we can simply run any standard algorithm for finding top-$k$ coins on these set of $O(k)$ coins at the end). 

\begin{lemma}\label{lem:topk-buffer-sample}
With probability at least $1-\delta$, the algorithm will return the top-$k$ coins in line~\ref{linefinalsample} of algorithm $\algtopk$.
\end{lemma}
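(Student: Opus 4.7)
My plan is to combine the guarantee of \Cref{lem:topk-correct} with a straightforward concentration argument on the $s_1$ samples drawn in Line~\ref{linefinalsample}. First, I invoke \Cref{lem:topk-correct} to conclude that the event $\event$ ``every top-$k$ coin $\coin_{[1]}, \ldots, \coin_{[k]}$ belongs to $\Kings \cup B$ at the time we reach Line~\ref{linefinalsample}'' holds with probability at least $1-\delta/2$. I then condition on $\event$ and analyze only the final sampling step.

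Conditioned on $\event$, $\Kings \cup B$ is a set of at most $11k$ coins containing all the top-$k$ coins. Line~\ref{linefinalsample} samples each such coin exactly $s_1$ times and returns the $k$ with the largest empirical biases. To show that this step returns exactly $\{\coin_{[1]}, \ldots, \coin_{[k]}\}$, it suffices to verify that for every pair $(\coin_{[i]}, \coin_{[j]})$ with $i \leq k$ and $j > k$ in $\Kings \cup B$, the empirical bias of $\coin_{[i]}$ in this final trial exceeds that of $\coin_{[j]}$. Since $p_{[i]} - p_{[j]} \geq p_{[k]} - p_{[k+1]} \geq \Delta_k$, I can apply \Cref{lem:coin-comp} directly with the pair $(\coin_{[i]}, \coin_{[j]})$ and $\theta = \Delta_k$. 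Recalling that $s_1 = 16 \cdot \frac{4}{\Delta_k^2} \cdot \ln(k/\delta) \cdot r_1$ with $r_1 = 3$, the parameter $K$ of \Cref{lem:coin-comp} equals $\Theta(\log(k/\delta))$, yielding a failure probability of $(\delta/k)^{\Theta(1)}$ for each such pair.

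Finally, I apply a union bound over the at most $k \cdot |\Kings \cup B| = O(k^2)$ relevant pairs to get a total failure probability of at most $\delta/2$ for this last step; the exponent hidden in $s_1$ is chosen generously (the $\cdot 16$ factor in the definition of $s_\ell$), so even a crude bound easily absorbs the $O(k^2)$ union bound. Combining this with the $\delta/2$ failure probability from \Cref{lem:topk-correct} via another union bound yields overall success probability at least $1-\delta$, as claimed.

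There is no real obstacle here: the analysis is essentially black-box, and the only care needed is to check that the constants built into $s_1$ leave enough slack to absorb both the inner union bound over $O(k^2)$ pairs and the outer union bound with the $\delta/2$ error from~\Cref{lem:topk-correct}. Both are routine given the $\log(k/\delta)$ factor already present in $s_1$.
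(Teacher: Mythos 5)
Your proposal matches the paper's own proof: both invoke \Cref{lem:topk-correct} to place all top-$k$ coins in $\Kings \cup B$ with probability $1-\delta/2$, then use the \Cref{lem:coin-comp}-type bound for $s_1$ samples (the paper reuses the per-pair bound from the proof of \Cref{clm:risky-trial}) and a union bound over the $O(k)$ coins (you use $O(k^2)$ pairs, which the tiny per-pair failure probability absorbs just as easily) to conclude the final selection succeeds with probability $1-\delta/2$. The argument is correct and essentially identical to the paper's.
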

\begin{proof}
By~\Cref{lem:topk-correct}, with probability $1-\frac{\delta}{2}$, we have the top-$k$ coins in $\Kings \cup B$ by the end of the stream. Moreover, we have shown in the proof of \Cref{clm:risky-trial} that for any pair of a top-$k$ coin and a
 non-top-$k$ coin, if we toss both of them $s_{1}$ times, the probability for the latter to have a higher empirical bias than the former is at most $\frac{\delta^{2}}{k^{12}}$. This plus a union bound over the $10k$ coins implies that in this step
 also we may not return the top-$k$ coins with probability only $\frac{\delta}{2}$, finalizing the proof. 
\end{proof}

%%
% !TeX root = main.tex 
%!TEX root = main.tex

\newcommand{\element}{\ensuremath{\textnormal{\textsf{element}}}}

\section{Partition with Noisy Comparisons}\label{sec:comp}

In this section, we consider one applications of our techniques to the problem of top-$k$ recovery from noisy comparisons. 

\subsection*{Problem Definition}

In this problem, 
we have a collection of $n$ elements, denoted by $\set{\element_i}_{i=1}^{n}$, with an \emph{unknown total order} over these elements. The algorithm has a `noisy' access to this ordering: 
for any pairs of elements, the algorithm can query the order between the elements of this pair; with probability $1/2+\gamma$, the answer is according to the underlying total ordering, and with the remaining probability, the answer is arbitrary. 
The goal in the top-$k$ problem is to, given $\set{\element_i}_{i=1}^{n}$, parameters $k$ and $\gamma$, and query access to the underlying ordering, output the top largest $k$ elements according to this ordering, using a minimal number of 
queries. This problem is also sometimes referred to as the \emph{select} problem and its special case of $k=1$ is called the \emph{MAX} problem in the literature. 

We can model this problem in the streaming setting as before: the elements in $\set{\element_i}_{i=1}^{n}$ are arriving one by one in the stream and the algorithm is only allowed to store a limited number of these elements -- to query a pair of elements at any point, both elements are required to be in the memory of the algorithm. 

\subsection*{Our Results for the Top-$k$ Recovery Problem}

We obtain the following algorithms for this problem. 

\begin{theorem}\label{thm:comp}
	(Formalization of \Cref{res:noisy}) There exists streaming algorithms that given $n$ elements arriving in a stream, parameters $k$ and $\gamma$, and the confidence parameter $\delta$, with probability at least $1-\delta$, find
	 the top $k$ largest element in the underlying ordering in the noisy comparison model, using $O(k)$ memory and $O(\frac{n}{\gamma^2} \cdot \log{(k/\delta)})$ (noisy) comparisons. 
\end{theorem}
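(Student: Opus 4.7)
}

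The plan is to run \mainalg (for $k=1$) and \algtopk (for general $k$) essentially verbatim, replacing every ``sample two coins $s$ times and compare their empirical biases'' operation with ``perform $s$ noisy comparisons between the two elements and declare the element winning the majority of comparisons as having the higher empirical bias.'' The justification is conceptual: inspecting the pseudocode of both algorithms, one sees that coin tosses are only ever used inside the challenge subroutine (and the buffer-challenge of \algtopk), and the outcome of each such batch of tosses feeds into the rest of the algorithm only through the ordinal comparison $\phat_{\king} > \phat_i$. Hence as long as we can faithfully simulate this single-bit comparison using noisy queries, we can plug the simulator into the existing algorithms and analyses with no structural change.

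The first step is to establish a noisy-comparison analogue of \Cref{lem:coin-comp}: if $e_1$ precedes $e_2$ in the underlying total order and we perform $K/\gamma^2$ noisy comparisons between them, declaring as ``winner'' the element that wins more comparisons, then the probability of returning the wrong winner is at most $2\exp(-c K)$ for some absolute constant $c>0$. This is immediate from Chernoff--Hoeffding (\Cref{prop:chernoff}) applied to the $K/\gamma^2$ independent Bernoulli$(1/2+\gamma)$ indicators that each comparison is answered correctly: the expected fraction of correct answers is $1/2+\gamma$, and deviating by more than $\gamma$ from this mean has probability at most $2\exp\!\bigl(-2\gamma^2\cdot(K/\gamma^2)\bigr)$. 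With this analogue in hand, we set the per-level sample counts of the challenge subroutine to $s_\ell = (4/\gamma^2)\ln(k/\delta)\cdot r_\ell$ (and $b$ scaled accordingly), which is precisely the scaling needed so that every appeal to \Cref{lem:coin-comp} in the proofs of \Cref{thm:main} and \Cref{thm:top-k} is replaced by the noisy-comparison analogue with identical probability bounds.

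For the $k=1$ case we run \mainalg with noisy comparisons and a memory of a single extra element (the current \king); \Cref{lem:main-pick} and \Cref{lem:main-keep} carry over verbatim, since their proofs invoke \Cref{lem:coin-comp} as a black box and all other steps (the budgeting, the sub-exponential tail in \Cref{clm:main-sub}, and Bernstein in \Cref{prop:bernstein}) only reference the probability that a challenge reaches level $\ell$, which the new analogue bounds in exactly the same way. For general $k$ we run \algtopk, where the buffer-challenge and king-challenge subroutines are likewise replaced by noisy comparisons between the pivot and each other element. The correctness argument in \Cref{lem:topk-correct} (risky versus non-risky trials, the events $\eventone$ and $\eventtwo$, \Cref{clm:non-risky-chunks} and \Cref{clm:risky-trial}) and the sample bound in \Cref{lem:topk-sample} both pass through unchanged, since they too rely only on the $2\exp(-\Theta(s_\ell))$ bound on the probability that a single pairwise comparison at level $\ell$ returns the wrong answer.

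The main subtlety I expect is a matter of bookkeeping rather than a true obstacle: in the coin setting a ``level-$\ell$ trial'' costs $2s_\ell$ samples (each of the two coins tossed $s_\ell$ times), whereas in the noisy model it costs only $s_\ell$ comparisons (one per simulated joint sample). This changes constants in the amortized analyses of~\Cref{clm:main-sample} and~\Cref{clm:comp-with-trial} by a factor of $2$ and is absorbed into the $O(\cdot)$. The one place that requires verification is that the sub-exponential parameter $\kappa$ in the analogue of~\Cref{clm:main-sub} still scales like $1/\ln(k/\delta)$: this follows from the identical geometric decay $\Pr(\text{challenge reaches level }\ell)\leq 2\exp(-c\ln(k/\delta)\cdot r_{\ell-1})$, so Bernstein gives the same concentration and a union bound over the at most $n$ streaming positions yields the desired $1-\delta$ success probability. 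Combining these pieces yields $O(k)$ memory (one element per king, plus the $O(k)$-sized buffer for general $k$, or a single extra element for $k=1$) and $O(n\log(k/\delta)/\gamma^2)$ noisy comparisons, as claimed.
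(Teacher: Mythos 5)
Your proposal matches the paper's proof: the paper likewise establishes a noisy-comparison analogue of \Cref{lem:coin-comp} (its \Cref{lem:noisy-comp}, proved via Chernoff on the majority vote over $K/\gamma^2$ queries) and then argues that since \mainalg and \algtopk only use the ordinal outcome of each batch of tosses, one can substitute noisy-comparison majorities for coin-toss comparisons and invoke \Cref{thm:main} and \Cref{thm:top-k} with the same bounds. Your additional bookkeeping (the factor-$2$ in the amortized counts and the unchanged sub-exponential parameter) is consistent with, and slightly more explicit than, the paper's argument.
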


We shall note that the number of comparisons done by all our algorithms are \emph{optimal} (even in the absence of any memory restriction). \par
The algorithms can be directly obtained by showing that the top-$k$ recovery problem is mathematically equivalent to finding the $k$ most biased coin with gap at least $\gamma$. In this sense, one can directly apply our algorithms in~\Cref{thm:main} and~\Cref{thm:top-k}  (depending on whether $k\geq2$) to get the results. 

To show the mathematical equivalence of the two problems, the following lemma is crucial:
\begin{lemma}\label{lem:noisy-comp}
Let $\element_{1}$ and $\element_{2}$ be a pair of elements with true order $\element_{1} \succ \element_{2}$ (`$\succ$' here means `has a higher order than'). Suppose the noisy comparison will return a correct answer with probability $\frac{1}{2}+\gamma$, and we query the comparison $\frac{K}{\gamma^2}$ times and determine the element that wins the most times as the higher order element. Then, 
\begin{align*}
	\Pr\paren{\text{$\element_2$ is considered higher order by the algorithm}} \leq 2 \cdot \exp\paren{-\frac{1}{4} \cdot K}. 
\end{align*}
\end{lemma}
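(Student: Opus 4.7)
The plan is to mirror the proof of \Cref{lem:coin-comp} almost verbatim, since the noisy comparison model here is just a Bernoulli trial with success probability $\tfrac{1}{2}+\gamma$. Concretely, let $m := K/\gamma^2$ and for each of the $m$ comparisons let $X_i \in \{0,1\}$ indicate that the $i$-th query returned the correct ordering $\element_1 \succ \element_2$. The $X_i$ are independent with $\Exp[X_i] = \tfrac{1}{2}+\gamma$, so $X := \sum_{i=1}^m X_i$ has mean $\Exp[X] = m(\tfrac{1}{2}+\gamma) = \tfrac{m}{2} + m\gamma$. The algorithm declares $\element_2$ the higher-ordered element exactly when $X \le m/2$, i.e.\ when $X - \Exp[X] \le -m\gamma$.

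The key step is then a single invocation of the Chernoff-Hoeffding bound (\Cref{prop:chernoff}) with deviation $t = m\gamma$, which yields
\begin{align*}
\Pr\paren{X \le m/2} \;\le\; \Pr\paren{\card{X - \Exp[X]} \ge m\gamma} \;\le\; 2\exp\paren{-\frac{2(m\gamma)^2}{m}} \;=\; 2\exp(-2m\gamma^2) \;=\; 2\exp(-2K),
\end{align*}
which is (comfortably) stronger than the stated $2\exp(-K/4)$ bound. I would then simply loosen the constant to match the form in the statement, keeping consistency with \Cref{lem:coin-comp}.

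There is essentially no obstacle here; the only book-keeping points are that $K/\gamma^2$ need not be an integer (take $\lceil K/\gamma^2\rceil$, which only strengthens the concentration), and that one must fix a tie-breaking rule for the event $X = m/2$ (breaking ties toward $\element_1$, or equivalently using a strict threshold, does not affect the asymptotics). The reason the lemma deserves its own statement is purely to serve as the noisy-comparison analogue of \Cref{lem:coin-comp}, so that the arguments of \Cref{thm:main} and \Cref{thm:top-k} -- which only ever compare empirical biases ordinally -- can be transported black-box to the partition problem by replacing every application of \Cref{lem:coin-comp} with an application of this lemma (with $\gamma$ playing the role of $\theta$).
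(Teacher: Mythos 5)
Your proof is correct and rests on the same tool as the paper's, namely the Chernoff-Hoeffding bound of \Cref{prop:chernoff}; the difference is only in the bookkeeping. The paper mirrors \Cref{lem:coin-comp} literally: it introduces the two counts $\mathscr{C}_r$ and $\mathscr{C}_w$ of correct and incorrect answers, applies \Cref{prop:chernoff} to each of them around the intermediate threshold $\paren{\tfrac{1}{2}+\tfrac{\gamma}{2}}\cdot\tfrac{K}{\gamma^2}$, and finishes with a union bound, which gives exactly the stated $2\exp\paren{-K/4}$. You instead exploit that in the comparison setting the two counts are complementary ($\mathscr{C}_w = K/\gamma^2 - \mathscr{C}_r$), so a single application of \Cref{prop:chernoff} with deviation $t = m\gamma$, $m = K/\gamma^2$, already controls the losing event and yields the sharper $2\exp(-2K)$. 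Your route is a mild simplification with a better constant; the paper's two-count formulation buys textual parallelism with \Cref{lem:coin-comp}, where the two empirical biases genuinely come from independent sample sets and a one-variable reduction is not available. Since every downstream use only needs a bound of the form $2\exp(-\Theta(K))$, either proof serves equally well, and your side remarks on the integrality of $K/\gamma^2$ and on tie-breaking are harmless points that the paper also glosses over.
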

\begin{proof}
	The proof is similar to the proof of \Cref{lem:coin-comp}. Let us define two random variables:
	\begin{center}
	$\mathscr{C}_{r}$: The number of times the query returns $\element_{1}$ is greater than $\element_{2}$\\
	$\mathscr{C}_{w}$: The number of times the query returns $\element_{1}$ is smaller than $\element_{2}$
	\end{center}
	By definition and problem setup, we will have $\expect{\mathscr{C}_{r}}=(\frac{1}{2}+\gamma)\cdot\frac{K}{\gamma^{2}}$ and $\expect{\mathscr{C}_{w}}\leq \frac{1}{2}\cdot\frac{K}{\gamma^{2}}$. Applying \Cref{prop:chernoff} to both random variables will result in: 
	\begin{align*}
		\Pr\paren{\mathscr{C}_{r} \leq (\frac{1}{2}+\frac{\gamma}{2})\cdot\frac{K}{\gamma^{2}}} &\leq \exp\paren{-(\frac{\gamma}{2})^2 \cdot (\frac{K}{\gamma^2})} = \exp\paren{-\frac{1}{4} \cdot K}; \\
		\Pr\paren{\mathscr{C}_{w} \geq (\frac{1}{2}+\frac{\gamma}{2})\cdot\frac{K}{\gamma^{2}}} &\leq \exp\paren{-(\frac{\gamma}{2})^2 \cdot (\frac{K}{\gamma^2})} = \exp\paren{-\frac{1}{4} \cdot K}.
	\end{align*}
	A union bound on the events will conclude the proof. 
\end{proof}
By \Cref{lem:noisy-comp}, one can change the process of coin tossing and comparison in the algorithms in~\Cref{thm:main} and~\Cref{thm:top-k} to query and order with the same number of times. Thus, the properties of the algorithms we proved in \Cref{thm:main} and \Cref{thm:top-k} can directly lead to the proof of \Cref{thm:comp}.

%%We also remark that the memory of our algorithm in \Cref{thm:comp} can be 
%%as small as $k + o(k)$ 

%%Moreover, the memory of our algorithms are 
%%also completely tight for $k=1$ case, and within a factor of $O(\min\set{\logstar{(n)},\log{k}+\log{(1/\delta)}})$ of the trivial lower bound of $k$ -- in particular, for constant $k$ and $\delta$, our algorithms achieve asymptotically optimal memory
%%complexity as well. 

%%
% !TeX root = main.tex 
%!TEX root = main.tex

\newcommand{\ptilde}{\widetilde{p}}
\newcommand{\pstar}{p^{*}}
\newcommand{\msubrout}{\ensuremath{\textnormal{\textsc{MAB Subroutine}}}\xspace}
\newcommand{\infsubrout}{\ensuremath{\textnormal{\textsc{MAB-INF Process}}}\xspace}
\newcommand{\BK}{\ensuremath{\mathcal{B}(K)}}

\section{Approximate Exploration in Stochastic Multi-Armed Bandits}\label{sec:eps-best}

%% ; roughly speaking, the same exact reason that made our coin tossing algorithm applicable to noisy comparison setting (by working only 
%%with ordinal information on biases), now comes back to hurt as: the lack of any substantial gap between the expected rewards of the arms can lead to scenarios that a gradually decreasing stream of arms turn to beat one another (as their biases are very close to each other) for the position of ``king'' in our algorithm and even though earlier arms may indeed by $\eps$-best arm, we may eventually end up with a very poor arm by this gradual decrease. 
%%

We consider another application of our algorithms, this time to the approximate exploration problem in stochastic multi-armed bandits under the $\PAC(\eps, \delta)$ framework. 

\subsection*{Problem Definition}
In the stochastic multi-armed bandit (MAB) problem, we have a collection of $n$ arms $\set{\arm_i}_{i=1}^{n}$. Each sample (or pull) of any $\arm_i$ 
results in a reward in $[0,1]$ sampled from an unknown distribution with mean $\mu_i \in [0,1]$\footnote{Our results extend verbatim to any Sub-Gaussian reward distribution with no assumption on range of the rewards. This, to the best our knowledge,
is the common characteristic of all prior work on exploration in MAB as well, and simply follows from the fact that the Chernoff-Hoeffding inequality used in the proofs extends directly to these distributions. As such, we omit the details.}. 
For a parameter $\eps \in (0,1)$, we say that an $\arm_i$ is an \emph{$\eps$-best arm} if its expected reward is at most $\eps$ smaller than the expected reward of the maximum (the best arm), or alternatively $\mu_i \geq \max_j \mu_j - \eps$. 
In the exploration problem, our goal is to, given the arms $\set{arm_i}_{i=1}^{n}$ and a parameter $\eps > 0$, return any $\eps$-best arm using a minimal number of arm pulls with probability $1-\delta$ -- a task that fits naturally in the classical $\PAC(\eps, \delta)$ framework.

We study this problem in the streaming model as follows: The arms are arriving one by one in a stream and the algorithm needs to store each arm explicitly if it wants to 
pull it at some later point in the stream as well. Based on the above model, our results on the $\eps$-best arm problem include the following algorithms
\begin{itemize}
\item An algorithm using a memory of a single arm and $O(\frac{n}{\eps^2} \cdot \log{(1/\delta)})$ arm pulls assuming at least $\eps$ gap between the largest (expected) reward and the second largest reward.
\item An algorithm using a memory of $O(\logstar{(n)})$ arms and $O(\frac{n}{\eps^2} \cdot \log{(1/\delta)})$ arm pulls. 
\item An algorithm using a memory of two arms and $O(\frac{n}{\eps^2} \cdot \log{(1/\delta)}+\log^{2}(n)\cdot \frac{\log^{2}\paren{1/\delta}}{\eps^{3}})$ arm pulls. 
\end{itemize}
(The latter two algorithms do not require making an assumption on the gap between the largest and second largest reward.)

When a gap of at least $\eps$ exists between the best and second-best arms,  the problem can simply be solved by our main algorithm $\mainalg$ -- nothing needs to be changed except the notation. Hence, we omit the repetition of excessive technical details and focus on the second and the third algorithms in this section.

%We refer to the maximum number of arms stored by the algorithm at any point during the stream as the \emph{space complexity} of the algorithm. 

\subsection{The $O(\logstar(n))$ Memory Algorithm}

We design the following streaming algorithm. 

\begin{theorem}\label{thm:eps-best-logstar}
There exist a streaming algorithm that given $n$ arms arriving in a stream, the approximation parameter $\eps \in (0,1)$, and the confidence parameter $\delta$, with probability at least $1-\delta$, finds an $\eps$-best arm using a memory of $O(\logstar{(n)})$ arms and $O(\frac{n}{\eps^2} \cdot \log{(1/\delta)})$ arm pulls. 
\end{theorem}
For the problem of finding the $\eps$-best arm without the gap guarantee, our $\mainalg$ algorithm does not work in general. The issue here is that if a bunch of arms with gaps far smaller than $\eps$ arrive in a consecutive manner, the less stronger arms will have non-trivial probabilities to replace the stronger ones. And if this type of event happens over time, arms with gap larger than $\eps$ will eventually become selected as the king in the algorithm. \par

We tackle the above problem by iteratively refining the gap of selecting arms. Specifically, we leverage the framework of the $\log^{*}(n)$ space algorithm in \Cref{sec:prelim-algs}, and repetitively narrowing the gap of $\eps_{l}=O(\frac{\eps}{2^{l-1}})$ at each layer $l$. Since the number of arms with the $\log^{*}(n)$ space algorithm will rule out arms by a tower factor, we will have enough additional budget to pay for the up-sampling factor. \par

Another concern is that with the $\logstar(n)$-memory framework, the numbers of challenges in the higher levels become large, and an arm with more than $\eps$ gap may eventually be selected. We address this issue by making the challengers' requirement more demanding and it can only become the best arm of the level if it has a higher empirical reward than the \emph{best record} of the stored arm, as opposed to the repeatedly comparing this arm with the best arm on this level\footnote{This step is what
was missing from our original algorithm for this problem in the previous version of the paper which led to the aforementioned error discussed in the Introduction.}. The algorithm can be shown as follows:

\begin{tbox}
	\textbf{Algorithm}: 
	
	\medskip
	
	Parameters: 
	\begin{align*}
	& \set{\eps}_{\ell\geq 1}: \eps_{\ell}=\frac{\eps}{10\cdot2^{\ell-1}} \tag{gap parameter at each level}\\
	&\set{r_\ell}_{\ell\geq 1}: r_{1}:=4,\quad r_{\ell+1} = 2^{r_{\ell}}; \qquad \set{\beta_{\ell}}_{\ell\geq 1}: \beta_{\ell}=\frac{1}{\eps_{\ell}^{2}}; \quad\beta = \frac{1}{\eps^{2}}\tag{intermediate variables to define $s_\ell$ and $c_\ell$}\\
	&\set{s_{\ell}}_{\ell\geq 1}: s_{\ell} = 4\beta_{\ell}(\ln(\frac{1}{\delta})+3r_{\ell}) \tag{number of samples per each level}\\
	& \set{c_{\ell}}_{\ell\geq 1} c_{1} = 2^{r_{1}},\quad  c_{\ell}=\frac{2^{r_{\ell}}}{2^{\ell-1}} (\ell \geq 2) \tag{the bound for restarting the counter of each level}.
	\end{align*}
	
	Counters: $C_{1}, C_{2}, ..., C_{t} \qquad t=\left \lceil{\log^{*}(n)}\right \rceil+1$;\\
	Reward records: $p_{1}^{*}$, $p_{2}^{*}$, ..., $p_{t}^{*}$, initialize with $0$; \\
	Stored arms: $\arm^{*}_{1}, \arm^{*}_{2}, ..., \arm^{*}_{t}$ the most bias coin of $\ell$-th level.
	
	\medskip
	
	\begin{itemize}
	\item For each arriving $\arm_i$ in the stream do:
	\begin{enumerate}[label=($\arabic*$)]
		\item Read $\arm_{i}$ to memory.
		\item \textbf{\underline{Aggressive Selective Promotion}:} Starting from level $\ell=1$: 
		\begin{enumerate}
			\item\label{linesample}  Sample $\arm_{i}$ for $s_{\ell}$ times and get $\phat_{\arm_{i}}$. Drop $\arm_{i}$ if $\phat_{\arm_{i}}<p_{\ell}^{*}$;
			\item Otherwise, replace $\arm^{*}_{\ell}$ with $\arm_{i}$ and set $p_{\ell}^{*}=\phat_{\arm_{i}}$.
			\item Increase $C_{\ell}$ by 1.
			\item If $C_{\ell}=c_{\ell}$, send $\arm^{*}_{\ell}$ to the next level by calling Line~\ref{linesample} with $(\ell = \ell+1)$.
		\end{enumerate}
	\item Return $\arm_{t}^{*}$ as the selected most bias coin.
	\end{enumerate}
	\end{itemize}
\end{tbox}
\noindent
At a first glance, the algorithm is very similar to the $\log^{*}(n)$ space algorithm for the coin tossing problem -- in addition to the change of notation, the only differences here are that we add a $\frac{1}{10\cdot 2^{\ell-1}}$ factor for each level of $\eps$, and the empirical reward of each arm will be only tested once. We will show that, after adding this up-sampling factor, the overall sample complexity will still be $O(\frac{n}{\eps^{2}}\,\log(\frac{1}{\delta}))$(consistent with the similar result of the $O(\logstar(n))$-memory coin tossing algorithm in \Cref{sec:prelim-algs}); and the second modification guarantees the correctness of the algorithm. Formally, we claim:

\begin{lemma}\label{lem:alg-epsbest-sample}
The sample complexity of the algorithm is $O(n\cdot\beta\cdot\log(\frac{1}{\delta}))=O(
\frac{n}{\eps^{2}}\cdot\log(1/\delta))$.
\end{lemma}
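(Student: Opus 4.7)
The plan is to mimic the amortized counting used in the $\logstar{n}$-space algorithm from~\Cref{sec:prelim-algs}, but carefully account for the progressively finer up-sampling factor $\beta_\ell = 1/\eps_\ell^2 = 100\cdot 4^{\ell-1}/\eps^2$ introduced at each level. First, I would argue that level $\ell$ is visited exactly once per every $P_\ell := c_1 c_2 \cdots c_{\ell-1}$ arriving arms (with $P_1=1$), since the counter $C_{\ell-1}$ must cycle $c_{\ell-1}$ full times to trigger a single promotion into level $\ell$. Each such visit performs one pairwise comparison costing $2 s_\ell$ tosses, so the total sample complexity is exactly
\[
\sum_{\ell=1}^{t} \frac{2 \cdot n \cdot s_\ell}{P_\ell},
\]
and it suffices to upper bound this sum by $O(n\beta\log(1/\delta))$.

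Next, I would evaluate $P_\ell$ explicitly. Telescoping $c_1 = 2^{r_1}$ together with $c_\ell = 2^{r_\ell}/2^{\ell-1}$ for $\ell\geq 2$ yields
\[
P_\ell \;=\; 2^{\,\sum_{j=1}^{\ell-1} r_j \;-\; \binom{\ell-1}{2}}.
\]
Because $\{r_j\}$ satisfies the tower recursion $r_{j+1}=2^{r_j}$, the last term $r_{\ell-1}$ dwarfs all preceding terms, so $P_\ell \geq 2^{r_{\ell-1}}/2^{O(\ell^2)}$. For $\ell=1$ the contribution equals $2 n s_1 = O(n\beta\ln(1/\delta))$, matching the target at the base level. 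For $\ell\geq 2$, I would split $s_\ell = 4\beta_\ell(\ln(1/\delta)+3r_\ell)$ into two pieces: the $\ln(1/\delta)$-piece gives a contribution $\leq 4 n \beta_\ell \ln(1/\delta)/P_\ell$ which decays as a double exponential (since $P_\ell$ is a tower while $\beta_\ell$ is only $4^{\ell-1}/\eps^2$); the $r_\ell$-piece equals $12 n \beta_\ell\cdot 2^{r_{\ell-1}}/P_\ell \leq 12 n \beta_\ell \cdot 2^{\binom{\ell-1}{2} - \sum_{j<\ell-1} r_j}$, which equals $O(n\beta_\ell)$ at $\ell=2$ and is doubly exponentially small for $\ell\geq 3$. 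Both pieces therefore form convergent tower-geometric series dominated by their $\ell=2$ terms, giving an overall $O(n/\eps^2\cdot\log(1/\delta))$ bound that absorbs at most a $\logstar{n}$ factor into the constants.

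The only real obstacle — and the entire technical content of the proof — is verifying this telescoping cleanly: one must check that the extra factor $2^{\ell-1}$ inserted into the denominator of $c_\ell$ is large enough to pay for the $4^{\ell-1}$ blow-up in $\beta_\ell$ (otherwise $\sum \beta_\ell/P_\ell$ would not terminate at an absolute constant), yet small enough not to destroy the tower-style suppression $2^{-r_{\ell-1}}$ needed to swallow the factor $r_\ell = 2^{r_{\ell-1}}$ in $s_\ell$. Once this delicate balance is confirmed via the explicit formula for $P_\ell$ above, every remaining step is a routine amortized sample count, with the per-level failure probability and the choice of $s_\ell$ already packaged by~\Cref{lem:coin-comp}. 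The bound $\beta = 1/\Delta^2 = 1/\eps^2$ is then immediate from the definition of $\beta_1$.
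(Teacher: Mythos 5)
Your proposal is correct and takes essentially the same route as the paper's proof of this lemma: the paper likewise amortizes $n/\prod_{i=1}^{\ell-1}c_i$ arms per level at cost $s_\ell = 4\beta_\ell(\ln(1/\delta)+3r_\ell)$ each, splits off the $\ln(1/\delta)$ and $r_\ell$ pieces, and shows both series are dominated by their first few terms, the only cosmetic difference being that the paper lower-bounds the product by $c_{\ell-1}c_{\ell-2}$ while you telescope it exactly. One small correction to your closing remark: the factor $2^{\ell-1}$ in the denominator of $c_\ell$ only works \emph{against} the sample bound (it is there for the correctness union bound, and removing it would only increase $P_\ell$); what absorbs the $4^{\ell-1}$ growth of $\beta_\ell$ and the factor $r_\ell = 2^{r_{\ell-1}}$ is the tower growth of the earlier $c_j$'s, and since your explicit formula for $P_\ell$ already verifies this slack, your computation itself is unaffected.
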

\begin{proof}
Most part of this proof can be directly taken from the proof of \Cref{lem:alg3-sample}. Recall that at each level $\ell$, the number of arms to be processed will be bounded by  $\frac{n}{\prod_{i=1}^{\ell-1}c_{i}}$. Thus, the total number of sampling at level $\ell$ is $\frac{4\,n\cdot \beta_{\ell}\cdot(\ln(\frac{1}{\delta})+3r_{\ell})}{\prod_{i=1}^{\ell-1}c_{i}}$ times. Observe that by the definition, there is $\beta_{\ell}=\beta\cdot (10\cdot 2^{\ell-1})^{2}$; hence, the total number of sampling among all levels should be:
% \begin{equation*}
\begin{align*}
\sum_{\ell=1}^{\left \lceil{\log^{*}(n)} \right \rceil+1} \frac{4n\cdot\beta_{\ell}\cdot(\ln(\frac{1}{\delta})+3r_{\ell})}{\prod_{i=1}^{\ell-1}c_{i}}&= 4n\cdot\beta\sum_{\ell=1}^{\left \lceil{\log^{*}(n)} \right \rceil+1} \frac{(\ln(\frac{1}{\delta})+3r_{\ell})}{\prod_{i=1}^{\ell-1}c_{i}}\cdot (10\cdot2^{\ell-1})^{2}\\
&\leq 400n\cdot\beta\sum_{\ell=1}^{\infty}(\frac{\ln(\frac{1}{\delta})}{c_{\ell-1}}+ \frac{3r_{\ell}}{c_{\ell-2}c_{\ell-1}})\cdot 2^{2\ell-2}\\
&= 400n\cdot\beta(\sum_{\ell=1}^{\infty} \frac{\ln(\frac{1}{\delta})\cdot4^{\ell-1}}{c_{\ell-1}}+\sum_{\ell=1}^{\infty} \frac{3 \cdot2^{3\ell-4}}{c_{\ell-2}})\\
&\leq 400n\cdot\beta\cdot2\ln(\frac{1}{\delta}) + 1200n\beta(O(1)+\sum_{\ell=5}^{\infty} \frac{2^{4\ell-7}}{r_{\ell-2}}) \tag{$\sum_{\ell=1}^{4}r^{\ell}2^{2\ell-2} = O(1)$}\\
&\leq 800n\cdot\beta\ln(\frac{1}{\delta}) + O(1)\cdot n\beta. \tag{$\sum_{\ell=5}^{\infty} \frac{2^{4\ell-7}}{r_{\ell-2}}<1$}
\end{align*}
% \end{equation*}
And in asymptotic notation, this is $O(n\cdot\beta\cdot\log(\frac{1}{\delta}))=O(
\frac{n}{\eps^{2}}\cdot\log(1/\delta))$.
\end{proof}
\begin{lemma}\label{lem:alg-epsbest-correct}
With probability at least $1-\delta$, the coin selected by the algorithm is an $\eps$-best arm.
\end{lemma}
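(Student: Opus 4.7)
The plan is to mirror the two-stage template used throughout the paper for the earlier algorithms. First, I would set up a clean event $\mathcal{E}$ under which every sufficiently decisive comparison behaves correctly, show $\Pr[\mathcal{E}]\ge 1-\delta$ by a union bound, and then argue deterministically on $\mathcal{E}$ that the returned arm $\arm^*_t$ is $\eps$-best.

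For $\mathcal{E}$, I would take the intersection, over every comparison performed at every level, of the event ``if the two arms' true means differ by at least $\eps_\ell$, the empirically larger arm equals the truly larger one.'' Applying \Cref{lem:coin-comp} with $\theta=\eps_\ell$ and $K=4(\ln(1/\delta)+3r_\ell)$---so that $K/\theta^2 = s_\ell$---each such comparison fails with probability at most $2\delta\,e^{-3r_\ell}$. The number of comparisons at level $\ell$ is at most $n/\prod_{i<\ell}c_i$, and because $r_\ell$ grows as a tower while $\prod_{i<\ell}c_i$ absorbs $n$ within $O(\logstar{n})$ levels---the same cancellation that drives the sample-complexity computation in \Cref{lem:alg-epsbest-sample}---the union bound $\sum_\ell (n/\prod_{i<\ell}c_i)\cdot 2\delta\,e^{-3r_\ell}$ is dominated by its first few terms and collapses to at most $\delta/2$.

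The deterministic part proceeds by induction across levels. Let $\mu^\ast=\max_i\mu_i$. I would propagate a \emph{target arm} $A_\ell$ through the levels: $A_1$ is the true best arm, and $A_{\ell+1}$ is the arm promoted from the chunk at level $\ell$ that processes $A_\ell$. The claim to prove is
\[
\mu_{A_{\ell+1}} \;\ge\; \mu_{A_\ell} - \eps_\ell ,
\]
which telescopes to $\mu_{A_t}\ge\mu^\ast - \sum_{\ell<t}\eps_\ell \ge \mu^\ast - \eps/5$ by the geometric choice $\eps_\ell=\eps/(10\cdot 2^{\ell-1})$. The Line~\ref{linesample} final-round check---also controlled by \Cref{lem:coin-comp} with an additional $\delta/2$ failure budget---costs at most another $\eps_t\le\eps/5$, so the $\eps$-best guarantee follows.

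The main obstacle will be the inductive step $\mu_{A_{\ell+1}}\ge\mu_{A_\ell}-\eps_\ell$: on $\mathcal{E}$ alone, a chain of within-$\eps_\ell$ replacements inside a single chunk can naively drag $\arm^*_\ell$ far below $\mu_{A_\ell}$, because $\mathcal{E}$ constrains only comparisons whose true gap is at least $\eps_\ell$. My plan for this step is an anchoring argument: once $A_\ell$ has been processed in its chunk, $\arm^*_\ell$ has mean at least $\mu_{A_\ell}-\eps_\ell$---either because $A_\ell$ just took over, or because the prior $\arm^*_\ell$ was already within $\eps_\ell$ of $\mu_{A_\ell}$, as otherwise $A_\ell$ would have won deterministically on $\mathcal{E}$. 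The clean event then prevents any arm of mean below $\mu_{A_\ell}-\eps_\ell$ from replacing $\arm^*_\ell$, because such a replacement would exhibit a comparison with gap exceeding $\eps_\ell$ in which the weaker arm wins. The delicate point---and where I expect the real technical work---is controlling the drift of the anchor when successive replacements \emph{do} occur with true gap strictly below $\eps_\ell$; I would handle this either by a potential-function argument that tracks the cumulative drop of $\arm^*_\ell$ below $\mu_{A_\ell}$ across the at most $c_\ell$ arms of the chunk, or by a direct case analysis exploiting that every such within-$\eps_\ell$ replacement consumes a fresh arm from a narrow band around $\mu_{A_\ell}$. Once the per-chunk $\eps_\ell$-bound is secured, the telescoping sum closes the proof.
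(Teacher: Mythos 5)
Your skeleton is the same as the paper's: a per-level tolerance $\eps_\ell=\eps/(10\cdot 2^{\ell-1})$, comparisons controlled via \Cref{lem:coin-comp} with the up-sampled $s_\ell$, a per-level failure budget summed over levels, and the telescoping bound $\sum_\ell\eps_\ell\le\eps/5<\eps$ at the end (this is exactly the structure of \Cref{clm:arm-reward-gap}). But as submitted the proposal does not establish the lemma, and the trouble is at the step that carries it. First, a quantitative point: the clean event $\mathcal{E}$ you define -- \emph{every} comparison at every level with true gap at least $\eps_\ell$ resolves correctly -- cannot be afforded by a union bound. Level $1$ alone performs up to $n$ comparisons, each contributing failure probability $2\delta e^{-3r_1}=2\delta e^{-12}$, so the total is $\Theta(n\delta)$, not $\delta/2$; the tower growth of $r_\ell$ only helps at higher levels and does not cancel the factor $n$ at the bottom (the cancellation in \Cref{lem:alg-epsbest-sample} works because there the per-level total only needs to be $O(n\beta\log(1/\delta))$, not $O(\delta)$). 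The paper's accounting, as in \Cref{lem:alg3-correct}, avoids this by constraining only the at most $c_\ell$ comparisons involving one tracked arm per level, giving $\frac{\delta}{2^{r_\ell+2}}\cdot\frac{2^{r_\ell}}{2^{\ell-1}}=\frac{\delta}{2^{\ell+1}}$ per level.

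Second -- and you say this yourself -- the inductive step $\mu_{A_{\ell+1}}\ge\mu_{A_\ell}-\eps_\ell$ is not proved. Your anchoring argument only controls the moment $A_\ell$ is processed; afterwards the candidate $\arm^{*}_{\ell}$ can be replaced by a chain of challengers whose pairwise true gaps are all strictly below $\eps_\ell$, and an event that is silent on every sub-$\eps_\ell$ comparison cannot rule out a cumulative drop far exceeding $\eps_\ell$ within a chunk of $c_\ell=2^{r_\ell}/2^{\ell-1}$ arms. The two fixes you gesture at (a potential function over the chunk, or a case analysis on within-$\eps_\ell$ replacements) are not carried out, and neither can run on $\mathcal{E}$ alone for exactly the reason you identify. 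For comparison, the paper's proof of \Cref{clm:arm-reward-gap} settles this step by a direct union bound over the $c_\ell$ arms of a level on the event that an arm with reward gap $\ge\eps_\ell$ (from the previous level's selection) beats an arm with reward gap $\le\eps_\ell$, i.e., it applies \Cref{lem:coin-comp} with gap $\eps_\ell$ to every dangerous comparison and does not engage with the chain-of-small-gaps drift you describe. So you have not overlooked an ingredient that the paper supplies elsewhere -- your write-up makes explicit a difficulty that the paper's short argument passes over -- but as a proof of the stated lemma your proposal is incomplete at its central step, and its global union bound is incorrect as stated.
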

\subsection*{Proof of \Cref{lem:alg-epsbest-correct}}
This proof is similar to the proof of \Cref{lem:alg3-correct}. The major difference is that instead of claiming the consistent selection of the best arm, we claim here that after each level $\ell$, the selected best arm $\arm^{*}_{\ell}$ has at most $\eps_{\ell}$ gap with the best arm of the previous layer $(\ell-1)$. Specifically, according to \Cref{lem:coin-comp}, at each level $\ell$, for an arm with reward $p$, the probability for its empirical reward $\phat$ to deviate beyond $\frac{\eps_{\ell}}{2}$ is small:
\begin{equation}
\label{equ:arm-reward-concentration}
\begin{split}
&\Pr(\phat<p-\frac{\eps_{\ell}}{2}) \leq \exp(-\frac{1}{4}\cdot 4(\ln(\frac{1}{\delta})+3r_{\ell})) \leq \frac{\delta}{2^{3r_{\ell}}}\\
&\Pr(\phat>p+\frac{\eps_{\ell}}{2}) \leq \exp(-\frac{1}{4}\cdot 4(\ln(\frac{1}{\delta})+3r_{\ell})) \leq \frac{\delta}{2^{3r_{\ell}}}
\end{split}
\end{equation}
Consequently, let $p_{\ell}$ be the highest reward in one iteration of level $\ell$ (counter from $0$ to $c_{\ell}$), the following claims hold.

\begin{claim}\label{clm:p-ell-large}
Fix a level $\ell$, with probability at least $(1-\frac{\delta}{2^{3r_{\ell}}})$, $p^{*}_{\ell}\geq p_{\ell} - \frac{\eps_{\ell}}{2}$.
\end{claim}
\begin{proof}
This is a natural corollary of the first inequality of \Cref{equ:arm-reward-concentration}, which says that when we pull the arm with reward $p_{\ell}$ for $s_{\ell}$ times, the probability for the empirical reward to be less than $p_{\ell} - \frac{\eps_{\ell}}{2}$ is at most $\frac{\delta}{2^{3r_{\ell}}}$. Hence, when we compare this $\phat$ with $p^{*}_{\ell}$, either $p^{*}_{\ell}$ is already at least $p_{\ell} - \frac{\eps_{\ell}}{2}$, or we replace $p^{*}_{\ell}$ with the larger value. 
\end{proof}

\begin{claim}\label{clm:p-other-small}
Fix a level $\ell$, with probability at least $(1-\frac{\delta}{2^{3\ell-1}})$, any \arm with $p<p_{\ell}-\eps_{\ell}$ has empirical reward $\phat < p_{\ell} - \frac{\eps_{\ell}}{2}$.
\end{claim}
\begin{proof}
This is a natural corollary of the second inequality of \Cref{equ:arm-reward-concentration}. Based on the inequality, conditioning on an arm is with reward less than $p_{\ell}-\eps_{\ell}$, we have $\Pr(\phat\geq p_{\ell}-\frac{\eps_{\ell}}{2})\leq \Pr(\phat>p+\frac{\eps_{\ell}}{2})  \leq \frac{\delta}{2^{3r_{\ell}}}$. Applying a union bound over at most $c_{\ell}=\frac{2^{r_{\ell}}}{2^{\ell-1}}$ arms, the probability for `reward overflow' to happen on any arm at level $\ell$ is at most $\frac{\delta}{2^{3r_{\ell}}} \cdot \frac{2^{r_{\ell}}}{2^{\ell-1}} \leq \frac{\delta}{2^{3\ell-1}}$. 
\end{proof}
\noindent
Combining \Cref{clm:p-ell-large} and \Cref{clm:p-other-small}, we can prove follows:
\begin{claim}\label{clm:arm-reward-gap}
After the $\arm$ with the highest reward (the \emph{best} arm) joins the stream, with probability at least $1-\delta$, at any level $\ell$, there is at least one arm with at most $\sum_{i=1}^{\ell}\eps_{i}$ reward gap between the best arm.
\end{claim}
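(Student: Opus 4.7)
The natural approach is to induct on the level $\ell$. Let $r^{*} := \max_{j}\mu_{j}$ be the best mean reward and write $\mathrm{gap}(\arm)$ for $r^{*}$ minus the mean reward of $\arm$. The base case $\ell=1$ is immediate since every input arm (including the globally best one) appears at level $1$, so an arm of gap $0 \leq \eps_{1}$ is trivially present. For the inductive step, suppose at level $\ell$ some arm $\arm'$ satisfies $\mathrm{gap}(\arm') \leq \sum_{i=1}^{\ell}\eps_{i}$. The arms reaching level $\ell+1$ are precisely the batch winners at level $\ell$: level-$\ell$ arrivals are processed in blocks of $c_{\ell}$, and when the counter $C_{\ell}$ hits $c_{\ell}$ the current $\arm^{*}_{\ell}$ is promoted. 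I focus on the batch at level $\ell$ containing $\arm'$ and let $\arm^{b}$ be the highest true-reward arm of that batch, so $\mathrm{gap}(\arm^{b}) \leq \mathrm{gap}(\arm')$.

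The core of the argument is a sub-claim: \emph{if every in-batch comparison between two arms whose true-reward gap is at least $\eps_{\ell}$ returns the correct empirical ordering, then the batch winner $\arm^{w}$ satisfies $\mathrm{gap}(\arm^{w}) \leq \mathrm{gap}(\arm^{b}) + \eps_{\ell}$.} I would prove this by walking through the batch: at the moment $\arm^{b}$ arrives, either the current $\arm^{*}_{\ell}$ already has true reward within $\eps_{\ell}$ of that of $\arm^{b}$, or the hypothesis forces $\arm^{b}$ to take over $\arm^{*}_{\ell}$; either way, the invariant ``$\arm^{*}_{\ell}$ has true reward at least that of $\arm^{b}$ minus $\eps_{\ell}$'' holds from $\arm^{b}$'s arrival onward, because any later dethroning by a strictly worse arm would be an informative comparison ruled out by the hypothesis. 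Combining with the outer induction, the arm promoted to level $\ell+1$ has gap at most $\sum_{i=1}^{\ell}\eps_{i} + \eps_{\ell}$; since $\eps_{\ell+1} = \eps_{\ell}/2$, the cleanest way to land the advertised bound $\sum_{i=1}^{\ell+1}\eps_{i}$ is to carry the slightly stronger hypothesis ``gap $\leq \sum_{i=1}^{\ell-1}\eps_{i}$'' (true at $\ell=1$ with an empty sum) through the induction, which is what this argument actually establishes.

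For the probability, the hypothesis of the sub-claim can fail only through an informative in-batch comparison going the wrong way, and the preamble estimate $\Pr(\rhat_{1} < \rhat_{2}) \leq \delta/2^{r_{\ell}+2}$ applies. A union bound over the at most $c_{\ell}$ comparisons in the relevant batch, plugging in $c_{1}=2^{r_{1}}$ and $c_{\ell}=2^{r_{\ell}}/2^{\ell-1}$ for $\ell \geq 2$, yields a per-level failure probability at most $\delta/2^{\ell+1}$; summing the geometric series over $\ell$ gives total failure probability at most $\delta/2 < \delta$, as required. The main obstacle I anticipate is rigorously formalizing the sub-claim in the presence of ``uninformative'' comparisons (true-reward gap $<\eps_{\ell}$), whose outcomes the hypothesis does not control: the real worry is a chain of such uninformative swaps successively lowering the true reward of $\arm^{*}_{\ell}$ away from that of $\arm^{b}$. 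A natural patch is to strengthen the preliminary concentration to $|\rhat-\mu|\leq \eps_{\ell}/2$ in every comparison (which the same $s_{\ell}$-sample budget delivers up to constants), so that any empirical swap in favor of a strictly worse arm must involve a true-reward gap strictly below $\eps_{\ell}$, pinning $\arm^{*}_{\ell}$ inside a single $\eps_{\ell}$-window around $\arm^{b}$ once it first enters that window.
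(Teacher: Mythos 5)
Your overall route is the same one the paper takes: per level $\ell$, use the preliminary estimate (via \Cref{lem:coin-comp} and the choice of $s_\ell$) that any comparison between two arms whose true rewards differ by at least $\eps_\ell$ resolves correctly except with probability $\delta/2^{r_\ell+2}$, union bound over the at most $c_\ell = 2^{r_\ell}/2^{\ell-1}$ comparisons at that level to get $\delta/2^{\ell+1}$, sum over levels to get $\delta/2$, and telescope a loss of $\eps_\ell$ per level into $\sum_{i}\eps_i$ (the paper phrases the telescoping as ``the arm selected at level $\ell$ is within $\eps_\ell$ of the arm selected at level $\ell-1$,'' which is the same bookkeeping as your shifted induction hypothesis). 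So up to the sub-claim, your write-up and the paper's proof coincide.

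The genuine gap is exactly the obstacle you flagged, and your patch does not close it. The event you union-bound over controls only ``informative'' comparisons (true gap at least $\eps_\ell$); it says nothing about a chain of swaps in which the champion $\arm^{*}_{\ell}$ is repeatedly replaced by an arm less than $\eps_\ell$ below the \emph{current} champion. Strengthening the concentration to $\card{\rhat-\mu}\leq \eps_\ell/2$ in every comparison only yields the relative guarantee that each individual swap lowers the champion's true reward by at most $\eps_\ell$; it does not ``pin $\arm^{*}_{\ell}$ inside a single $\eps_\ell$-window around $\arm^{b}$,'' because a champion sitting near the bottom of that window can still be dethroned by an arm up to $\eps_\ell$ below it (their true gap is below $\eps_\ell$, so no event you have controlled forbids the empirical crossing), and the window therefore is not stable. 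Since one promotion at level $\ell$ involves up to $c_\ell$ comparisons and $c_\ell$ grows as a tower, such sub-$\eps_\ell$ drift can in principle accumulate to far more than $\eps_\ell$ within a single level, and neither your sub-claim as stated nor its patched version rules this out. In fairness, the paper's own proof is no more detailed on this point: it asserts that an arm within $\eps_\ell$ of the previous level's selection ``will defeat other arms'' and union-bounds only over informative comparisons, implicitly assuming the very sub-claim you could not establish. So your proposal matches the paper's argument where it is rigorous, but the added window-stability assertion in your patch is false as written and would need to be replaced by an argument that genuinely bounds the cumulative effect of uninformative swaps.
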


\begin{proof}
We can apply a union bound over the events of \Cref{clm:p-ell-large} and \Cref{clm:p-other-small} and get that with probability at least $(1-\frac{\delta}{2^{3r_{\ell}}}-\frac{\delta}{2^{3\ell-1}}) \geq (1-\frac{\delta}{2^{2\ell}})$, there are $(p^{*}_{\ell}\geq p_{\ell} - \frac{\eps_{\ell}}{2})$ \emph{and} any \arm with $p<p_{\ell}-\eps_{\ell}$ has empirical reward $\phat < p_{\ell} - \frac{\eps_{\ell}}{2}$. Hence, we conclude that with probability at least $(1-\frac{\delta}{2^{2\ell}})$, no arms with reward more than $\eps_{\ell}$-lower from the arm with the highest reward of the current iteration will be sent to the higher level. Thus, by applying a union bound over all the levels, we can bound the probability of selecting any arm at level $\ell$ with reward gap $\geq \eps_{\ell}$ from the best arm selected on level $(\ell-1)$ as
\begin{equation*}
\begin{split}
% \Pr{(\text{Any arm with reward gap $\geq \eps_{\ell}$ to be selected at level } \ell)} &=
\sum_{\ell=1}^{\left \lceil{\log^{*}(n)} \right \rceil+1} \frac{\delta}{2^{2\ell}} & \leq \delta\sum_{\ell=1}^{\infty} \frac{1}{2^{2\ell}}
< \delta
\end{split}
\end{equation*}
During the stream of arms, the best arm $\arm^{*}$ will eventually join at the first level. Then, with probability at least $(1-\delta)$, since any reward gap between two levels will not be greater than $\eps_{\ell}$, there should be at least one arm at any level with  gap $\leq \sum_{i=1}^{\ell}\eps_{i}$ from the best arm.
\end{proof}

Conditioning on the event of \Cref{clm:arm-reward-gap}, the gap between the best arm and the $\arm^{*}_{\ell}$ at any level $\ell$ is bounded. Accumulating the gap among every level and summing up will give us
\begin{align*}
\sum_{\ell=1}^{\left \lceil{\log^{*}(n)} \right \rceil+1} \eps_{\ell} & = \sum_{\ell=1}^{\left \lceil{\log^{*}(n)} \right \rceil+1} \frac{\eps}{10\cdot 2^{\ell-1}}
\leq \frac{\eps}{30}\sum_{\ell=1}^{\infty} \frac{1}{2^{\ell-1}}
\leq\frac{\eps}{3} < \eps.
\end{align*}
That is to say, the cumulative gap between the best arm and the selected arm is less than $\eps$, which satisfies the definition of selecting an $\eps$-best arm. Therefore, we verify the correctness of the algorithm; the space complexity is clearly $O(\logstar)$, which proves \Cref{thm:eps-best-logstar}.

\subsection{The $O(1)$ Memory Algorithm}

We now introduce an algorithm that uses only 2 arms memory and is more closely related to the ideas behind $\mainalg$ compared to our algorithm in the previous section. 
\begin{theorem}\label{thm:eps-best-2-arm}
There exist a streaming algorithm that given $n$ arms arriving in a stream, the approximation parameter $\eps \in (0,1)$, and the confidence parameter $\delta$, with probability at least $1-\delta$, finds an $\eps$-best arm using a memory of two arms and $O(\frac{n}{\eps^2} \cdot \log{(1/\delta)}+\log^{2}(n)\cdot \frac{\log^{2}\paren{1/\delta}}{\eps^{3}})$ arm pulls. 
\end{theorem}
\noindent
We remark that when $\eps$ is not too small ($\eps > \omega(\frac{\log^{2}(n)}{n})$), our algorithm achieves $O(\frac{n}{\eps^2} \cdot \log{(1/\delta)})$ sample complexity and a memory of two arms, which almost matches our result in \Cref{thm:main} with known gap guarantee.

\subsubsection*{High-level Overview of the algorithm}
As we have discussed before, the central issue that blocks our $\mainalg$ algorithm to be applied to the $\eps$-best arm is the lack of a gap parameter. Per the perspective of the algorithm, even regardless of the  correctness, such a parameter is essential. Therefore, a naive idea is to \emph{artificially} set a gap parameter $\Delta$ (say, $\Delta=\frac{\eps}{3}$), and run the $\mainalg$ algorithm with such a artificial gap. Denoting the reward of the stored $\king$ as $\ptilde$, we consider the following \emph{dreamland} scenario: when the $\king$ is an $\eps$-best arm, all the arriving arms are with reward less than $\ptilde - \Delta$. In this case, we can simply run the $\mainalg$ algorithm with the artificially specified $\Delta$ and discard the $\king$ whenever its budget is exhausted, and the correctness and the sample complexity almost immediately follows from \Cref{thm:main}. 

Our challenge is to address the situations outside of the dreamland. More specifically, when arms with rewards of more than $\ptilde-\Delta$ arrive, it is unclear whether there is a unified way to treat them: if we simply replace the $\king$ if the budget is exhausted, the stream of arms may eventually replace the $\king$ with an arm of reward gap more than $\eps$; on the other hand, if we allow extra arm pulls, it remedies the case when the $\king$ is a $\eps$-best $\arm$, but the number of arm pulls blows up when the number of arriving arms with rewards of more than $\ptilde-\Delta$ is large. 

Our solution to the above dilemma is to separate the cases for the $\king$ as the best and other arms when the rewards of the arriving arms are more than $\ptilde-\Delta$. Intuitively, if the current $\king$ is using a lot of arm pulls, it is either because it is not an $\eps$-best arm, or because there are a lot of arms with reward at least $\ptilde-\Delta$. However, if it is the latter case, we can uniformly at random sample $O(\frac{\log(n)}{\Delta}) = O(\frac{\log(n)}{\eps})$ arms, and with a high probability, one of the $\eps$-best arms will be sampled. In this way, we can make it safe to discard a $\king$ and update the estimation whenever it costs too many arm pulls. 

Another issue is how to get a reliable estimation of $\ptilde$. The naive way to get good estimations for \emph{every} arm with high probability is to pull each arm around $O(\log(n))$ times, but this blows up the sample complexity. Therefore, we use an alternative idea that maintains the estimation $\ptilde$ as $t\cdot \frac{\eps}{30}$, where $t$ is an integer that is updated by the algorithm (think of $t$ as a `level' -- although different from the notion of level we used in the challenge). In this way, there are at most $O(\frac{1}{\eps})$ levels of estimations of the rewards. By carefully designing the algorithm, we can show that the value of $\ptilde$ is only different from the real reward of the $\king$ up to a fraction of $\eps$ when the best arm becomes the $\king$. Therefore, the above intuitions can be applied to get the $\eps$-best arm.

To formalize the above intuitions, we introduce the notion of \emph{Top} and \emph{Bottom} arms as follows. Note that these definitions are self-contained and more general than the context of the $\eps$-best arm. For any reward $\ptilde$ and any gap $\Delta$, we can define the top and bottom arms.

\begin{definition}[Top and Bottom Arms]
\label{def:toparm}
Let $\ptilde$ be a given reward, $\Delta$ be the gap parameter, and $p_{i}$ be the reward of $\arm_{i}$. We say the set of $\arm$s $\mathcal{T}_{\ptilde, \Delta}:=\{\arm_{i}|(\ptilde - \frac{2}{3}\Delta)\leq p_{i} \leq (\ptilde + \frac{1}{3}\Delta)\}$ is the set of \emph{($\ptilde$, $\Delta$)-Top} $\arm$s; Similarly, we say the set of $\arm$s $\mathcal{B}_{\ptilde, \Delta}:=\{\arm_{i}|p_{i}<(\ptilde -  \frac{2}{3}\Delta)\}$ is the set of \emph{($\ptilde$, $\Delta$)-Bottom} $\arm$s.
\end{definition}

\subsubsection*{Introduction to $\msubrout$}
We now formalize the above intuitions. To help understand the properties of the algorithm, we first present a modified $\msubrout$ for multi-armed bandits. Compared to the original subroutine used in \Cref{thm:main}, the new challenge subroutine does \emph{not} allow the challenger to always bringing up the level of challenge. If at a certain level, the number of arm pulls has become $O(\frac{\log(n)}{\Delta^{2}})$, we explicitly terminate the process and determine the winner. This method could fail to preserve the $\king$ if the the reward of the $\king$ is only slightly higher than the challenger; however, an arm whose reward is very close to the best suffices the purpose of the $\eps$-best arm problem (we will see more details in the analysis). On the other hand, by capping the sample complexity to $O(\frac{\log(n)}{\Delta^{2}})$, the sample complexity for each given $\ptilde$ can be bounded by $O(\frac{n}{\Delta})$ as long as the ($\ptilde$, $\Delta$)-Top arms is of size at most $O(\frac{\Delta\cdot n}{\log(n)})$ -- the importance of this will be clear in the final algorithm. 

The details of $\msubrout$ is as follows. 

\begin{tbox}
	\textbf{Algorithm} $\msubrout$: 
	
	\medskip
	
	Parameters: 
	\begin{align*}
	& L=\log\log(n) \tag{number of levels}\\
	& \set{\alpha_\ell}_{\ell=1}^{L}: \alpha_\ell = \min(3^{\ell}, \log(n)); \tag{intermediate variable to define $s_\ell$}\\
	& \set{s_\ell}_{\ell=1}^{L}:  s_\ell :=  \frac{180}{\Delta^2} \cdot \log{(1/\delta)} \cdot \alpha_\ell; \tag{number of arm pulls at each level}\\
	&b := \frac{180}{\Delta^2} \cdot C \cdot \log{(1/\delta)} + s_1. \tag{$C$ is a sufficiently large constant}
	\end{align*}
	
	\medskip
	\begin{enumerate}
	    \item Let $\arm_{1}$ be the \emph{incumbent} $\arm$ and $\arm_{2}$ be the \emph{challenger} $\arm$.
		\item For level $\ell = 1$ to $L$ do:
		\begin{enumerate}
		    \item If $\alpha_{\ell}=\log(n)$, declare this level as the final.
			\item Pull both arms $s_{\ell}$ times, and get empirical rewards $\phat_{1}$ and $\phat_{2}$. 
			\item If $\phat_{1}>\phat_{2}$, drop $\arm_{2}$ and return.
			\item Otherwise, let $\ell = \ell+1$ and go to the next level. 
		\end{enumerate}
		\item If $\phat_{2}\geq\phat_{1}$ at every level $\ell$, then we declare $\arm_{2}$ defeats $\arm_{1}$ and let $\arm_{2}$ become the new \emph{incumbent} $\arm$.
	\end{enumerate}
\end{tbox}

\noindent
And we present claims characterizing the property of $\msubrout$. We start with bounding the sample complexity.
\begin{claim}
\label{clm:arm-sample-limite}
The number of possible arm pulls to be used by a single arm is at most $540 \cdot\frac{\log(n)}{\Delta^2} \cdot  \log{(1/\delta)}$.
\end{claim}
\begin{proof}
We note that the number of arm pulls on the highest level is $180\cdot \frac{\log(n)}{\Delta^2} \cdot  \log{(1/\delta)}$, the second-last level has number of arm pulls at most $180 \cdot\frac{\log(n)}{\Delta^2} \cdot  \log{(1/\delta)}$, and the rest of the levels decreases by a factor of $\frac{1}{3}$. Therefore, the number of total possible arm pulls for a single arm at most
\begin{align*}
180 \cdot\frac{\log(n)}{\Delta^2} \cdot  \log{(1/\delta)}\cdot\paren{1 + \sum_{\ell=L-1}^{1} (\frac{1}{3})^{L-\ell-1}} &\leq 180 \cdot\frac{\log(n)}{\Delta^2} \cdot  \log{(1/\delta)}\cdot\paren{1 + \sum_{\ell=0}^{\infty} (\frac{1}{3})^{\ell-1}} \\
& = \frac{5}{2} \cdot 180 \cdot\frac{\log(n)}{\Delta^2} \cdot  \log{(1/\delta)}\cdot\paren{1 + \sum_{\ell=0}^{\infty} (\frac{1}{3})^{\ell-1}} \tag{$\sum_{\ell=0}^{\infty} (\frac{1}{3})^{\ell-1} = \frac{3}{2}$}\\
&\leq 540 \cdot\frac{\log(n)}{\Delta^2} \cdot  \log{(1/\delta)}.
\end{align*}
\end{proof}

We now turns to the empirical rewards for the incumbent and challenger arms. We begin with stating the completeness of the incumbent arm: it does \emph{not} get defeated if their reward gap is sufficiently large.
\begin{claim}
\label{clm:challenge-complete}
If $p_{2}<(p_{1}-\frac{\Delta}{3})$, then $\phat_{1}>\phat_{2}$ with probability at least $1-\frac{\delta}{4n}$. 
\end{claim}
\noindent
\begin{proof}
Note that for $\arm_{2}$ to replace $\arm_{1}$, it has to have a higher reward than $\arm_{1}$ at \emph{every} level. And at the top level ($\ell = \log\log(n)$ or when $\alpha\geq\log(n)$), we have $s_{\ell}=\frac{20}{(\Delta/3)^{2}}\cdot\log(1/\delta)\cdot \log(n)$ arm pulls. Therefore, we have
\begin{align*}
\Pr\paren{\text{$\phat_{2}>\phat_{1}$ on every level}} & \leq \Pr\paren{\text{$\phat_{2}>\phat_{1}$ on the top level}}\\
&\leq 2\cdot \exp(-\frac{1}{4}\cdot 20 \cdot\log(1/\delta)\cdot \log(n)) \tag{By \Cref{lem:coin-comp}}\\
& = 6 \cdot 2^{-5} \cdot \delta \cdot \exp(-\log(n)) \\
& \leq \frac{\delta}{4n}.
\end{align*}
\end{proof}

We then show the soundness of $\msubrout$: if the challenger arm is indeed better, the incumbent arm will be defeated.
\begin{claim}
\label{clm:challenge-soundness}
If $p_{2}>(p_{1}+\Delta)$, then $\arm_{2}$ eventually becomes the $\king$ with probability at least $(1-\frac{\delta}{8})$. 
\end{claim}
\noindent
\begin{proof}
We can compute the probability for $\arm_{2}$ to have a smaller empirical reward than $\arm_{1}$ at \emph{any} level is at most $\frac{\delta}{8}$. More formally,
\begin{align*}
\Pr\paren{\text{$\phat_{2}<\phat_{1}$ on any level}} & \leq \sum_{\ell=1}^{L}\Pr\paren{\text{$\phat_{2}<\phat_{1}$ on level $\ell$}}\\
&\leq \sum_{3^{\ell}\leq \log(n)}\Pr\paren{\text{$\phat_{2}<\phat_{1}$ on level $\ell$}} + \sum_{3^{\ell}>\log(n)}^{L}\Pr\paren{\text{$\phat_{2}<\phat_{1}$ on level $\ell$}}\\
&\leq \sum_{3^{\ell}\leq \log(n)} 2\cdot \exp(-45 \cdot\log(1/\delta)\cdot 3^{\ell}) + \sum_{3^{\ell}>\log(n)}^{L} \frac{\delta}{16n} \tag{By \Cref{lem:coin-comp} and changing $\frac{\Delta}{3}$ to $\Delta$ in \Cref{clm:challenge-complete}}\\
& \leq \sum_{\ell=1}^{\infty} 2\cdot \exp(-45 \cdot\log(1/\delta)\cdot 3^{\ell}) + \frac{\delta}{16} \tag{$L\leq \log\log(n)<< n$} \\
&\leq 2\cdot \frac{\delta}{16} \cdot \sum_{\ell=1}^{\infty} \exp(-3^{\ell}) +\frac{\delta}{16}\\
& \leq \frac{\delta}{8} \tag{$\sum_{\ell=1}^{\infty} \exp(-3^{\ell})<\frac{1}{10}$}.
\end{align*}
\end{proof}

Finally, we show that with a good probability, an incumbent arm with reward $p_1$ will never be defeated by challengers from ($p_1$, $\Delta$)-Bottom arms $\mathcal{B}_{p_1, \Delta}$ (see \Cref{def:toparm}) in $\msubrout$:
\begin{claim}
\label{clm:challenge-arm-sample}
Let an $\arm$ with reward $p_1$ be the current incumbent arm, let $\arm_{\BK}$ be the $K$-th arriving arm from $\mathcal{B}_{p_1, \Delta}$ and $X_{\BK}$ to be the number of arm pulls used by the arms in $\mathcal{B}_{p_1, \Delta}$ to challenge the stored arm. With probability at least $1-\frac{\delta}{8}$, we have $X_{\BK}<K\cdot b$.
\end{claim}
\noindent
\begin{proof}
The claim is a natural corollary of \Cref{lem:main-keep}. In the interest of conciseness, we omit repeating the technical details, and instead discuss how can we use the result of \Cref{lem:main-keep} to prove \Cref{clm:challenge-arm-sample}. To this end, we define $\infsubrout$ as a process that follows $\msubrout$ with $L=+\infty$, i.e. the number of levels is allowed to go to infinity. Define $Y_{\BK}$ as the number of arm pulls used by $\infsubrout$ by the arms in $\mathcal{B}_{p_1, \Delta}$; conditioning on the process has not terminated and both $\infsubrout$ and $\msubrout$ continue to the $\mathcal{B}(K)$-th arm, we (deterministically) have $X_{\BK}\leq Y_{\BK}$. As such, an upper on $Y_{\BK}$ is sufficient to prove our desired statement.

Note that for the arriving arms in $\mathcal{B}_{p_1, \Delta}$, they necessarily have rewards at most $(p_{1}-\frac{2}{3}\Delta)$ by definition. The number of arm pulls at each level is
\[s_{\ell}=\frac{180}{\Delta^{2}}\cdot \log(1/\delta)\cdot 3^{\ell} = \frac{80}{(\frac{2}{3}\cdot\Delta)^{2}}\cdot \log(1/\delta)\cdot 3^{\ell},\]
and the budget is given as $\frac{80}{(\frac{2}{3}\cdot\Delta)^{2}}\cdot C\cdot \log(1/\delta)$ for a sufficiently large constant $C$. By \emph{restricting} to the arms in $\mathcal{B}_{p_1, \Delta}$, we can reduce this process to \emph{keeping} the best arm with gap parameter $\frac{2}{3}\cdot \Delta$. As such, by using \Cref{lem:main-keep} with $\Delta'=\frac{2}{3}\cdot \Delta$ and $\delta'=\frac{\delta}{4}$, we conclude that with probability at least $\frac{\delta}{8}$, there is $Y_{\BK}<K\cdot b$ for all $K$.
\end{proof}

\subsubsection*{The 2-arm memory algorithm}
We are now ready to introduce the algorithm based on the above $\msubrout$. We assume $\pstar>\eps$ as otherwise one can arbitrarily pick an arm to find the $\eps$-best arm. The algorithm is shown as follows.

\begin{tbox}
	\textbf{Algorithm}:\\
	\medskip
	Parameters:\\
	Estimation levels: $t \in \{1,2,\cdots,\frac{30}{\eps}\}$.\\
	Counter of arm pulls at each level: $\set{A_{t}}_{t=1}^{\frac{30}{\eps}}: A_{t}=0$.\\
	Budget of at each level: $\set{B_{t}}_{t=1}^{\frac{30}{\eps}}$\\
	Budget of the top arms: $B=720 \cdot 100 \cdot n \cdot \frac{\log(1/\delta)}{\eps}$.
    \medskip
	\begin{enumerate}[label=\alph*)]
	    \item Run the following procedures in parallel:
	    \item \textbf{\underline{Full Stream Procedure}:}
	    \begin{enumerate}[label=\arabic*.]
	        \item Initialize all $B_{t} = B$.
	        \item Initialize the best reward estimation $\ptilde=\eps$.
    	        \item With estimation level $t\in\{1,2,\cdots,\frac{30}{\eps}\}$, while $\ptilde\leq 1$ do:
	            \begin{enumerate}
	            \item \label{line:increasebudget} For each incoming arm, increase the budget $B_{t}$ by $b= \frac{180}{\Delta^2} \cdot C \cdot \log{(1/\delta)} + s_1$ as in $\msubrout$.
        		\item Run $\msubrout$ with $p_{1}=\ptilde-\frac{4}{30}\eps$ and $\Delta=\frac{\eps}{10}$. Increase $A_{t}$ by the number of arm pulls used.
        		\item In the case that the $\king$ is defeated, pull the arriving arm ($\arm_{2}$) for $180\cdot \frac{1}{(\eps/30)^{2}}\cdot\log(1/\delta)$ times, and record the empirical reward of $\phat_{2}$. 
        		\item \label{line:increasetbytarget} If $\arm_{2}$ has an empirical reward $\phat_{2}\geq \ptilde +\frac{1}{30}\eps$, then update $t$ such that $(t-1)\cdot \frac{\eps}{30} < \phat_{2}\leq t\cdot \frac{\eps}{30}$ and start with the new level.
        		\item \label{line:increasetbyone} Else, if $A_{t}>B_{t}$, update $t=t+1$ and start with the new level.
        		\end{enumerate}
        		\item If $t>\frac{30}{\eps}$, terminate the process and return \textsc{FAIL}.
		\end{enumerate}
		\item \textbf{\underline{Fraction Stream Procedure}:}
		\begin{enumerate}[label=\arabic*.]
    		\item Sample $4\cdot \log(n)\cdot \frac{\log(1/\delta)}{\eps}$ arms from the stream.
    		\item Pull each of the arms $ 100\cdot\frac{180}{\eps^{2}}\cdot\log(1/\delta)\cdot \log(n)$ times, return the one with the maximum reward.
		\end{enumerate}
		\item If \underline{Full Stream Procedure} returns an arm, then pick it as the selected $\eps$-best arm; otherwise, pick the output of \underline{Fraction Stream Procedure} as the $\eps$-best arm.
	\end{enumerate}
\end{tbox}
\noindent
\subsubsection*{The analysis of the algorithm}
We first show the space complexity and the sample complexity of the algorithm, which are relatively straightforward to verify.
\begin{lemma}[Space Complexity]
\label{lem:2-arm-mab-memory}
The algorithm uses a memory of two arms.
\end{lemma}
\begin{proof}
The memory for the \underline{Full Stream Procedure} is one arm, and the sampling of the arms in the \underline{Fraction Stream Procedure} can be done in-place with just one arm using reservoir sampling and maintaining the maximum reward arm. 
\end{proof}
\begin{lemma}[Sample Complexity]
\label{lem:2-arm-mab-sample}
The sample complexity of the algorithm is 
\[
O\paren{n\cdot \frac{\log(1/\delta)}{\eps^{2}} + \log^{2}(n)\cdot\frac{\log^{2}(1/\delta)}{\eps^{3}}}.
\]

\end{lemma}
\begin{proof}
We first show for the \underline{Full Stream Procedure} the sample complexity is $O(n\cdot \frac{\log(1/\delta)}{\eps^{2}})$. For each level $t$, the accumulated budget $B_{t}$ is no more than $(72000 \cdot n \cdot \frac{\log(1/\delta)}{\eps}+K_{t} \cdot b)$, where $K_{t}$ is the number of arms appeared at level $t$. The overall budget is at most the summation of all the levels, which will be
\begin{align*}
\sum_{t=1}^{\frac{30}{\eps}} B_{t} &= \sum_{t=1}^{\frac{30}{\eps}} (72000 \cdot n \cdot \frac{\log(1/\delta)}{\eps}+K_{t} \cdot b) \\
&= 2160000 \cdot n \cdot \frac{\log(1/\delta)}{\eps^{2}} +  b\cdot \sum_{t=1}^{\frac{30}{\eps}} K_{t}\\
& \leq 2160000 \cdot n \cdot \frac{\log(1/\delta)}{\eps^{2}} +  b\cdot n \tag{the overall number of arms appeared cannot be greater than n}\\
&= 2160000 \cdot n \cdot \frac{\log(1/\delta)}{\eps^{2}} + 90000 \cdot (C+3) \cdot n \cdot \frac{\log(1/\delta)}{\eps^{2}} \tag{$\Delta=\frac{\eps}{30}$ in $\msubrout$}
\end{align*}
\noindent
And the above summation is in $O(n\cdot \frac{\log(1/\delta)}{\eps^{2}})$ since $C$ is a constant. Furthermore, once a $\king$ is defeated, we pay an additional $180\cdot \frac{1}{(\eps/30)^{2}}\cdot\log(1/\delta)$ arm pulls to estimate the empirical reward of the new $\king$. There are at most $\frac{30}{\eps}$ times for such a estimation to happen, so the sample complexity induced by this part is at most $180\cdot \frac{1}{(\eps/30)^{2}}\cdot\log(1/\delta)\cdot \frac{30}{\eps} = O\paren{\frac{1}{\eps^{3}}\cdot\log(1/\delta)}$.\par

On the other hand, the sample complexity of the \underline{Fraction Stream Procedure} is trivially $4\cdot \log(n)\cdot\frac{\log(1/\delta)}{\eps}\cdot 100 \cdot\frac{180}{\eps^{2}}\cdot\log(1/\delta)\cdot \log(n) = O\paren{\log^{2}(n)\cdot\frac{\log^{2}(1/\delta)}{\eps^{3}}}$. Taking together the above parts concludes the proof. 
\end{proof}
\noindent
We now show the correctness of the algorithm.
\begin{lemma}[Correctness]
\label{lem:2-arm-mab-correct}
With probability at least $(1-\delta)$, the algorithm returns an $\eps$-best arm.
\end{lemma}

We prove the correctness of the algorithm by looking into the cases when $|\mathcal{T}_{\pstar, \frac{\eps}{3}}|\leq\frac{\eps n}{\log(n)}$ and $|\mathcal{T}_{\pstar, \frac{\eps}{3}}|>\frac{\eps n}{\log(n)}$, respectively. On the high level, the correctness for the former case is guaranteed by the \underline{Full Stream Procedure} (\Cref{lem:alggeneral-full-correct}), while the correctness of the later case is guaranteed by the \underline{Fraction Stream Procedure} (\Cref{lem:alggeneral-frac-correct}). There are some `slacks' between the two cases, e.g. when $|\mathcal{T}_{\pstar, \frac{\eps}{3}}|=3\cdot\frac{\eps n}{\log(n)}$, it is possible for both procedures to return an $\eps$-best arm. It is easy to find the one with the higher reward in this type of scenario, though, and we do not pursue this direction here. 

We start with stating and proving the correctness of the $|\mathcal{T}_{\pstar, \frac{\eps}{3}}|\leq\frac{\eps n}{\log(n)}$ case.
\begin{lemma}[Full Stream Procedure]
\label{lem:alggeneral-full-correct}
Suppose $|\mathcal{T}_{\pstar, \frac{\eps}{3}}|\leq\frac{\eps n}{\log(n)}$, then the Full Stream Procedure returns an $\eps$-best arm with probability at least $(1-\delta)$. 
\end{lemma}
\noindent
We prove \Cref{lem:alggeneral-full-correct} by showing the following claims.
\begin{claim}
\label{clm:ptildeestimate}
Suppose $|\mathcal{T}_{\pstar, \frac{\eps}{3}}|\leq\frac{\eps n}{\log(n)}$, with probability at least $1-\frac{\delta}{2}$, the estimation $\ptilde$ eventually increases to the range of $(\pstar-\frac{1}{30}\eps)\leq\ptilde\leq(\pstar+\frac{1}{30}\eps)$, and the best arm $\arm^{*}$ eventually becomes the $\king$. 
\end{claim}
\noindent
\begin{proof}
We first show the lower bound of $\ptilde$. Consider the event when the best arm $\arm^{*}$ with reward $\pstar$ joins the stream, and suppose at this moment $\ptilde<(\pstar-\frac{1}{30}\eps)$, and there is $p_{1} = \ptilde - \frac{4}{30}\eps <\pstar-\frac{1}{10}\eps$. Then by \Cref{clm:challenge-soundness}, with probability at least $1-\frac{\delta}{16}$, the best arm always get a higher reward than the $p_{1}$. Therefore, it becomes the $\king$ by either defeat the previous $\king$ or exhaust the budget. Furthermore, since we are estimating the reward with $180\cdot \frac{1}{(\eps/30)^{2}}\cdot \log(\frac{1}{\delta})$ arm pulls, the probability for the empirical reward $\phat_{\arm^{*}}$ to be less than $\pstar-\frac{\eps}{30}$ can be bounded as
\begin{align*}
\Pr\paren{\pstar-\phat_{\arm^{*}} \geq \frac{\eps}{30}} & \leq \exp(-2\cdot 20\cdot \log(1/\delta)) \tag{by \Cref{prop:chernoff}}\\
&\leq \frac{\delta}{16}.
\end{align*}
Hence, with probability at least $(1-\frac{\delta}{8})$, eventually there is $\ptilde\geq (\pstar-\frac{1}{30}\eps)$, and the best arm becomes the $\king$ when $\ptilde<(\pstar-\frac{1}{30}\eps)$ \emph{before} the update.

We then show the upper bound of $\ptilde$. If $\ptilde$ is estimated by Line~\ref{line:increasetbytarget}, then with probability at least $(1-\frac{\delta}{16})$, $\ptilde$ will not be greater than $\pstar+\frac{1}{30}\eps$ since 
\begin{align*}
\Pr\paren{\ptilde\geq \pstar+\frac{1}{30}\eps \text{ from Line \ref{line:increasetbytarget}}} &\leq \Pr\paren{\phat_{\arm^{*}} -\pstar\geq \frac{\eps}{30}}\\
&\leq \exp(-2\cdot 20\cdot \log(1/\delta)) \tag{by \Cref{prop:chernoff}}\\
&\leq \frac{\delta}{16}.
\end{align*}
The remaining cases to handle is that we could have executed Line~\ref{line:increasetbyone} for too many times such that it leads to $\ptilde>(\pstar+\frac{1}{30}\eps)$. We show that this bad event is unlikely to happen thanks to \Cref{clm:arm-sample-limite} and \Cref{clm:challenge-arm-sample}. For $\ptilde$ to become more than $\pstar+\frac{1}{30}\cdot \eps$ by Line~\ref{line:increasetbyone}, there must be $\ptilde>\pstar$ before the update, therefore, $p_{1}>\pstar-\frac{4}{30}\eps$. By \Cref{clm:arm-sample-limite}, the number of arms pulls for the challenge is at most $540\cdot 100\cdot \log(1/\delta) \cdot \frac{\log(n)}{\eps^{2}}$ (setting $\Delta = \frac{\eps}{10}$). Since the Full Stream Procedure only deals with the case when $|\mathcal{T}_{\pstar, \frac{\eps}{3}}|\leq\frac{\eps n}{\log(n)}$, which also implies $|\mathcal{T}_{\pstar-\frac{4}{30}\eps, \frac{\eps}{10}}|\leq\frac{\eps n}{\log(n)}$, the budget for the top arms is sufficient for the challenge from $\mathcal{T}_{\pstar, \frac{\eps}{3}}$ arms. For the challenge from the $\mathcal{B}_{\pstar, \frac{\eps}{3}}$ arms (which are in $\mathcal{B}_{\pstar-\frac{4}{30}\eps, \frac{\eps}{10}}$), we can use \Cref{clm:challenge-arm-sample} to conclude that the budget of $K\cdot b$ (increasing for each arriving $\arm$) is sufficient with probability at least $1-\frac{\delta}{8}$, and Line~\ref{line:increasetbyone} will \emph{not} be executed by the challenge from the bottom arms. Therefore, the upper bound holds with probability at least $1-\frac{\delta}{4}$. \par

Finally, consider the best arm arrives \emph{after} $\ptilde\geq(\pstar-\frac{1}{30}\eps)$. Since there is $\ptilde \leq \pstar - \frac{1}{30}\eps$, we have $p_{1}\leq \pstar-\frac{\eps}{10}$. By \Cref{clm:challenge-soundness}, with probability at least $1-\frac{\delta}{8}$, the best arm always becomes the $\king$ by defeating the stored arm.

A union bound gives us that with probability at least $(1-\frac{\delta}{8}-\frac{\delta}{4}-\frac{\delta}{8})=(1-\frac{\delta}{2})$, the statement of \Cref{clm:ptildeestimate} holds.
\end{proof}
\begin{claim}
\label{clm:armtopsmallreturn}
Conditioning on the events of \Cref{clm:ptildeestimate}, the \underline{Full Stream Procedure} returns an $\frac{\eps}{5}$-best arm with probability at least $1-\frac{\delta}{2}$.
\end{claim}
\noindent
\begin{proof}
We consider the worst case when $\ptilde = (\pstar-\frac{1}{30}\eps)$, as other cases only makes the non-$\frac{\eps}{3}$-best arms more difficult to join the stream. Under this condition, there are three possible scenarios for the arriving arms. 
\begin{enumerate}
\item An arm with empirical reward of more than $\ptilde+\frac{1}{30}\eps$ joins the stream and updates $\ptilde$ by running Line~\ref{line:increasetbytarget}. Under this scenario, according to \Cref{clm:challenge-complete}, the probability for this arm not to be a $\frac{1}{30}\eps$-best arm is at most $\frac{\delta}{8}$.
\item An arm with a lower empirical reward challenges $\ptilde-\frac{4}{30}\eps$. According to \Cref{clm:challenge-complete}, the probability for any arm with empirical reward less than $(\ptilde-\frac{4}{30}\eps-\frac{1}{30}\eps)=(\pstar-\frac{1}{5}\eps)$ to replace the incumbent arm is at most $\frac{\delta}{4}$.
\item The budget of the $\king$ gets exhausted. According to \Cref{clm:challenge-arm-sample}, this does not happen with probability at least $1-\frac{\delta}{8}$ since we assume $|\mathcal{T}_{\pstar, \frac{\eps}{3}}|\leq \frac{\eps n}{\log(n)}$ (which implies $|\mathcal{T}_{\pstar-\frac{1}{5}\eps, \frac{\eps}{10}}|\leq \frac{\eps n}{\log(n)}$).
\end{enumerate}
Applying a union bound over the above cases gives us the success probability of at least $1-\frac{\delta}{2}$.
\end{proof}

\paragraph{Finalizing the proof of \Cref{lem:alggeneral-full-correct}.} Combining \Cref{clm:ptildeestimate} and \Cref{clm:armtopsmallreturn}, applying a union bound will get that the probability for a non-$\frac{\eps}{5}$-arm to be returned is at most $\delta$, which finalizes the proof.

We now prove the easier case for the Fraction Stream Procedure.
\begin{lemma}[Fraction Stream Procedure]
\label{lem:alggeneral-frac-correct}
Suppose $|\mathcal{T}_{\pstar, \frac{\eps}{3}}|>\frac{\eps n}{\log(n)}$, then the Fraction Stream Procedure will return an $\eps$-best arm with probability at least $1-\delta$. 
\end{lemma}
\begin{proof}
We first show that the arm-sampling line will give us at least one $\frac{\eps}{3}$-best arm with probability at least $1-\frac{\delta}{2}$. Since $|\mathcal{T}_{\pstar, \frac{\eps}{3}}|>\frac{\eps n}{\log(n)}$, the probability for any arm in $\mathcal{T}_{\pstar, \frac{\eps}{3}}$ to be selected is at least $\frac{\eps}{\log(n)}$. Now, the probability for \emph{no such arm} to be picked will be:
\begin{align*}
\Pr(\text{No arm in $\mathcal{T}_{\pstar, \frac{\eps}{3}}$ to be picked}) &\leq (1-\frac{\eps}{\log(n)})^{4\frac{\log(n)}{\eps}\log(\frac{1}{\delta})}\\
& \leq 3\cdot \exp(-(4+\log(\frac{1}{\delta}))) \tag{$\log(\frac{1}{\delta})\geq1$}\\
& \leq \frac{\delta}{2}
\end{align*}
\noindent
Consider comparing an $\frac{\eps}{3}$-best arm with a non-$\eps$-best arm. By pulling each arm $100\cdot \frac{180}{\eps^{2}}\cdot \log(n)$ time, we can prove that with probability at least $1-\frac{\delta}{4n}$, the best arm has a higher empirical reward than any non-$\eps$-best arm. Therefore, with probability at least $1-\frac{\delta}{4}$, the arm with the highest reward is an $\eps$-best arm. Applying a union bound over sampling and comparing success probabilities proves \Cref{lem:alggeneral-frac-correct}.
\end{proof}

\subsubsection*{Proof of \Cref{lem:2-arm-mab-correct}}
By \Cref{lem:alggeneral-full-correct} and \Cref{lem:alggeneral-frac-correct}, either the \underline{Full Stream Procedure} returns an $\eps$-best arm with probability $(1-\delta)$ when $|\mathcal{T}_{\pstar, \frac{\eps}{3}}|\leq\frac{\eps n}{\log(n)}$, or the \underline{Fraction Stream Procedure} returns an $\eps$-best arm with probability $(1-\delta)$ when $|\mathcal{T}_{\pstar, \frac{\eps}{3}}|>\frac{\eps n}{\log(n)}$. This proves \Cref{lem:2-arm-mab-correct}.
\subsection*{Acknowledgement} 
We would like to thank Arpit Agarwal, Sanjeev Khanna, Shay Moran, and David Woodruff for helpful discussions. We are also grateful to the anonymous reviewers of STOC 2020 for many helpful comments that helped with the presentation of this paper (including the addition of~\Cref{sec:rand-walk}).

\bibliographystyle{abbrv}
\bibliography{general}

\clearpage
\appendix

\part*{Appendix}

%%
% !TeX root = main.tex 
%!TEX root = main.tex

\newcommand{\prealga}{\ensuremath{\textnormal{\textsc{Filter-Select}}}\xspace}
\newcommand{\prealgb}{\ensuremath{\textnormal{\textsc{Filter-Top-Sampling}}}\xspace}
\newcommand{\prealgc}{\ensuremath{\textnormal{\textsc{Aggressive-Filter-Select}}}\xspace}

\newcommand{\coinh}{\ensuremath{\widetilde{\coin}}}

\section{Warm-Up: Simpler Algorithms for Finding Most Biased Coin}\label{sec:prelim-algs}

This section includes three simpler algorithms with asymptotically optimal sample complexity and $O(\log(n))$, $O(\log\log(n))$ and $\log^{*}(n)$ space complexity, respectively.  
These algorithms successively build on top of each other and involve addition of several new ideas that might be of independent interest. Moreover, they can be seen as a warm-up to our main algorithm in~\Cref{sec:main}.

\subsection{An $O(\log{n})$ Space Algorithm} 
We start by introducing the simplest algorithm with $O(\log(n))$ space complexity.  

\begin{proposition}\label{prop:prelim-alg1}
	There exists a streaming algorithm that given $n$ coins arriving in a stream with the gap parameter $\Delta$ and confidence parameter $\delta$, finds the most biased coin with probability at least $1-\delta$ 
	using $O(\frac{n}{\Delta^2} \cdot \log{(1/\delta)})$ coin tosses and a memory of $O(\log(n))$ coins. 
\end{proposition}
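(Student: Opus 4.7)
The plan is to implement a streaming-friendly binary tournament that mirrors the median-elimination algorithm of~\cite{EvenDarMM02} via the classical merge-and-reduce paradigm. The algorithm maintains an array of at most $O(\log n)$ \emph{slots}, where slot $\ell$ holds at most one coin that has already won $\ell$ consecutive comparisons. Each arriving coin is placed into slot $0$; whenever two coins occupy the same slot $\ell-1$, I \emph{challenge} them against one another to decide which one is promoted to slot $\ell$ and which one is discarded. Since at most one coin is stored per slot (plus $O(1)$ for the current arrival), the space cost is $O(\log n)$ coins throughout the stream.

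The key parameter choice, which is what buys asymptotically optimal sample complexity, is to let the sample budget per challenge grow with the level: at a level-$\ell$ challenge, I would toss both coins $K_\ell := C \cdot (\log(1/\delta) + \ell)/\Delta^2$ times for a sufficiently large absolute constant $C$, and declare the winner to be the coin with the larger empirical bias. The exponentially improving per-level confidence is exactly what is needed to absorb a union bound over the $O(\log n)$ challenges that the best coin participates in, without paying an extra $\log \log n$ factor in the final sample bound.

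For correctness, the unique best coin $\coinstar$ participates in at most one challenge per level, and each of its opponents is inferior in bias by at least $\Delta$. \Cref{lem:coin-comp} then bounds its per-challenge failure probability at level $\ell$ by $2\exp(-K_\ell/4) = O(\delta \cdot e^{-\Omega(\ell)})$, and summing this geometric series over all levels gives overall failure probability $O(\delta)$, which can be made $\leq \delta$ by tuning $C$. For sample complexity, at most $n/2^\ell$ challenges occur at level $\ell$ and each costs $2 K_\ell$ tosses, so the total is $O(n/\Delta^2) \cdot \sum_{\ell \geq 1} (\log(1/\delta)+\ell)/2^\ell = O(n \log(1/\delta)/\Delta^2)$ as desired.

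The main obstacle I anticipate is the end-of-stream cleanup: since $n$ need not be a power of two (and is not known to the algorithm in advance), at the end of the stream several slots will contain ``orphan'' coins that were never challenged. I would handle this with a final sweep that walks upward through the occupied slots, pairing adjacent survivors and challenging them using the $K_\ell$ associated with the higher of the two involved levels. The same level-indexed union bound then absorbs these at most $O(\log n)$ extra challenges, preserving both the failure probability and the sample complexity bounds.
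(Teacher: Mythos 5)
Your proposal is correct and takes essentially the same route as the paper's proof: a streaming merge-and-reduce tournament storing $O(1)$ coins per level with per-level sample counts of the form $\tfrac{C}{\Delta^2}\paren{\log(1/\delta)+f(\ell)}$, so that the geometric decay in the number of level-$\ell$ comparisons keeps the total samples at $O(\tfrac{n}{\Delta^2}\log(1/\delta))$ while the growing per-level confidence absorbs the union bound over the $O(\log n)$ challenges the best coin faces. The only differences are cosmetic — pairwise challenges versus buckets of four, $f(\ell)=\Theta(\ell)$ versus $3^{\ell}$, and your end-of-stream upward sweep versus the paper's padding with dummy zero-bias coins (\Cref{rem:fix-remark}) — none of which affects correctness or the stated bounds.
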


\paragraph{High Level Overview.} Our algorithm in this part is a streaming friendly implementation of the {median-elimination} algorithm of~\cite{EvenDarMM02} using the ``merge-and-reduce'' technique from
the streaming literature (see, e.g.~\cite{GuhaMMO00,AgarwalHV04}). We now give a high level overview of the algorithm. 

The idea behind the merge-and-reduce technique is as follows: suppose instead of storing all the coins in the memory and running median-elimination algorithm of~\cite{EvenDarMM02}, we 
store the first $\sqrt{n}$ coins; run median-elimination, pick the output coin of the algorithm, and discard the rest. We then read the next $\sqrt{n}$ coin in memory and do as before. This way, by the time we finished processing the stream, 
we have stored $\sqrt{n}$ additional coins (the output of median-elimination on each $\sqrt{n}$-size sub-stream). We run yet another median-elimination on these coins and return the output coin as the most biased coin. 
It is easy to verify that this algorithm can be implemented with $O(\sqrt{n})$ size memory and will output the correct answer with large (constant) probability.   

One can also recursively apply the idea above multiple times. For instance, we can pick the first $n^{1/3}$ coins in a \emph{bucket}, find their most biased coin and send it to the bucket at 
next \emph{level}, and once $n^{1/3}$ coins are collected in this bucket, do the same, and send the most biased coin among them to the the bucket of final level. This reduces the memory to $O(n^{1/3})$ coins now. In fact, by increasing the number of levels to 
$O(\log{n})$, we can reduce the number of coins we are storing in each bucket to some absolute constant and obtain an $O(\log{n})$ memory algorithm. There is a catch however with this approach: we need to do a union bound
over the $O(\log{n})$ times the (true) most biased coin participates in the median-elimination algorithm which increases the sample complexity of the algorithm by an $O(\log\log{n})$ factor and thus making it sub-optimal. 

There is however a simple fix to this: observe that the number of coins that participate in each level of this algorithm is dropping by a constant factor at each level. Hence, we can allocate more and more coin tosses to higher and higher levels in order to increase 
the probability of success on those levels, while still ensuring that the total sample complexity of the algorithm remains within the optimal range of $O(n)$ (after all, this is the same exact idea behind the median-elimination algorithm itself). 
This is precisely what our algorithm does. 

\begin{tbox}
	\textbf{The ${O(\log{n})}$ Space Algorithm:} 
	
	 \medskip
	
	Parameters ($s_\ell$ denotes the number of samples at level $\ell$):
	\begin{align*}
	\set{s_\ell}_{\ell \geq 1}: \quad s_{\ell} = \frac{4}{\Delta^2} \cdot \paren{\ln{(1/\delta)}+3^{\ell}}.
	\end{align*}
	
	Buckets: $B_{1}$, $B_{2}$, ..., $B_{t}$, each of size $4$ for $t :=\ceil{\log_4{(n)}}$. 
	\medskip
	
	\begin{itemize}
	\item For each arriving $\coin_i$ in the stream do:
	\begin{enumerate}[label=($\arabic*$)]
		\item Add $\coin_{i}$ to bucket $B_{1}$. 
		\item If any bucket $B_{\ell}$ is full: 
		\begin{enumerate}
			\item We sample each coin in $B_{i}$ for $s_{l}$ times;
			\item Select $\coin_{\ell}^{*}$ with the highest empirical bias and add it to $B_{\ell+1}$;
		\end{enumerate}
	\end{enumerate}
	\item At the end of the stream, select $\coin_{t}^{*}$ of bucket $B_{t}$ as the most biased coin.
	\end{itemize}
\end{tbox}

\begin{remark}\label{rem:fix-remark}
	Our algorithm is stated as if the number of coins is a power of $4$ or rather $\ceil{\log_4{(n)}} = \log_4{(n)}$. However, when this is not the case, the most biased coin may not have enough time to raise to the level $t$ itself. 
	There is a simple fix however: we can `pad' the stream with `dummy coins' which has $0$ bias until the stream length becomes a power of $4$. By doing so, the most biased coin, $\coinstar$ will have enough time to raise to the top level 
	and we simply prove in the following that this coin will not be dropped in any of the successive buckets with sufficiently large probability. The same idea can be used for our two other algorithms in this section as well (an alternative option would be to 
	run any standard algorithm, say median-elimination of~\cite{EvenDarMM02} on the set of $O(\log{n})$ coins stored across all buckets at the end of the stream; we omit the details). 
\end{remark}
\noindent
In practice, the algorithm can be implemented by checking if any bucket is full following a bottom-up manner. The following claim bounds the space complexity of this algorithm. 

%Before proving the sample complexity and correctness of the algorithm, let us first take a quick look at the space complexity:
\begin{claim}\label{clm:alg1-space}
	The space complexity of the algorithm is $O(\log(n))$. 
\end{claim}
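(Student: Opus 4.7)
The claim is essentially a bookkeeping statement, so my plan is to just account for every coin that the algorithm ever stores simultaneously. The algorithm maintains exactly $t = \lceil \log_4(n) \rceil$ buckets $B_1, B_2, \ldots, B_t$, and by construction each bucket $B_\ell$ has capacity $4$ — as soon as a bucket reaches $4$ coins, the algorithm samples them, promotes a single winner to the next bucket, and empties the current bucket. So at any point during the stream, each $B_\ell$ contains at most $4$ coins.

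The only other coin held in memory at any instant is the currently arriving $\coin_i$ being processed, plus (during the promotion cascade) at most one promoted winner $\coin_\ell^*$ being sent up per level. These are additive $O(1)$ overheads on top of the buckets.

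Summing the bucket capacities gives a total of at most $4t + O(1) = 4 \lceil \log_4(n) \rceil + O(1) = O(\log n)$ coins stored at any time, which is the claimed space bound. I do not anticipate any real obstacle here: the only subtlety worth stating explicitly is that the promotion step is executed immediately and the emptied lower bucket does not continue to hold its coins, so the $4$-per-bucket invariant is preserved across the cascade. Once that invariant is noted, the arithmetic $4 \cdot \lceil \log_4 n \rceil = O(\log n)$ finishes the proof.
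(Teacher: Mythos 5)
Your proposal is correct and matches the paper's (one-line) proof: both simply count $t = O(\log n)$ buckets of constant size $4$, with the extra bookkeeping about the arriving coin and promotion cascade being a minor elaboration of the same argument.
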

\begin{proof}
	We maintain $t = O(\log{n})$ bucket each of size $O(1)$ throughout the stream. 
\end{proof}
We bound the sample complexity of the algorithm in the following lemma. 
\begin{lemma}\label{lem:alg1-sample}
The sample complexity of the algorithm is $O(\frac{n}{\Delta^2}\cdot\log{(1/\delta)})$.
\end{lemma}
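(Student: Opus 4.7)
The plan is to bound the total number of coin tosses by summing over the levels $\ell = 1, \ldots, t$ the total work done at that level. The key observation is that although $s_\ell$ grows geometrically like $3^\ell$, the number of coins that ever reach level $\ell$ shrinks geometrically like $4^{-\ell+1}$, and since $3/4 < 1$ these two geometric factors combine to give a convergent sum.

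First I would count, for each level $\ell$, the number of coins that ever appear in bucket $B_\ell$. Since a coin is promoted from level $\ell-1$ to level $\ell$ only once bucket $B_{\ell-1}$ has accumulated $4$ coins, a straightforward induction on $\ell$ shows that the total number of coins that enter $B_\ell$ across the entire stream is at most $n / 4^{\ell-1}$. Consequently, bucket $B_\ell$ becomes full at most $n / 4^\ell$ times, and each such filling causes the algorithm to sample each of the $4$ coins in the bucket $s_\ell$ times, for a cost of $4 s_\ell$ coin tosses per filling.

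Next I would sum the work across all levels. The total number of coin tosses is at most
\begin{align*}
\sum_{\ell=1}^{t} \frac{n}{4^\ell} \cdot 4 s_\ell
\;=\; \sum_{\ell=1}^{t} \frac{n}{4^{\ell-1}} \cdot \frac{4}{\Delta^2}\bigl(\ln(1/\delta) + 3^\ell\bigr)
\;\leq\; \frac{4n}{\Delta^2}\left( \ln(1/\delta)\sum_{\ell=1}^{\infty} \frac{1}{4^{\ell-1}} \;+\; \sum_{\ell=1}^{\infty} \frac{3^\ell}{4^{\ell-1}} \right).
\end{align*}
The first series is a geometric series summing to $4/3$, and the second is $3 \sum_{\ell \geq 0}(3/4)^\ell = 12$, both absolute constants. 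Therefore the total number of tosses is $O\!\left(\tfrac{n}{\Delta^2}\log(1/\delta)\right)$, as claimed.

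There is essentially no obstacle here beyond verifying the two geometric sums converge; the main point is the choice $s_\ell = \Theta(\ln(1/\delta) + 3^\ell)$ which grows slower than $4^\ell$, so the per-level work forms a geometric series dominated by the first level. The $3^\ell$ additive term in $s_\ell$ (rather than a factor of $\ln\ell$ or similar) is exactly what later allows a union bound over $O(\log n)$ levels in the correctness analysis while preserving the optimal $O(n/\Delta^2 \cdot \log(1/\delta))$ sample complexity; hence the same expression $s_\ell$ is what makes both parts of the analysis tight.
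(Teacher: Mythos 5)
Your proof is correct and follows essentially the same route as the paper: bound the number of coins ever reaching level $\ell$ by $n/4^{\ell-1}$, charge $s_\ell$ tosses per coin at that level, and observe that both resulting series (the $\ln(1/\delta)/4^{\ell-1}$ term and the $3^\ell/4^{\ell-1}$ term) are convergent geometric sums. Your bookkeeping is in fact slightly cleaner than the paper's, since you keep the $\tfrac{4}{\Delta^2}$ factor on the $3^\ell$ term throughout, which changes nothing asymptotically.
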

\begin{proof}
By construction, the number of coins that ever appear in level $\ell$ (namely in bucket $B_\ell$) is bounded by $\frac{n}{4^{\ell-1}}$. The number of samples per each coin at level $\ell$ is also $3^{\ell}$. We thus have, 
\begin{align*}
	\textnormal{\# of samples} &= \sum_{\ell=1}^{t} \frac{n}{4^{\ell-1}} \cdot s_\ell =  \sum_{\ell=1}^{t} \frac{n}{4^{\ell-1}} \cdot \paren{\frac{4}{\Delta^2} \cdot \paren{\ln{(1/\delta)}} + 3^{\ell}} \\
	&= \paren{\frac{4n}{\Delta^2} \cdot \paren{\ln{(1/\delta)}}}  \cdot \sum_{\ell=1}^{t} \frac{1}{4^{\ell-1}} + \sum_{\ell=1}^{t} \frac{n}{4^{\ell-1}} \cdot 3^{\ell} \\
	&\leq  \paren{\frac{4n}{\Delta^2} \cdot \ln{(1/\delta)}} \cdot \frac{4}{3} + 12n \tag{as the first series converges to $4/3$ and the second to $12$ even when they go to infinity},
\end{align*}
which is $O(n\cdot\frac{\log(1/\delta)}{\Delta^{2}})$ as desired. 
\end{proof}

Finally, we prove the correctness of the algorithm.
\begin{lemma}\label{lem:alg1-correct}
With probability at least $1-\delta$, the algorithm returns the most biased coin.
\end{lemma}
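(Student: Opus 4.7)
The plan is to track the trajectory of the most biased coin $\coinstar$ through the buckets and show, via a level-by-level application of \Cref{lem:coin-comp} followed by a union bound, that $\coinstar$ is never eliminated. First I would dispose of the boundary issue flagged in \Cref{rem:fix-remark} by padding the stream with dummy coins of bias $0$ so that the total length is a power of $4$ and $\coinstar$ is guaranteed to participate in every level $\ell = 1,\ldots,t$; since dummies have bias $0 < p_{\coinstar} - \Delta$, they only help (the relevant gap to $\coinstar$ only grows).

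Next, fix a level $\ell$ at which $\coinstar$ sits in some bucket $B_\ell$ together with up to $3$ other coins. For any such competitor $\coin$, the bias gap is at least $\Delta$ by definition of $\Delta$. The algorithm samples each of $\coinstar$ and $\coin$ exactly $s_\ell = \tfrac{4}{\Delta^2}\bigl(\ln(1/\delta)+3^\ell\bigr)$ times, so applying \Cref{lem:coin-comp} with $\theta = \Delta$ and $K = 4\bigl(\ln(1/\delta)+3^\ell\bigr)$ yields
\[
\Pr\bigl(\phat_{\coinstar} \le \phat_{\coin}\bigr) \;\le\; 2\exp\!\bigl(-\ln(1/\delta) - 3^\ell\bigr) \;=\; 2\delta \cdot e^{-3^\ell}.
\]
A union bound over the at most $3$ competitors in $B_\ell$ shows that the probability $\coinstar$ loses its bucket at level $\ell$ is at most $6\delta \cdot e^{-3^\ell}$.

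Finally I would take one more union bound over all levels $\ell = 1,\ldots,t$. The failure probability is at most
\[
\sum_{\ell=1}^{t} 6\delta \cdot e^{-3^\ell} \;\le\; 6\delta \sum_{\ell=1}^{\infty} e^{-3^\ell} \;<\; \delta,
\]
since the doubly-exponential series is dominated by its first term $e^{-3} \approx 0.05$ and the total is well below $1/6$. Conditioned on the complementary event, $\coinstar$ wins every bucket it enters and is therefore the coin output from $B_t$, which proves the lemma.

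The conceptual point (and only mildly nontrivial step) is the choice of $s_\ell$ growing like $3^\ell/\Delta^2$: this geometrically strengthens the per-level guarantee fast enough that a direct union bound over all $t = \Theta(\log n)$ levels introduces no $\log\log n$ loss, while the total sample budget still converges (as already verified in \Cref{lem:alg1-sample}). Aside from the padding subtlety, there is no real obstacle; the proof is a clean tournament-style union bound.
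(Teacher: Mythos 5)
Your proposal is correct and matches the paper's own proof essentially step for step: apply \Cref{lem:coin-comp} with the level-dependent sample size $s_\ell$ to get a per-competitor failure probability of $2\delta e^{-3^\ell}$, union bound within the bucket (your factor $6$ versus the paper's looser $8$ is immaterial), and then union bound over all levels using convergence of the doubly-exponential series. The padding remark you invoke is exactly how the paper handles the boundary case as well.
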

\begin{proof}
Consider any bucket $B_t$ and assume that the most biased coin, $\coinstar$ is present in this bucket. The probability for any other coin, say $\coin_i$, to have a greater empirical bias than $\coinstar$ 
when we sample the coins in $B_t$ is at most, 
\begin{align*}
	\Pr\paren{\textnormal{$\coinstar$ has a lower empirical bias than $\coin_i$ in level $\ell$}} &\leq 2\exp\paren{-(\ln{(1/\delta)} + 3^{\ell})} \tag{by \Cref{lem:coin-comp} and choice of $s_\ell$ samples in this level} \\
	&\leq 2\delta \cdot \exp\paren{-3^{\ell}}.
\end{align*}

A union bound over the $4$ coins in bucket $B_t$ implies that the probability that $\coinstar$ is not returned at level $\ell$ is at most $8\delta \cdot \exp\paren{-3^{\ell}}$. By a union bound across all levels, 
we have, 
\begin{align*}
	\Pr\paren{\textnormal{$\coinstar$ is not returned as the answer}} &\leq \sum_{\ell=1}^{t} 8\delta \cdot \exp\paren{-3^{\ell}} < \delta \tag{as the series converges to $<0.05$ even when it goes to infinity}.
\end{align*}
This concludes the proof. 
\end{proof}

\subsection{An $O(\log\log{(n)})$ Space Algorithm} 

We now show how to tweak the $O(\log{n})$ space algorithm and reduce its space complexity exponentially, i.e., down to $O(\log\log{n})$. 

\begin{proposition}\label{prop:prelim-alg2}
	There exists a streaming algorithm that given $n$ coins arriving in a stream with the gap parameter $\Delta$ and confidence parameter $\delta$, finds the most biased coin with probability at least $1-\delta$ 
	using $O(\frac{n}{\Delta^2} \cdot \log{(1/\delta)})$ coin tosses and a memory of $O(\log\log(n))$ coins. 
\end{proposition}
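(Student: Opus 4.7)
The plan is to modify the $O(\log n)$-space algorithm above by replacing the explicit batch elimination in each bucket with a \emph{sequential} (streaming) elimination that retains only a single ``champion'' coin per level, while letting the effective bucket sizes grow doubly-exponentially across levels. Concretely, I will use $t := O(\log\log n)$ levels with bucket sizes $r_\ell := 2^{2^\ell}$, so that $\prod_{i \leq t} r_i \geq n$ after only $O(\log\log n)$ levels (since $\sum_i 2^i$ telescopes to $2^{t+1}$). At each level $\ell$ the algorithm stores a single champion $\coin^{*}_\ell$ and a counter $C_\ell$. Whenever a coin ``arrives'' at level $\ell$ (from the stream when $\ell=1$, or from a promotion out of level $\ell-1$ otherwise), we sample both it and $\coin^{*}_\ell$ for $s_\ell := \tfrac{4}{\Delta^2}\bigl(\ln(1/\delta) + 3 \cdot 2^\ell\bigr)$ times, retain the one with the higher empirical bias as the new $\coin^{*}_\ell$, and increment $C_\ell$. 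Once $C_\ell = r_\ell$ the champion is ``promoted'' (treated as an arrival at level $\ell+1$) and the level-$\ell$ state is reset. As in~\Cref{rem:fix-remark}, I pad the stream with dummy zero-bias coins so that $\coinstar$ has time to rise all the way to level $t$, and return $\coin^{*}_t$ at the end.

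The memory bound is immediate: one coin per level, and $t = O(\log\log n)$ levels. For the sample complexity, the total number of arrivals at level $\ell$ is at most $n/\prod_{i<\ell} r_i = n/2^{2^\ell - 2}$, and each such arrival triggers $2 s_\ell$ coin tosses, giving a total of
\begin{align*}
\sum_{\ell=1}^{t} \frac{n}{2^{2^\ell - 2}}\cdot 2 s_\ell
\;=\; \frac{8n}{\Delta^2}\sum_{\ell=1}^{t}\frac{\ln(1/\delta)+3\cdot 2^\ell}{2^{2^\ell}},
\end{align*}
which is a rapidly-converging series dominated by the $\ell=1$ term, yielding the desired $O\!\bigl(\tfrac{n \log(1/\delta)}{\Delta^2}\bigr)$ bound.

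For correctness I will mimic~\Cref{lem:alg1-correct} and show that once $\coinstar$ arrives it survives every comparison at every level it participates in, and is therefore the final champion. By~\Cref{lem:coin-comp} and the choice of $s_\ell$, the probability that $\coinstar$ loses any single comparison at level $\ell$ is at most $2\delta\exp(-3\cdot 2^\ell)$. Since $\coinstar$ participates in at most $r_\ell = 2^{2^\ell}$ comparisons at level $\ell$, the product $r_\ell\cdot 2\delta\exp(-3\cdot 2^\ell) \leq 2\delta \exp\bigl(-(3-\ln 2)\cdot 2^\ell\bigr)$ remains doubly-exponentially small in $\ell$, so union bounding within a level and then across all $t$ levels keeps the total failure probability below $\delta$.

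The main obstacle is coordinating the two schedules $\{r_\ell\}$ and $\{s_\ell\}$. Because we are now ``reusing'' a champion across up to $r_\ell$ consecutive head-to-head comparisons at level $\ell$, we must pay an additive $\Theta(\log r_\ell)$ factor in the per-comparison confidence, which forces $s_\ell$ to grow like $2^\ell$ and threatens to blow up the total sample count. The saving grace is that the number of arrivals at each level shrinks doubly-exponentially (by a factor of $r_{\ell-1} = 2^{2^{\ell-1}}$), and this dominates the growth of $s_\ell$ so that the total sample count stays linear in $n$. Balancing these two doubly-exponential schedules so that $O(\log\log n)$ levels still suffice to cover $n$ coins is the key quantitative tweak that distinguishes this algorithm from the $O(\log n)$-space version, and is the one step I expect to take real care in verifying.
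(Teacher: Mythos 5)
Your proposal is correct, but it takes a genuinely different route from the paper's $O(\log\log n)$ construction. The paper keeps the merge-and-reduce buckets of size $4$ from \Cref{prop:prelim-alg1} for the first $t=\ceil{\log_4\ln n}$ levels---so only about $n/\ln n$ coins ever survive---and then switches to a \emph{single} running champion $\coinh$ at the top, compared against each survivor with $s_T=\Theta\bigl((\ln(1/\delta)+\ln n)/\Delta^2\bigr)$ tosses; the union bound there is over all $n/\ln n$ survivors at cost roughly a factor of $n$, which the $\ln n$ in $s_T$ pays for. You instead discard the buckets entirely and keep a champion with a counter \emph{at every} level, with a doubly-exponential promotion schedule $r_\ell=2^{2^\ell}$ and per-comparison sample counts $s_\ell\propto\ln(1/\delta)+2^\ell$, i.e., proportional to $\log r_\ell$; the within-level union bound costs $r_\ell\cdot 2\delta e^{-3\cdot 2^\ell}\le 2\delta e^{-(3-\ln 2)\cdot 2^\ell}$, which is summable over $\ell$, and the arrival counts $n/2^{2^\ell-2}$ shrink fast enough to absorb the growth of $s_\ell$ (your displayed constant drops the factor $4$ hidden in $2^{2^\ell-2}$, which is immaterial, and as in \Cref{lem:alg1-sample} the leftover $O(n/\Delta^2)$ term is absorbed under the convention $\log(1/\delta)=\Omega(1)$). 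What your route buys is uniformity: it is structurally the same champion-plus-counter scheme as the paper's $O(\logstar{(n)})$ algorithm in \Cref{prop:prelim-alg3}, just with a gentler growth schedule, so tightening $r_\ell$ from doubly exponential to tower growth $r_{\ell+1}=2^{r_\ell}$ immediately upgrades your construction to $O(\logstar{(n)})$ space; the paper's hybrid, by contrast, is a more direct one-step modification of the $O(\log n)$ algorithm but does not generalize as cleanly. Both arguments rest on the same balancing principle---spend an extra $\Theta(\log(\text{number of comparisons at a level}))$ samples per comparison, paid for by the shrinking population of that level---and your correctness, memory, and sample-complexity bounds go through as written.
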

\paragraph{High Level Overview.} Recall that the space complexity of the algorithm in~\Cref{prop:prelim-alg1} was governed by the number of the recursion levels (or elimination rounds) done by the algorithm which was $O(\log{(n)})$. 
As such, if we could somehow reduce the number of levels further, we should be able to reduce the space complexity as well (assuming we could still store only $O(1)$ coins per each level). We now explain how our algorithm achieves this. 

The idea is simple: Consider the level $\approx (\log\log{n})$ of the algorithm of~\Cref{prop:prelim-alg1}; by construction, only $O(n/\log{n})$ coins in the stream will ever make it to this level. This means that we can in fact spend $O(\log{n})$ samples
per these coins to have a very good estimate of their true bias using their empirical bias (since we can now do a union bound over \emph{all} these coins), and still remain within the $O(n)$ sample budget. Moreover, now that we are sampling each coin 
$O(\log{n})$ times, we can simply run the basic approach of just maintaining the current best coin (in terms of empirical bias) for the coins in this level -- this requires storing a single coin. As such, the space complexity of the algorithm 
is now $O(\log\log{n})$ (for storing the coins in the first $\approx (\log\log{n})$ levels) plus one extra coin (for storing the running max in the top level).

\begin{tbox}
	\textbf{The ${O(\log\log{(n)})}$ Space Algorithm:} 
	
	 \medskip
	
	Parameters ($s_\ell$ denotes the number of samples at level $\ell$, and $s_T$ is for the top most level):
	\begin{align*}
	\set{s_\ell}_{\ell \geq 1}: \quad s_{\ell} = \frac{4}{\Delta^2} \cdot \paren{\ln{(2/\delta)}+3^{\ell}}, \qquad s_T := \frac{4}{\Delta^2} \cdot \paren{\ln{(1/\delta)}+\ln{(n)}}.
	\end{align*}

	Buckets: $B_{1}$, $B_{2}$, ..., $B_{t}$ of size $4$ for $t:=\ceil{\log_{4}\ln{(n)}}$, and a single $\coinh$ as the candidate for the most biased coin. 
	
	\begin{itemize}
	\item For each arriving $\coin_i$ in the stream do:
	\begin{enumerate}[label=($\arabic*$)]
		\item Add $\coin_{i}$ to bucket $B_{1}$. 
		\item If any bucket $B_{\ell}$ is full: 
		\begin{enumerate}
			\item We sample each coin in $B_{i}$ for $s_{l}$ times;
			\item Select $\coin_{\ell}^{*}$ with the highest empirical bias and add it to $B_{\ell+1}$;
		\end{enumerate}
		\item For any coin $\coin^{*}_{t}$ as the most biased on the $t$-th level:
		\begin{enumerate}
			\item Sample the current candidate $\coinh$ and $\coin^{*}_{t}$ for $s_T$ times;
			\item Store the one with the higher empirical bias as the new $\coinh$;
		\end{enumerate}
	\end{enumerate}
	\item Return $\coinh$ after all the coins have been processed.

	\end{itemize}
\end{tbox}

See also~\Cref{rem:fix-remark} about the standard `padding argument' discussed earlier. 

%We first bound the space complexity of the algorithm. 
\begin{claim}\label{clm:loglogn-space}
The space complexity of the algorithm is $O(\log\log(n))$.
\end{claim}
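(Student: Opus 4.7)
The proof will be an essentially direct counting argument from the data-structure description, so there is no real mathematical obstacle; the only thing to do is to carefully enumerate every coin the algorithm keeps in memory at any step and verify the arithmetic.

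The plan is to argue as follows. At any moment during the stream, the algorithm's memory contents consist of three kinds of coins: (i) the coins currently sitting in the buckets $B_1,\ldots,B_t$, (ii) the candidate coin $\coinh$ kept at the ``top'' of the cascade, and (iii) transient coins that appear only \emph{during} the evaluation of the promotion step (the arriving coin, and the coin $\coin^{*}_\ell$ being compared against the winners of a full bucket). Each bucket is explicitly capped at $4$ coins by the algorithm (whenever it becomes full, it is emptied and its winner is sent to the next bucket), so the buckets contribute at most $4t$ stored coins. The candidate $\coinh$ contributes a single coin. The transient coins contribute an additive $O(1)$ that is independent of $n$, so they do not affect the asymptotics.

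It therefore suffices to bound $t = \lceil \log_4 \ln(n) \rceil$, which is $O(\log\log n)$ by definition. Combining the three contributions gives a total memory of at most $4t + 1 + O(1) = O(\log\log n)$ coins at every step of the stream, which is exactly the claim.

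The only step that requires any care is making sure that no additional coins linger in memory between promotions; this is immediate from the algorithm's structure, since each bucket's contents are discarded as soon as its winner is forwarded to the next bucket. Hence there is no hidden accumulation across levels, and the naive bound $O(t) = O(\log\log n)$ is tight up to the constant factor.
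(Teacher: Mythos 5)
Your proposal is correct and follows essentially the same counting argument as the paper: the memory consists of $t=\ceil{\log_4 \ln(n)}=O(\log\log n)$ buckets of size $4$ plus the single candidate $\coinh$ (and $O(1)$ transient coins), giving $O(\log\log n)$ coins overall.
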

\begin{proof}
We maintain $\ceil{\log_{4}\ln(n)} = O(\log\log{n})$ buckets of size $4$ for the first $t$ levels and one extra coin space for the selection phase at the top. 
\end{proof}
%The sample complexity can also be easily bounded. 

\begin{lemma}\label{lem:alg2-sample}
The sample complexity of the algorithm is $O(\frac{n}{\Delta^2}\cdot\log{(1/\delta)})$.
\end{lemma}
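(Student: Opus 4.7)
The plan is to split the sample count into two parts: the contribution from the first $t = \lceil\log_4\ln(n)\rceil$ bucket levels, and the contribution from the top-level running-max comparisons against $\widehat{\coin}$. For each part I will bound the number of coins processed and multiply by the per-coin sample budget, then sum geometric series as was done in \Cref{lem:alg1-sample}.

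First I would observe that, exactly as in the proof of \Cref{lem:alg1-sample}, the number of coins that can ever populate bucket $B_\ell$ is at most $n/4^{\ell-1}$, since each bucket emits one ``winner'' to the next level after receiving $4$ coins. The work at level $\ell$ is therefore at most $\frac{n}{4^{\ell-1}} \cdot s_\ell = \frac{n}{4^{\ell-1}} \cdot \frac{4}{\Delta^2}\bigl(\ln(2/\delta) + 3^\ell\bigr)$. Summing over $\ell = 1, \ldots, t$ splits into two geometric sums: one proportional to $\ln(2/\delta) \sum_\ell 4^{-(\ell-1)}$, which is bounded by $O(\tfrac{n}{\Delta^2}\log(1/\delta))$, and one proportional to $\sum_\ell 3^\ell/4^{\ell-1}$, which converges to a constant and gives $O(n/\Delta^2)$. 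So the first $t$ levels contribute at most $O(\tfrac{n}{\Delta^2}\log(1/\delta))$ coin tosses.

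Next I would handle the top level. The number of coins ever reaching $B_t$ and being promoted to the top is at most $n/4^t \leq n/\ln(n)$ by the choice $t = \lceil\log_4\ln(n)\rceil$. Each such promoted coin is compared against $\widehat{\coin}$ using $s_T = \frac{4}{\Delta^2}(\ln(1/\delta) + \ln(n))$ samples on each of the two coins, so the total top-level cost is at most $2 \cdot \frac{n}{\ln n} \cdot \frac{4}{\Delta^2}\bigl(\ln(1/\delta) + \ln(n)\bigr)$. The $\ln(n)$ term cancels the $1/\ln(n)$ factor to give $O(n/\Delta^2)$, and the $\ln(1/\delta)$ term is dominated by $O(\tfrac{n}{\Delta^2}\log(1/\delta))$ since $\ln(n) \geq 1$. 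Adding the two contributions yields the claimed $O(\tfrac{n}{\Delta^2}\log(1/\delta))$ bound.

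There is no real obstacle here; the only mild thing to check is that the per-level sample budget $s_T$ at the top was deliberately designed to afford a union bound over $n/\ln(n)$ coins (which will be used in the correctness proof), yet its logarithmic blow-up is absorbed exactly by the $4^t \geq \ln(n)$ reduction in stream length achieved by the preceding levels. This is the same balancing trick that drives the sample-optimality of median-elimination, and writing out the two geometric sums above suffices to make the cancellation explicit.
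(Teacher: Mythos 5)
Your proof is correct and follows essentially the same route as the paper's: both split the count into the $t$ bucketing levels (bounded exactly as in \Cref{lem:alg1-sample}, with $\delta$ replaced by $\delta/2$) and the top level, where at most $n/4^t \leq n/\ln(n)$ promoted coins each cost $2 s_T$ tosses, so the $\ln(n)$ in $s_T$ is absorbed by the $1/\ln(n)$ factor. The only difference is cosmetic: the paper simply cites \Cref{lem:alg1-sample} for the first part, whereas you re-derive its geometric-series computation.
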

\begin{proof}
The sample complexity incurred by the first part of the algorithm, namely, the $t$ levels of bucketing is already $O(\frac{n}{\Delta^2}\cdot\log{(1/\delta)})$ by \Cref{lem:alg1-sample} (by replacing $\delta$ with $\delta/2$). The only other part of sample complexity
is the one incurred in maintaining $\coinh$ in the top level. 

As the number of bucketing levels is $t$ and size of each bucket is $4$, only $n/4^{t}$ coins ever reach the top level. 
Any coin reaching to top level incur $2 \cdot s_T$ additional samples ($s_T$ for $\coinh$ and $s_T$ for the new coin), leading 
\begin{align*}
	\textnormal{\# of samples on top level} \leq 2 \cdot s_T \cdot n/4^t \leq \frac{8}{\Delta^2} \cdot \paren{\ln{(1/\delta)}+\ln{(n)}} \cdot n/\ln{n} \leq  \frac{8n}{\Delta^2} \cdot \paren{\ln{(1/\delta)}}, \tag{by the choice of $s_T$ and $t$}
\end{align*}
finalizing the proof. 
\end{proof}

\begin{lemma}\label{lem:alg2-correct}
With probability at least $1-\delta$, the returned $\coinh$  is the most biased coin $\coinstar$.
\end{lemma}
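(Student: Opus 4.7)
\textbf{Proof plan for \Cref{lem:alg2-correct}.} My plan is to split the analysis into two phases mirroring the two phases of the algorithm and then union-bound the failure probabilities. In the first phase (the $t = \lceil \log_4 \ln(n) \rceil$ bucketing levels), I would invoke essentially the same argument as in the proof of \Cref{lem:alg1-correct} to show $\coinstar$ survives all buckets it participates in. In the second phase (the top-level ``running max'' between the candidate $\coinh$ and each arriving $\coin^*_t$), I would use the fact that $s_T$ is large enough to make each individual head-to-head comparison succeed with failure probability $O(\delta/n)$, so that a union bound over all top-level comparisons involving $\coinstar$ still fits within a $\delta/2$ budget.

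For the bucketing phase, the only change compared to \Cref{lem:alg1-correct} is that $s_\ell$ now uses $\ln(2/\delta)$ in place of $\ln(1/\delta)$. Applying \Cref{lem:coin-comp} with $\theta = \Delta$ and $K = 4(\ln(2/\delta) + 3^\ell)$, the probability that $\coinstar$ loses to a fixed other coin at level $\ell$ is at most $2 \exp(-\ln(2/\delta) - 3^\ell) = \delta \exp(-3^\ell)$; union bounding over the $3$ other coins in the bucket and summing the geometric-in-exponent series over $\ell = 1, \ldots, t$ yields failure probability at most $3\delta \sum_{\ell \geq 1} \exp(-3^\ell) < \delta/2$. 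Hence with probability at least $1 - \delta/2$ the padded stream (see \Cref{rem:fix-remark}) delivers $\coinstar$ to the top level, i.e., $\coinstar$ is selected into some $B_{\ell+1}$ at every $\ell < t$ in which it participates.

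For the top-level phase, I would observe that by the choice of $t$, the number of coins that ever reach the top level is at most $n/4^t \leq n/\ln n$, so in particular at most $n/\ln n$ head-to-head comparisons are performed against $\coinh$. By \Cref{lem:coin-comp} applied with $K = 4(\ln(1/\delta) + \ln n)$, any single comparison involving $\coinstar$ (whether $\coinstar$ is the incumbent $\coinh$ or the challenger $\coin^*_t$) is lost by $\coinstar$ with probability at most $2\exp(-\ln(1/\delta) - \ln n) = 2\delta/n$. Union bounding over the at most $n/\ln n$ top-level comparisons that involve $\coinstar$ gives total failure probability at most $2\delta/\ln n$, which is at most $\delta/2$ for any sufficiently large $n$ (for small $n$ the algorithm uses only the trivial running-max stage and an even simpler direct argument works).

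Finally, I would tie things together with a single union bound. Condition on both good events above (bucketing succeeds, and every top-level comparison involving $\coinstar$ is won by $\coinstar$); this occurs with probability at least $1 - \delta$. On this event, $\coinstar$ eventually reaches the top level, and from that point on it either already is $\coinh$ and wins every subsequent challenge, or becomes $\coinh$ the moment it challenges the current incumbent, and then wins every subsequent challenge. Comparisons between two non-$\coinstar$ coins are irrelevant for correctness, since the resulting $\coinh$ will be overwritten by $\coinstar$ the next time $\coinstar$ is involved. Thus at the end of the stream $\coinh = \coinstar$, completing the proof. The only place where care is needed is in correctly handling the case where $\coinstar$ is not the first coin to reach the top level, but the structure of the ``running max'' subroutine makes this automatic once every $\coinstar$-involving comparison is known to succeed.
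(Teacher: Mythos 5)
Your proposal is correct and follows essentially the same two-phase argument as the paper: it reuses the \Cref{lem:alg1-correct} analysis (with $\delta$ replaced by $\delta/2$) for the bucketing levels, then applies \Cref{lem:coin-comp} with $s_T$ samples and a union bound over the at most $n/\ln n$ coins reaching the top level (your bound $2\delta/\ln n \le \delta/2$ and the paper's $n/\ln n \le n/4$ step impose the same mild condition on $n$), and finally combines the two events by a union bound exactly as the paper does.
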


\begin{proof}

By \Cref{lem:alg1-correct} (by replacing $\delta$ with $\delta/2$), with probability at least $1-\delta/2$, $\coinstar$ will be preserved throughout the first $t$ levels of bucketing. As long as in any of the trials done in the top level, 
the empirical bias of $\coinstar$ is larger than any other coin, we are ensured that $\coinstar$ is returned as the correct answer.  Consider any other $\coin_i$ that reaches the top level. We have, 
\begin{align*}
	\Pr\paren{\textnormal{$\coinstar$ has a lower empirical bias than $\coin_i$ in top level}} &\leq 2\exp\paren{-(\ln{(1/\delta)} + \ln{(n)})} \tag{by \Cref{lem:coin-comp} and choice of $s_T$ samples in this level} \\
	&\leq \frac{2\delta}{n}.
\end{align*}

We can now do a union bound over at most $\frac{n}{\ln{n}} \leq \frac{n}{4}$ coins that reach the top level and obtain that the probability $\coinstar$ loses to any coin at this point is only $\delta/2$. A union bound 
on the two events above imply that with probability $1-\delta$ we return $\coinstar$ as the final answer. 
\end{proof}

\subsection{An $O(\logstar{(n)})$ Space Algorithm}\label{sec:logstar-coin}

This brings us to our final algorithm in this part with space complexity of $O(\logstar{(n)})$ coins. 

\begin{proposition}\label{prop:prelim-alg3}
	There exists a streaming algorithm that given $n$ coins arriving in a stream with the gap parameter $\Delta$ and confidence parameter $\delta$, finds the most biased coin with probability at least $1-\delta$ 
	using $O(\frac{n}{\Delta^2} \cdot \log{(1/\delta)})$ coin tosses and a memory of $\ceil{\log^{*}(n)}+1$ coins. 
\end{proposition}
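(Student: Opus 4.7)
}
The plan is to compress the level structure from the $O(\log\log n)$ algorithm into a $\log^{*}(n)$-deep one by using the \emph{aggressive} elimination schedule of~\cite{AgarwalAAK17}: the number of samples spent per comparison, as well as the number of challengers each champion must defeat before being promoted, will grow as a tower. Concretely, set $r_1 = 4$ and $r_{\ell+1}=2^{r_\ell}$, per-match sample count $s_\ell=\frac{4}{\Delta^2}(\ln(1/\delta)+3r_\ell)$, and promotion threshold $c_\ell=2^{r_\ell}/2^{\ell-1}$; note $\prod_{i\le\ell}c_i = \Omega(2^{r_{\ell+1}})$, so $t:=\lceil\log^{*}(n)\rceil+1$ levels suffice to collapse all $n$ coins. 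Instead of storing a bucket of size $4$ at each level, the algorithm only keeps a single ``running champion'' coin $\coin^{*}_\ell$ per level along with a counter $C_\ell$. When a new coin arrives it plays a match against $\coin^{*}_1$ (both tossed $s_1$ times, higher empirical bias wins), $C_1$ is incremented, and once $C_1=c_1$ the current $\coin^{*}_1$ is promoted to level $2$ where it plays against $\coin^{*}_2$ with $s_2$ tosses each, and so on recursively. At the end of the stream we return $\coin^{*}_t$. (As in Remark~\ref{rem:fix-remark}, we pad the stream with zero-bias dummies so that $\coinstar$ has time to climb to level $t$.) The space footprint is clearly one coin per level, i.e.\ $\lceil\log^{*}(n)\rceil+1$ coins.

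For sample complexity, by construction the number of coins that ever reach level $\ell$ is at most $N_\ell := n/\prod_{i<\ell}c_i$, and each such coin participates in one match costing $2s_\ell$ tosses. The total is $\sum_{\ell=1}^{t} 2N_\ell s_\ell$. Splitting $s_\ell$ into its $\ln(1/\delta)$ and $r_\ell$ pieces, the first sum $\sum_\ell N_\ell\cdot \tfrac{\ln(1/\delta)}{\Delta^2}$ telescopes since the $N_\ell$ drop at tower speed, giving $O(n\ln(1/\delta)/\Delta^2)$; for the second, the contribution of level $\ell$ is $O(n\cdot r_\ell/(\Delta^2\prod_{i<\ell}c_i))$, and since $\prod_{i<\ell}c_i$ grows as $2^{r_\ell}/\poly(\ell)$ while $r_\ell$ is only the logarithm of this, the series is dominated by a geometric one and sums to $O(n/\Delta^2)$. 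Together this gives the claimed $O(n\log(1/\delta)/\Delta^2)$ bound (the calculation is essentially the one carried out in Lemma~\ref{lem:alg-epsbest-sample} for the $\eps$-best arm algorithm, specialized to a single $\beta=1/\Delta^2$).

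For correctness, it suffices to track the most biased coin $\coinstar$ along its journey from level $1$ to level $t$: it must win every match it plays. By Lemma~\ref{lem:coin-comp}, a single match at level $\ell$ is lost with probability at most $2\exp(-\ln(1/\delta)-3r_\ell)\le 2\delta\,e^{-3r_\ell}$. While $\coinstar$ is the champion at level $\ell$ it plays at most $c_\ell$ matches (one per incoming promotion from level $\ell-1$), plus the single entry match it plays upon first arriving at level $\ell$; a union bound gives a per-level failure probability of at most $(c_\ell+1)\cdot 2\delta\, e^{-3r_\ell}\le 4\delta\cdot 2^{r_\ell-\ell+1}/e^{3r_\ell}\le \delta/2^{\ell+1}$, using $c_\ell=2^{r_\ell}/2^{\ell-1}$ and $2/e^3<1$. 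Summing over $\ell\in[t]$ yields total failure probability at most $\sum_{\ell\ge 1}\delta/2^{\ell+1}=\delta/2<\delta$, as required.

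The main obstacle is ensuring that the schedule $(s_\ell, c_\ell, r_\ell)$ simultaneously makes the sample series convergent (which wants $c_\ell\gg r_\ell$) and the correctness union bound tight (which wants $c_\ell\,e^{-3r_\ell}$ to sum over $\ell$). The tower growth $r_{\ell+1}=2^{r_\ell}$ together with the $2^{\ell-1}$ slack in the definition of $c_\ell$ is exactly what threads this needle, and is the reason why the number of levels lands at $\log^{*}(n)$ rather than anything smaller; the tightness of this $\log^{*}(n)$ bound on the number of elimination rounds, noted in the introduction, is what motivates the eventual departure from the elimination paradigm in Section~\ref{sec:main}.
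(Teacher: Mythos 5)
Your proposal is correct and is essentially the paper's own construction and analysis: the identical tower schedule $r_1=4$, $r_{\ell+1}=2^{r_\ell}$, $s_\ell$, $c_\ell=2^{r_\ell}/2^{\ell-1}$ with one running champion and a counter per level, the same amortized summation $\sum_\ell (n/\prod_{i<\ell}c_i)\cdot 2s_\ell$ for the sample bound, and the same per-level union bound of roughly $c_\ell\cdot 2\delta e^{-3r_\ell}\le \delta/2^{\ell+1}$ for correctness. One small quantitative slip: $\prod_{i\le\ell}c_i = \bigl(r_{\ell+1}r_\ell\cdots r_2\bigr)/2^{\binom{\ell}{2}}$, which is $\Omega(r_{\ell+1})=\Omega(2^{r_\ell})$ but \emph{not} $\Omega(2^{r_{\ell+1}})$ as you wrote (nor "$2^{r_\ell}/\poly(\ell)$"); the weaker, correct bound already suffices both to show that $\lceil\log^{*}(n)\rceil+1$ levels exhaust all $n$ coins and that the series $\sum_\ell r_\ell/\prod_{i<\ell}c_i$ converges, so nothing else in your argument changes.
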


\paragraph{High Level Overview.} Our algorithm in~\Cref{prop:prelim-alg2} suggested a way of discarding the entire $\log{n}-\log\log{n}$ levels of the original algorithm in~\Cref{prop:prelim-alg1} and replacing them by maintaining a 
simple running (candidate) best coin. 

To obtain the new algorithm, we recursively do this for every level of the algorithm of~\Cref{prop:prelim-alg1}, in effect, entirely bypassing the bucketing idea, and have a different \emph{leveling} scheme (for simplicity of exposition, we 
still refer to these at levels but note that these are different than levels of~\Cref{prop:prelim-alg1}). The important thing is that we no longer store an entire bucket per level to postpone the computation of their most biased coin to later. Instead, we compute a 
running (candidate) best coin in each level and once we visited ``enough'' number of coins in this level, we send this coin to the next level and do exactly the same. This way, we can consider a much larger number of coins per each level (by simply maintaining a \emph{counter}) without having to pay the cost of storing them explicitly.

\begin{tbox}
	\textbf{The $O(\logstar{(n)})$ Algorithm:} 
	
	\medskip
	
	Parameters ($s_\ell$ denotes the number of samples at level $\ell$, and $r_\ell$ specifies $s_\ell$):
	\begin{align*}
	&\set{r_{\ell}}_{\ell \geq 1}: \quad r_1 = 4, \quad r_{\ell+1} = 2^{r_{\ell}}; \tag{intermediate variables to define $s_\ell$ and $c_\ell$} \\
	&\set{s_\ell}_{\ell \geq 1}: \quad s_{\ell} = \frac{4}{\Delta^2} \cdot \paren{\ln{(1/\delta)}+ 3 \cdot r_\ell}; \tag{number of samples per each level} \\
	&\set{c_\ell}_{\ell \geq 1}: \quad c_\ell = \frac{2^{r_\ell}}{2^{\ell-1}} \tag{the bound for restarting the counter of each level}.
	\end{align*}
	
	Counters: $C_{1}, C_{2}, \ldots, C_{t}$ for $t= \ceil{\logstar{(n)}} + 1$. 
	
	\medskip
	
	Stored coins: $\coin^{*}_{1}, \coin^{*}_{2}, ..., \coin^{*}_{t}$ as the (candidate) most biased coin each level. 
	
	\begin{itemize}
	\item For each arriving $\coin_i$ in the stream do:
	\begin{enumerate}[label=($\arabic*$)]
		\item Starting from level $\ell=1$ to $t$ do: 
		\begin{enumerate}
			\item\label{line:sample}  Sample both $\coin_{i}$ and $\coin^{*}_{\ell}$ for $s_{\ell}$ times. If empirical bias of $\coin_i$ is less than $\coinstar_\ell$, drop $\coin_i$, otherwise, replace $\coinstar_\ell$ with $\coin_i$. 
			\item Increase $C_{\ell}$ by 1. If $C_{\ell}=c_{\ell}$, send $\coin^{*}_{\ell}$ to the next level by considering it as a new arriving coin in Line~\ref{line:sample} for $\ell+1$ and restart $C_\ell = 0$; otherwise 
			go to the next coin in the stream.  
		\end{enumerate}
	\end{enumerate}
	\item Return $\coin_{t}^{*}$ as the most biased coin. % (at the end of the stream, treat every level $\ell$ as if $C_\ell = c_\ell$ to ensure every coin has a chance of getting to the top level).
	\end{itemize}
\end{tbox}

See also~\Cref{rem:fix-remark} about the standard `padding argument' discussed earlier.

\begin{claim}\label{clm:alg3-space}
The space complexity of the algorithm is $\ceil{\log^{*}(n)} + 1$.
\end{claim}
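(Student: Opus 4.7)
My plan is to prove this claim by direct inspection of the algorithm's memory usage, following the convention from the Remark in Section~\ref{sec:prelim} that the space complexity counts only stored coins and not counters or other auxiliary information. The first step is to enumerate the coin-holding state of the algorithm: exactly the $t$ variables $\coin^*_1, \ldots, \coin^*_t$, one per level, where $t = \ceil{\logstar{(n)}}+1$. This accounts for $t$ coins in total. The integer counters $C_1, \ldots, C_t$ and the coin currently being read from the stream do not contribute to this count by the stated convention (the counters are bounded by the corresponding $c_\ell$ and fit in a small number of bits, and the arriving coin is the one we have streaming access to at the moment).

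The step that requires the most care, and the main potential obstacle, is the handling of a \emph{promotion cascade}: when $C_\ell$ reaches $c_\ell$, the coin $\coin^*_\ell$ is ``sent'' to level $\ell+1$ by being treated as a new arriving coin at that level, which may in turn trigger another promotion up the chain. I would argue that each such promotion immediately frees the level-$\ell$ slot (either the promoted coin replaces $\coin^*_{\ell+1}$, so it moves from slot $\ell$ to slot $\ell+1$; or it is discarded after losing the comparison), so that at every instant the number of occupied slots is at most $t$. Under this view, the cascade is merely the ``current coin'' bubbling upward through the levels, and it never inflates the stored count past $t$.

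For completeness, and because strictly speaking this is an issue of correctness and well-definedness rather than of the space bound itself, I would verify that $t$ levels suffice to process a stream of length $n$ without any coin needing to be promoted past the top level. Since $r_1 = 4$ and $r_{\ell+1} = 2^{r_\ell}$, the sequence $r_\ell$ is a tower of $2$s of height $\ell$, and therefore $\prod_{i=1}^{t-1} c_i$ is at least a tower of height roughly $\logstar{(n)}$, which exceeds $n$ by the definition of $\logstar$. Combining these observations yields the claimed bound of $\ceil{\logstar{(n)}}+1$ stored coins.
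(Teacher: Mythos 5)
Your proposal is correct and matches the paper's argument, which simply counts one stored coin per each of the $t = \ceil{\logstar{(n)}}+1$ levels (with counters and the arriving coin not counted toward the stored-coin budget). The additional remarks about promotion cascades and the sufficiency of $t$ levels are sound elaborations of the same direct counting argument.
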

\begin{proof}
We have $t = \ceil{\logstar{(n)}}+1$ levels, each containing a single coin. 
\end{proof}

\begin{lemma}\label{lem:alg3-sample}
The sample complexity of the algorithm is $O(\frac{n}{\Delta^2}\cdot\log{(1/\delta)})$.
\end{lemma}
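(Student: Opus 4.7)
My plan is to bound the total number of coin tosses by a level-by-level accounting. At any level $\ell$, the algorithm processes each incoming coin (either a fresh coin from the stream or one promoted from level $\ell-1$) by a single trial on Line~\ref{line:sample}, which tosses two coins (the arrival and the current incumbent $\coin^{*}_{\ell}$) for $s_\ell$ times each. Hence each arrival at level $\ell$ costs exactly $2 s_\ell$ tosses. By construction, exactly one out of every $c_i$ coins processed at level $i$ is promoted to level $i+1$, so the number of coins that ever reach level $\ell$ is at most $n/\prod_{i=1}^{\ell-1} c_i$ (the product is empty, and equal to $1$, when $\ell=1$). This yields the bound
\begin{align*}
\textnormal{\# of samples} \;\leq\; \sum_{\ell=1}^{t} \frac{2 s_\ell \cdot n}{\prod_{i=1}^{\ell-1} c_i} \;=\; \frac{8n}{\Delta^2} \sum_{\ell=1}^{t} \frac{\ln(1/\delta) + 3 r_\ell}{\prod_{i=1}^{\ell-1} c_i}.
\end{align*}

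I would split the right-hand sum into two pieces: one collecting the $\ln(1/\delta)$ contributions and another collecting the $3 r_\ell$ contributions. For the first piece, I only need that $\sum_{\ell \geq 1} 1/\prod_{i=1}^{\ell-1} c_i$ converges to an absolute constant. This is easy to see because, using the lower bound $\prod_{i=1}^{\ell-1} c_i \geq c_{\ell-1}$ for $\ell \geq 2$ (note $c_i \geq 1$), and since $c_\ell = 2^{r_\ell}/2^{\ell-1}$ with $\{r_\ell\}$ a tower of twos, the $c_\ell$'s grow super-exponentially and their reciprocals form a rapidly convergent series. Accounting for the $\ell=1$ term (which contributes $1$) then shows this piece contributes only $O(n\log(1/\delta)/\Delta^2)$.

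For the second piece, the main idea is to exploit a telescoping-style cancellation between $r_\ell$ and $c_{\ell-1}$. Specifically, since $r_\ell = 2^{r_{\ell-1}}$, the definition of $c_{\ell-1} = 2^{r_{\ell-1}}/2^{\ell-2}$ gives the clean identity $r_\ell / c_{\ell-1} = 2^{\ell-2}$. Using the lower bound $\prod_{i=1}^{\ell-1} c_i \geq c_{\ell-2} \cdot c_{\ell-1}$ for $\ell \geq 3$, each term of the second sum is bounded by $3 \cdot 2^{\ell-2}/c_{\ell-2}$. Because $c_{\ell-2}$ again grows like a tower, its growth swamps the mild geometric factor $2^{\ell-2}$, so this series also converges to an absolute constant; the $\ell \in \{1,2\}$ terms are handled separately and contribute only $O(n/\Delta^2)$.

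Summing the two pieces gives the desired bound $O(\frac{n}{\Delta^2} \cdot \log(1/\delta))$. I do not anticipate a genuine obstacle here: the proof is essentially careful bookkeeping on geometric and tower-type series, and the design of the parameters $s_\ell$ and $c_\ell$ has been chosen precisely so that the per-level cost of $r_\ell$ is absorbed exactly by the $2^{r_{\ell-1}}$ factor in $c_{\ell-1}$. The only subtlety worth emphasizing is the factor of $2$ in $2 s_\ell$ (since both the incumbent and the arrival are tossed), and the need to handle small $\ell$ separately because the denominator $\prod_{i=1}^{\ell-1} c_i$ is trivial there.
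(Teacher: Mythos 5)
Your proposal is correct and follows essentially the same route as the paper's proof: bounding the number of coins reaching level $\ell$ by $n/\prod_{i=1}^{\ell-1} c_i$, splitting the resulting sum into the $\ln(1/\delta)$ part and the $3r_\ell$ part, and using the cancellation $r_\ell/c_{\ell-1} = 2^{\ell-2}$ together with the tower growth of the $c_\ell$'s to show both series converge to absolute constants. The only cosmetic difference is that the paper replaces the full product by the two-factor lower bound $c_{\ell-1}\cdot c_{\ell-2}$ (with $c_{-1}=c_0=1$) from the start, which is exactly the bound you invoke for the second piece.
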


\begin{proof}
Let $K_{\ell}$ denote the number of coins that are ever visited in level $\ell$. By construction, the sample complexity of the algorithm is:
\begin{align*}
	\textnormal{\# of samples} &= \sum_{\ell=1}^{t} K_\ell \cdot 2s_{\ell} \\
	&\leq \sum_{\ell=1}^{t} \frac{n}{\prod_{i=1}^{\ell-1}c_{i}} \cdot 2s_{\ell} \tag{as for each $c_{\ell'}$ coin in a level $\ell'$, we only send one coin to the level $\ell'+1$} \\
	&\leq \sum_{\ell=1}^{t} \frac{n}{c_{\ell-1} \cdot c_{\ell-2}} \cdot 2s_{\ell} \tag{where we define $c_{-1} = c_{0} = 1$} \\
	&=  \sum_{\ell=1}^{t} \frac{n}{c_{\ell-1} \cdot c_{\ell-2}} \cdot \frac{8}{\Delta^2} \cdot \paren{\ln{(1/\delta)}+ 3 \cdot r_\ell} \tag{by the choice of $s_\ell$} \\
	&\leq \Paren{\frac{8n}{\Delta^2}} \cdot \Paren{\ln{(1/\delta)} \cdot \sum_{\ell=1}^{t} \frac{1}{c_{\ell-1} \cdot c_{\ell-2}} + \sum_{\ell=1}^{t} \frac{3r_\ell}{c_{\ell-1} \cdot c_{\ell-2}}} \\
	&\leq \Paren{\frac{8n}{\Delta^2}} \cdot \Paren{2\ln{(1/\delta)} + \sum_{\ell=1}^{t} \frac{3r_\ell}{c_{\ell-1} \cdot c_{\ell-2}}}  \tag{the first series converges to $<2$ in infinity}  \\
	&\leq \Paren{\frac{8n}{\Delta^2}} \cdot \Paren{2\ln{(1/\delta)} + O(1) + \sum_{\ell=3}^{t} \frac{3 \cdot 2^{\ell-2}}{c_{\ell-2}}}  \tag{by the choice of $c_{\ell-1}$ and $r_{\ell} = 2^{r_{\ell-1}}$, and since $r_1,r_2=O(1)$} \\
	&= O(\frac{n}{\Delta^2}\cdot\log{(1/\delta)}) \tag{the second series converges to $O(1)$ also in infinity}. 
\end{align*}

It is also worth mentioning here that by the choice of $t = \ceil{\logstar{(n)}} + 1$, $K_{t+1} = 0$ and hence the algorithm never finishes processing its last level (which is required for its correctness). 
\end{proof}

\begin{lemma}\label{lem:alg3-correct}
With probability at least $1-\delta$, the algorithm returns the most biased coin.
\end{lemma}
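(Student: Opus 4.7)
The plan is to mirror the strategy used in Lemma A.3 (the $O(\log n)$ case) but now track the progress of the most biased coin $\coinstar$ through the $t = \lceil \logstar(n)\rceil + 1$ \emph{levels} instead of buckets. After the padding argument (Remark A.5), $\coinstar$ certainly has enough time to reach level $t$, so it suffices to bound the probability that $\coinstar$ loses \emph{any} of the pairwise comparisons it participates in along the way.

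First, I would apply \Cref{lem:coin-comp} to a single comparison at level $\ell$: since $\coinstar$ has bias exceeding any other coin's bias by at least $\Delta$, the number of samples $s_\ell = \frac{4}{\Delta^2}\bigl(\ln(1/\delta) + 3r_\ell\bigr)$ ensures that $\coinstar$ loses the comparison with probability at most
\[
2\exp\bigl(-\ln(1/\delta) - 3r_\ell\bigr) \;=\; 2\delta\,\exp(-3r_\ell).
\]
Next, I would bound the number of comparisons $\coinstar$ participates in at level $\ell$. Because the counter $C_\ell$ promotes the current candidate whenever $C_\ell = c_\ell$ and is then reset, from the moment $\coinstar$ first arrives at level $\ell$ until it is either dropped or promoted to level $\ell+1$, at most $c_\ell$ comparisons (one per counter increment, plus the initial challenge if it arrives mid-batch) involve $\coinstar$.

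A union bound then gives that $\coinstar$ loses at level $\ell$ with probability at most
\[
c_\ell \cdot 2\delta\exp(-3r_\ell) \;=\; \frac{2^{r_\ell}}{2^{\ell-1}} \cdot 2\delta\exp(-3r_\ell) \;\leq\; \frac{2\delta}{2^{\ell-1}}\exp(-2r_\ell),
\]
where I used $2^{r_\ell} = e^{r_\ell \ln 2} \leq e^{r_\ell}$ so that $2^{r_\ell}e^{-3r_\ell} \leq e^{-2r_\ell}$. Summing over $\ell = 1, \ldots, t$ (the $\exp(-2r_\ell)$ terms decay as a tower of exponentials, dominated by their first term $\exp(-2r_1) = e^{-8}$) and using $\sum_{\ell \geq 1} 2^{-(\ell-1)} = 2$, the total failure probability is at most $4\delta \cdot e^{-8} < \delta$, as required.

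There is no real technical obstacle in this proof; it is essentially a careful union bound in the same spirit as Lemmas A.3 and A.6. The only subtle point is correctly counting the comparisons involving $\coinstar$ at each level (bounded by $c_\ell$) and verifying that the rapid growth of $r_\ell$ makes the $\exp(-3r_\ell)$ factor easily absorb the $c_\ell = 2^{r_\ell}/2^{\ell-1}$ factor in every level — this is exactly how $c_\ell$ was chosen in the parameter setup.
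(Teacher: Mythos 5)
Your proof is correct and follows essentially the same route as the paper's: a per-comparison application of \Cref{lem:coin-comp} with the $s_\ell$ samples, a bound of $c_\ell$ on the number of comparisons $\coinstar$ faces at level $\ell$, and a union bound first within each level (where the tower growth of $r_\ell$ absorbs $c_\ell = 2^{r_\ell}/2^{\ell-1}$) and then over all levels. The only difference is cosmetic arithmetic in how the $2^{r_\ell}$ factor is absorbed ($2^{r_\ell}e^{-3r_\ell}\le e^{-2r_\ell}$ versus the paper's $2\delta e^{-3r_\ell}\le \delta/(4\cdot 2^{r_\ell})$), and both yield a total failure probability below $\delta$.
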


\begin{proof}
Consider any level $\ell$ and assume the most biased coin $\coinstar$ is present in this level. Then, the probability that any other coin $\coin_i$ has a greater empirical bias than $\coinstar$ is at most, 
\begin{align*}
	\Pr\paren{\textnormal{$\coinstar$ has a lower empirical bias than $\coin_i$ in level $\ell$}} &\leq 2\exp\paren{-(\ln{(1/\delta)} + 3r_\ell)} \tag{by \Cref{lem:coin-comp} and choice of $s_\ell$ samples in this level} \\
	&\leq 2\delta \exp\paren{-3r_\ell} \leq \frac{\delta}{4 \cdot 2^{r_\ell}}. 
\end{align*}
On the other hand, the total number of coins that will be compared with $\coinstar$ at level $\ell$ (before the counter gets reset) is at most $c_\ell = \frac{2^{r_{\ell}}}{2^{\ell-1}}$. Hence, by a union bound, 
the probability that $\coinstar$ loses to any of them is at most $\frac{\delta}{2^{\ell+1}}$. This means that assuming $\coinstar$ is present at level $\ell$, the probability that it is not sent to the next level is only $\frac{{\delta}}{2^{\ell+1}}$. 
We can now do a union bound over all levels and obtain that: 
\begin{align*}
	\Pr\paren{\textnormal{$\coinstar$ is not returned as the answer}} &\leq \sum_{\ell=1}^{t} \frac{{\delta}}{2^{\ell+1}} \leq \delta \tag{as this series converges to $1$ in infinity}. 
\end{align*}
This concludes the proof. 
\end{proof}
\newcommand{\calg}{\ensuremath{\textnormal{\textsf{ALG-c}}}\xspace}

\section{Random Walk with Flexible Step Size}\label{sec:rand-walk}

In the proof of \Cref{lem:main-keep}, we have shown that with the `conservative' challenging rules, the number of coin tosses never exhausts the cumulative budget over the $\Theta(n)$ stream. Notice that the challenge process can be viewed as the fluctuation of a random variable with deterministic increment steps (`increase budget') and randomized decreasing steps (`coin tosses'). In this sense, the challenging process can be perceived as a variation of a classical \emph{Random Walk}, which concerns the value of a random sequence with certain probabilities for walking `forward' and `backward'. In this section, we will look into more details about the random walk and study the characteristics of the coin challenge process in $\mainalg$ from this perspective.
\subsection*{Classical Random Walk}
We first give the definition of a classical one-dimension random walk. 

\begin{definition}[One-dimensional Random Walk]
\label{def:classic-walk}
A one-dimension random walk with $n$ steps and forward-moving probability $p$ is a stochastic sequence $\set{S_{i}}_{i=0}^{n}$ with the following characteristics: In the beginning, $S_{0}=0$; At each step $i\in [n]$, $S_{i}=\sum_{j=1}^{i}X_{j}$ with 
independent random variables $X_{j} \in \set{-1,+1}$ such that $\Pr(X_{j}=1)=p$. 
\end{definition}
\noindent
The following proposition is well-known. 

%%It is well known that in this one-dimensional random walk, 
%%For this type of (simple) one-dimension random walk, previous studies have established thorough understandings on the its behaviors. Specifically, regarding the quantity of the random variable $S_{i}$ (which we will refer it as the \emph{quantity of the walk}), there are following important properties:
%%\begin{itemize}
%%\item $\forall i\in\{1,2,\cdots,n\}$, the distance between the quantity of the walk and its expectation is at most $\Theta(\sqrt{i\log(i)})$ with high probability (e.g. $1-o(\frac{1}{i})$).
%%\item If $p>\frac{1}{2}$, which means $\Pr(X_{i}=1)>\Pr(X_{i}=0)$, then the expectation of the quantity of the walk is greater than $0$.
%%\end{itemize}
%%The above statements characterized the quantity of the walk on a single step. By taking them together, one can obtain a property \emph{through} the sequence: For a walk `tend to move forward', the quantity of the walk will unlikely to go below $0$ through the entire $\Theta(n)$-length sequence. Formally, this behavior can be given as the following:
\begin{proposition}[Non-negativity of One-Dimensional Random Walk]
\label{prop:classic-walk}
For a one-dimension random walk characterized by \Cref{def:classic-walk} and $p>\frac{1}{2}$, we have:
\begin{equation*}
\Pr\paren{\exists i: S_{i}\leq 0}< O(1-p). 
\end{equation*}
\end{proposition}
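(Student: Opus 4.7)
The plan is to approach this via a classical gambler's-ruin style first-step analysis, reducing the proposition to a quadratic recursion whose smaller root yields the desired $O(1-p)$ bound. Since $\Pr\paren{\exists i \in [n]: S_i \leq 0}$ is monotone non-decreasing in $n$, it suffices to bound the corresponding probability for the infinite walk, and since $S_0 = 0$ the event is to be read as $\exists i \geq 1: S_i \leq 0$ (otherwise the statement is vacuous).

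First I will define $q := \Pr(\exists i \geq 1: S_i = -1)$, the probability that a walk started at $0$ ever reaches $-1$. By translation invariance together with the strong Markov property at the first hitting time of level $0$ when starting from $+1$, the probability of going from $+1$ down to $-1$ factors as $q^2$: first descend from $+1$ to $0$ (probability $q$ by translation) and then, independently, from $0$ to $-1$ (probability $q$). Conditioning on the outcome of the first step therefore yields the recursion
\begin{align*}
q = (1-p) \cdot 1 + p \cdot q^2,
\end{align*}
which rearranges to $pq^2 - q + (1-p) = 0$ and factors as $(pq-(1-p))(q-1) = 0$, so $q \in \set{(1-p)/p,\; 1}$.

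The key step is ruling out the root $q = 1$, for which I plan to invoke the positive drift $\expect{X_j} = 2p - 1 > 0$: by the strong law of large numbers $S_n/n \to 2p-1$ almost surely, so $S_n \to +\infty$ and the walk visits $-1$ only finitely often, forcing $q < 1$ and hence $q = (1-p)/p$. A second application of first-step analysis then gives
\begin{align*}
\Pr\paren{\exists i \geq 1: S_i \leq 0} = (1-p) + p \cdot \Pr\paren{\textnormal{walk from $+1$ ever hits $0$}} = (1-p) + p \cdot q = 2(1-p),
\end{align*}
using that from $+1$ the walk can only reach $\leq 0$ by first hitting exactly $0$, which by translation has probability $q$. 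Since $p > 1/2$ this is $O(1-p)$, as claimed.

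The main delicate point I anticipate is the rigorous selection of the correct root. If the SLLN-based justification feels unsatisfying, a cleaner alternative is to apply the optional stopping theorem to the bounded martingale $M_k := \paren{(1-p)/p}^{S_k}$ with the bounded stopping times $\tau_N := \min\set{k : S_k \in \set{-1,\, N}}$; taking $N \to \infty$ in the resulting identity $\expect{M_0} = \expect{M_{\tau_N}}$ recovers $q = (1-p)/p$ directly, avoiding any almost-sure convergence argument and also giving sharper quantitative control on the dependence on $p$ if one later needs it.
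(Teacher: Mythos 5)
Your proposal is correct in substance, but there is nothing in the paper to compare it against: the paper states \Cref{prop:classic-walk} as a well-known fact and gives no proof, using it only as motivation before proving the flexible-step analogue (\Cref{prop:flexible-walk}) by a completely different route, namely sub-exponential tails plus Bernstein's inequality and a union bound over time steps. Your gambler's-ruin/first-step analysis is the standard way to prove the classical statement and in fact gives the exact value $2(1-p)$ for the infinite walk (after the sensible reading $\exists i\geq 1$, and using monotonicity in $n$), which is stronger than the stated $O(1-p)$ bound. One small logical slip: from $S_n\to+\infty$ a.s.\ you conclude that the walk ``visits $-1$ only finitely often, forcing $q<1$''; visiting $-1$ finitely often is compatible with $q=1$, so this step as written does not rule out the root $q=1$. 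The SLLN argument can be repaired by iterating the strong Markov property: if $q=1$ then the walk a.s.\ visits $-k$ for every $k$, so $\liminf_n S_n=-\infty$ a.s., contradicting $S_n\to+\infty$. Alternatively, the optional-stopping argument you sketch with the martingale $M_k=\paren{(1-p)/p}^{S_k}$ and the stopping times $\tau_N$ is airtight and yields $q=(1-p)/p$ directly, so with that variant (or the repaired SLLN step) your proof is complete.
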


\subsection*{Flexible Step-length Random Walk}

Notice that for \Cref{prop:classic-walk} to hold, the steps in a classical random walk can only be $+1$ or $-1$. A natural question to ask is that if the step length becomes flexible and unbounded, how can we keep the quantity of a positive? In this section, we will discuss the properties of such type of walk and its relationship with the challenging process of $\mainalg$ and \Cref{lem:main-keep}. 

%%
%%In the worst case, if the step length becomes unbounded, it is impossible to guarantee the positivity of the quantity. For an extreme example, consider a walk which moves $+1$ \emph{deterministically} at each step, and will move $-2^{10}$ with very small probability (say, $\frac{1}{2^{5}}$). In this example, although the walk almost always increases its quantity at every step, a single backward walk can make it hopeless to keep positive. In fact, for this example, even the expectation of the quantity is $(1\cdot(1-\frac{1}{2^{5}})-2^{10}\cdot \frac{1}{2^{5}})\leq(1-2^{5})$, which is certainly far from positive. 
%%

In general, to keep the quantity of a walk positive with unbounded step lengths, there are the following two properties to consider:
\begin{enumerate}
\item The expectation of the backward step size should be smaller than the forward step size. Therefore, in expectation, the walk will be positive in quantity.
\item The variance is not too large. Hence, even if the backward steps become larger, it will not `exhaust' all the accumulated forward steps very quickly.
\end{enumerate}
Based on these, we define a \emph{Flex-length Positive Random Walk} as the following process:
\begin{definition}[Flex-length Positive Random Walk]
\label{def:challenge-walk}
A \emph{Flex-length Positive Random Walk}  with $n$ steps is a stochastic sequence $\set{S_{i}}_{i=0}^{n}$ with the following characteristics: In the beginning, $S_{0}=0$; At each step $i\in [n]$, $S_{i}=\sum_{j=1}^{i}X_{j}$ with independent random variables 
$X_{j}$ with the following properties:
\begin{enumerate}
\item $X_{j}$ is a sub-exponential random variable with parameter $\kappa=\frac{1}{\ln(1/\delta)}$.
\item $\expect{X_j} \geq \eta(j)$, where $\eta(j) :=C\cdot \frac{\ln(j/\delta)}{\sqrt{j}}$ for some absolute constant $C > 0$. 
\end{enumerate}
\end{definition}

We now prove an analogue of~\Cref{prop:classic-walk} for the Flex-length Positive Random Walk. % will not decrease to zero with high constant probability. Formally, we have the following lemma:

\begin{proposition}[Non-negativity of Flex-length Positive Random Walk]
\label{prop:flexible-walk}
With probability at least $(1-\delta)$, \emph{Flex-length Positive Random Walk} will have $S_{i}>0$ for all  $i\in [n]$.
\end{proposition}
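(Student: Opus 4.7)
The plan is to mirror the template of the proof of \Cref{lem:main-keep}: first lower bound the deterministic mean trajectory $\expect{S_i}$ as a growing function of $i$, then invoke Bernstein's inequality (\Cref{prop:bernstein}) to control the fluctuation $|S_i - \expect{S_i}|$ below $\expect{S_i}$ with very high probability, and finally close the argument with a union bound over $i \in [n]$.

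First I would lower bound the expected value. By linearity and the second property of the walk, $\expect{S_i} \geq C \sum_{j=1}^{i} \frac{\ln(j/\delta)}{\sqrt{j}}$. A standard integral comparison (or, concretely, restricting the sum to $j \in [i/2, i]$, on which each summand is at least $\frac{\ln(i/(2\delta))}{2\sqrt{i}}$) gives $\expect{S_i} \geq C' \sqrt{i}\,\ln(i/\delta)$ for a constant $C'$ proportional to $C$. This is the analogue of \Cref{clm:main-expect} and the reason the walk drifts away from $0$.

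Next, the variables $X_j - \expect{X_j}$ are mean-zero and sub-exponential with parameter $O(\kappa) = O(1/\ln(1/\delta))$ (absorbing an absolute constant when centering a sub-exponential variable). Applying \Cref{prop:bernstein} to $S_i - \expect{S_i} = \sum_{j=1}^{i}(X_j - \expect{X_j})$ with the choice $t_i := \expect{S_i}/2$, I would write
\begin{align*}
\Pr(S_i \leq 0) \leq \Pr\paren{\card{S_i - \expect{S_i}} \geq t_i} \leq 2\exp\paren{-c \cdot \min\paren{\frac{t_i^2}{\kappa^2 i},\; \frac{t_i}{\kappa}}}.
\end{align*}
Plugging in the lower bound on $\expect{S_i}$ and $\kappa = 1/\ln(1/\delta)$, the quadratic term becomes at least $(C')^2 \ln^2(i/\delta)\,\ln^2(1/\delta)$, while the linear term becomes at least $C'\sqrt{i}\,\ln(i/\delta)\,\ln(1/\delta)$. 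In both regimes the exponent dominates $3\ln(i/\delta)$ provided the constant $C$ in $\eta(j)$ is chosen sufficiently large (which is the content of the "for some absolute constant $C$" in \Cref{def:challenge-walk}). Hence $\Pr(S_i \leq 0) \leq \delta/(i(i+1))$, say, for every $i \in [n]$.

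Finally, a union bound over $i \in [n]$ yields
\begin{align*}
\Pr\paren{\exists\, i \in [n]:\; S_i \leq 0} \leq \sum_{i=1}^{n} \frac{\delta}{i(i+1)} \leq \delta \sum_{i=1}^{\infty} \frac{1}{i(i+1)} = \delta,
\end{align*}
finishing the proof. The main subtlety I expect is checking both regimes of the Bernstein minimum cleanly: for small $i$, the linear term $t_i/\kappa$ is binding and grows only like $\sqrt{i}\,\ln(i/\delta)\,\ln(1/\delta)$, so one must verify that this still beats $\ln(i/\delta)$ by a factor large enough to absorb the union-bound loss. That is really just a constant-management step, identical in spirit to how $C$ was chosen at the end of the proof of \Cref{lem:main-keep} to make the analogous geometric series sum to less than $1$.
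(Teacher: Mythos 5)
Your proposal is correct and follows essentially the same route as the paper's proof: lower bound $\expect{S_i}$ by roughly $C\sqrt{i}\,\ln(i/\delta)$ using the drift condition, apply Bernstein's inequality (\Cref{prop:bernstein}) to the centered sub-exponential increments while handling both regimes of the $\min$, and finish with a union bound over $i \in [n]$. The only differences are bookkeeping choices (deviation threshold $\expect{S_i}/2$ versus the paper's explicit $C\sqrt{i}\ln(i/\delta)$, and a $\delta/(i(i+1))$ per-step bound versus the paper's $\exp(-\sqrt{i})/i^2$ and $1/i^2$ bounds), which do not change the substance of the argument.
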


\begin{proof}
We prove the lemma by showing that by the choice of the parameter $\kappa$, the quantity of the walk will never derive more than $O(\sqrt{i}\log(i))$ away from its expectation. Formally, we will show:
\begin{align*}
\Pr\paren{\exists i: \card{S_{i}-\expect{S_{i}}}\geq C\cdot\sqrt{i}\cdot\ln(\frac{i}{\delta})}\leq\delta.
\end{align*}
Notice that $S_{i} = \sum_{j=1}^{i}X_{j}$. Thus, by the linearity of expectation, we have:
\begin{align*}
{S_{i}-\expect{S_{i}}} & = \paren{\sum_{j=1}^{i}X_{j} - \expect{\sum_{j=1}^{i}X_{j}}}\\
&=\sum_{j=1}^{i}\paren{X_{j}-\expect{X_{j}}}.
\end{align*}
Denote $X'_{j}=\paren{X_{j}-\expect{X_{j}}}$, and apparently $X'_{j}$ will be zero-mean. Recall that $X_{j}$ are sub-exponential random variables (and so are $X'_j$'s); Thus, by Bernstein's inequality (\Cref{prop:bernstein}):
\begin{align*}
	\Pr\paren{\card{S_i - \expect{S_i}} \geq C \cdot \sqrt{i}\cdot\ln(\frac{i}{\delta})} &= \Pr\paren{\card{\sum_{j=1}^{i}X'_{j}}\geq C \cdot \sqrt{i}\cdot\ln(\frac{i}{\delta})}\\ 
	&\leq 2 \cdot \exp\paren{-c \cdot \min\paren{\frac{C^{2} \cdot\ln^{2}(i/\delta)}{\kappa^2}, \frac{C \cdot \sqrt{i}\cdot\ln(i/\delta)}{\kappa}}} \tag{$c>0$ is a constant} \\
% 	&\leq 2 \cdot \exp\paren{-c \cdot \frac{(C-1) \cdot i \cdot \ln{(1/\delta)}}{15}} \tag{by the value of $\kappa=\frac{15}{\ln{(1/\delta)}}$ in \Cref{clm:main-sub}} \\
% 	&\leq (\delta/2) \cdot \exp\paren{-i}. \tag{by picking $C$ to be a sufficiently large constant}
\end{align*}

The smaller one in the $\min(\cdot,\cdot)$ term will be dependent on $i$. Define $\hat{i}:= \paren{C\cdot\frac{\ln(\hat{i}/\delta)}{\kappa}}^2$. As such, the second term above will be the minimum whenever $i \leq \hat{i}$. 
Specifically, we have:

\begin{align*}
\Pr\paren{\card{S_i - \expect{S_i}} \geq C \cdot \sqrt{i}\cdot\ln(\frac{i}{\delta})\;\middle|\;i<\hat{i}} &\leq 2\cdot \exp\paren{-c\cdot\frac{C \cdot \sqrt{i}\cdot\ln(i/\delta)}{\kappa}} \tag{since $i$ is small}\\
&\leq \frac{\delta}{4}\cdot\frac{\exp(-\sqrt{i})}{i^{2}}. \tag{by the value of $\kappa$ and picking a sufficiently large $C$}
\end{align*}

Also, for the $i\geq \hat{i}$, we will have:
\begin{align*}
\Pr\paren{\card{S_i - \expect{S_i}} \geq C \cdot \sqrt{i}\cdot\ln(\frac{i}{\delta})\;\middle|\;i\geq\hat{i}} &\leq 2\cdot \exp\paren{-c\cdot\frac{C^{2} \cdot\ln^{2}(i/\delta)}{\kappa^2}} \tag{since $i$ is large}\\
&\leq \frac{\delta}{4}\cdot\frac{1}{i^{2}}. \tag{by the value of $\kappa$ and picking a sufficiently large $C$}
\end{align*}

By a union bound for all the choices of $i$:
\begin{align*}
	\Pr\paren{\exists i: \card{S_i - \expect{S_i}} \geq C \cdot \sqrt{i}\cdot\ln(\frac{i}{\delta})} &\leq (\delta/4) \cdot\paren{ \sum_{i=1}^{\hat{i}-1} \frac{\exp\paren{-\sqrt{i}}}{i^{2}} +\sum_{i=\hat{i}}^{n}\frac{1}{i^{2}}}\\
	&\leq (\delta/4) \cdot\sum_{i=1}^{n}\frac{1}{i^{2}} \tag{as $\exp(-\sqrt{i})\leq 1$}\\
	&< (\delta/2). \tag{as this series converges to $<2$}
\end{align*}
which proves the distance between $S_{i}$ and its expectation $\expect{S_{i}}$ for \emph{any} $i$ can only be at most $C\cdot\sqrt{i}\cdot\ln(\frac{i}{\delta})$ with $1-\delta$ probability. \par\noindent
Finally, since we have the expectation of each $X_{j}$ is at least $C\cdot\frac{\ln(j/\delta)}{\sqrt{j}}$, we should have 
\[
\expect{S_{i}}>C\cdot\sum_{j=1}^{i}\frac{\ln(j/\delta)}{\sqrt{j}}\geq C\cdot\sqrt{i} \cdot {\ln(i/\delta)},
\]
 since $\frac{\ln(j/\delta)}{\sqrt{j}}$ decreases monotonously for $j\geq 1$. Therefore, we have $\expect{S_{i}}>C\cdot\sqrt{i}\cdot\ln(\frac{i}{\delta})$ and the proof can be finalized.
\end{proof}

\begin{remark}
We remark that for the quantity of the walk never goes back to $0$, a weaker condition of $\sum_{j=1}^{i}\eta(j)>C\cdot\sqrt{i}\cdot\ln(\frac{i}{\delta})$ is sufficient. Also, for general $\eta$ without any restriction, we can show by parameter substitution (changing $\delta$ to the function of $\eta$) that the probability for the quantity of the walk decreasing to $0$ is at most $2\cdot\exp(-\frac{\eta}{C})$, which decreases exponentially as $\eta$ becomes larger.
\end{remark}

Based on \Cref{prop:flexible-walk}, we can re-formulate \Cref{lem:main-keep} as a special type of \emph{Flex-length Positive Random Walk} with stronger parameter conditions. This also gives a more systematic explanation on why $\mainalg$ holds.

\begin{proposition}[Reformulation of \Cref{lem:main-keep}]
\label{prop:coin-walk-vanilla}
The \emph{Challenge subroutine} in $\mainalg$ forms a \emph{Flex-length Positive Random Walk} with $\eta(j)>C\cdot\ln(\frac{1}{\delta})$ for all $j > 0$ and $\kappa=\frac{15}{\ln(1/\delta)}$. Thus, the quantity of the walk never decrease to $0$ with probability at least $1-\delta$.
\end{proposition}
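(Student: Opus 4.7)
The plan is to identify the appropriate random walk inside the Challenge subroutine of $\mainalg$, verify that it meets (in fact exceeds) the two hypotheses of \Cref{def:challenge-walk} with the claimed values of $\kappa$ and $\eta$, and then invoke \Cref{prop:flexible-walk} as a black box.

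First, I would isolate the walk. Fix the time step $T$ at which the most biased coin $\coinstar$ becomes $\king$. For each $j\geq 1$, let $W_j$ denote the net change in the king's budget when $\coin_{T+j}$ is processed, measured in the normalized units used in the proof of \Cref{lem:main-keep}: that is, set $W_j := C - X'_j$, where $C$ is the constant appearing in the definition of $b$ and $X'_j$ is the normalized challenge-cost random variable. Then $S_i := \sum_{j=1}^{i} W_j$ is, up to scaling, the cumulative surplus of budget after $i$ post-$T$ challenges, and the event $\{S_i > 0 \text{ for all } i\}$ is exactly the event that $\coinstar$ is never defeated as the $\king$; establishing the non-negativity of the walk therefore reproves \Cref{lem:main-keep}.

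Next I would check the two defining conditions of \Cref{def:challenge-walk}. The sub-exponentiality with parameter $\kappa = 15/\ln(1/\delta)$ is already proved in \Cref{clm:main-sub} for $X'_j - \expect{X'_j}$, and since $W_j - \expect{W_j} = -(X'_j - \expect{X'_j})$ the same parameter transfers unchanged to $W_j$. For the expectation lower bound, \Cref{clm:main-expect} gives $\expect{X'_j} \leq 1$ for every $j$, while the choice $b = \tfrac{4}{\Delta^2}\,C\ln(1/\delta) + s_1$ allocates $C$ normalized budget units per arriving coin, so that $\expect{W_j} \geq C - 1$ in the $X'$-units, which translates, after unwinding the factor of $\ln(1/\delta)$ absorbed into the normalization, to $\eta(j) > C\ln(1/\delta)$ in the units of \Cref{def:challenge-walk}. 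No new probabilistic argument is needed at this step: the content is already contained in Claims~\ref{clm:main-expect} and~\ref{clm:main-sub}.

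Finally, observe that $C\ln(1/\delta) \geq C\ln(j/\delta)/\sqrt{j}$ for every $j\geq 1$ (since $\ln(j)/\sqrt{j}$ is bounded), so the expectation lower bound is strictly stronger than that required by \Cref{def:challenge-walk}; together with the verified sub-exponentiality, this places the walk comfortably inside the hypothesis of \Cref{prop:flexible-walk}, whose conclusion $\Pr(\exists i : S_i \leq 0) \leq \delta$ then gives the claim. The only real obstacle is careful bookkeeping of units: the raw coin tosses, the reduced variables $X'_j$, and the budget $b$ live in different scales, and one has to pick the normalization in which the constants $15/\ln(1/\delta)$ and $C\ln(1/\delta)$ correspond exactly to $\kappa$ and $\eta$ in \Cref{def:challenge-walk}. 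Once this is fixed, the proof is essentially a repackaging of the argument already carried out in \Cref{lem:main-keep}.
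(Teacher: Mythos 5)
Your proposal is correct and follows essentially the same route as the paper: the paper likewise identifies the per-coin net budget change as the walk step, verifies the two conditions of \Cref{def:challenge-walk} by citing \Cref{clm:main-expect} (expected net gain at least $C\cdot\tfrac{4}{\Delta^2}\ln(1/\delta)-1 > C\ln(1/\delta)$, which dominates $C\cdot\ln(j/\delta)/\sqrt{j}$ up to the absorbed constant) and \Cref{clm:main-sub} (sub-exponentiality with $\kappa = 15/\ln(1/\delta)$), and then invokes \Cref{prop:flexible-walk} as a black box. The normalization bookkeeping you flag (the $X'_j$ scale versus raw coin tosses, which determines whether $\kappa$ or $\eta$ carries the $\ln(1/\delta)$ factor) is glossed over in the paper's own one-paragraph argument in exactly the same way, so your treatment is at the same level of rigor as the paper's.
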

\begin{proof}
Recall that at each step $i$, we will \emph{surely} accumulate $C\cdot\frac{4}{\Delta^{2}}\cdot\ln(\frac{1}{\delta}) + s_{1}$ budgets and use $s_{1}$ amount of them. Also recall that in \Cref{clm:main-expect}, we showed the expected number of coin tosses other than $s_{1}$ is less than $1$. Thus, the expectation of $X_{j}$ on any step $j$ should be more than $C\cdot\frac{4}{\Delta^{2}}\cdot\ln(\frac{1}{\delta})-1>C\cdot\ln(\frac{1}{\delta})$. Now observe the $\eta$ parameter is greater than $C\cdot \frac{\ln(i/\delta)}{\sqrt{i}}$ already, and the $\kappa$ parameter is also stronger than the requirement, so the quantity of the walk will never decrease to $0$ with probability at least $1-\delta$.
\end{proof}

We can actually draw a comparison between the `walk' in \Cref{lem:main-keep} and a classical random walk. The difference can be illustrated as figure \ref{fig:walkcomp}.
\begin{figure}[h!]
\centering
\includegraphics[width=1.0\textwidth]{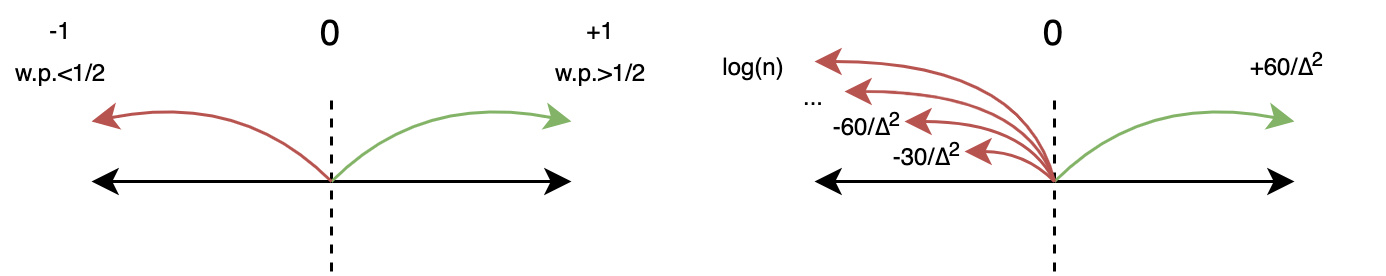}
\caption{\label{fig:walkcomp}Comparison between a classical random walk with $p>1/2$ (left) and a Coin-Game walk with the \emph{Challenge subroutine} as the challenging algorithm (right). }
\end{figure}

From the figure, it can be found that where are two major differences between a classical random walk and the walk in \Cref{lem:main-keep}. The first difference is that at each step, the challenge process will both increase and decrease the quantity of the walk deterministically; The second difference is that the step size of the backward walks in the challenge process is a function of the challenge rules and is randomized. A crucial observation to guarantee the correctness is the challenge subroutine in $\mainalg$ provides a sub-exponential distribution for the randomized backward step.

\end{document}